\date{}
\renewcommand{\uppercasenonmath}[1]{}
\newtheorem{Theorem}{Theorem}[section]
\newtheorem{Lemma}{Lemma}[section]
\newtheorem{Remark}{Remark}[section]
\newtheorem{Proposition}{Proposition}[section]
\theoremstyle{definition}
\newtheorem{Definition}{Definition}[section]
    \numberwithin{equation}{section}
\begin{document}

\begin{center}

{\large  \bf A general construction of regular complete permutation polynomials}

\vskip 0.8cm
{\small Wei Lu$^1$ $\cdot$ Xia Wu$^1$ \footnote{Supported by NSFC (Nos. 11971102, 11801070, 11771007), the Fundamental Research Funds for the Central Universities.

  MSC: 94B05, 94A62}} $\cdot$ Yufei Wang$^1$   $\cdot$ Xiwang Cao$^2$ \\

{\small $^1$School of Mathematics, Southeast University, Nanjing
210096, China}\\
{\small $^2$Department of Math, Nanjing University of Aeronautics and Astronautics, Nanjing 211100, China}\\
{\small E-mail:
 luwei1010@139.com, wuxiadd1980@163.com, 220211734@seu.edu.cn, xwcao@nuaa.edu.cn}\\
{\small $^*$Corresponding author. (Email: wuxiadd1980@163.com)}
\vskip 0.8cm
\end{center}

{\bf Abstract:} Let $r\geq 3$ be a positive integer and  $\mathbb{F}_q$ the finite field with $q$ elements. In this paper, we consider the $r$-regular  complete permutation property of maps with the form $f=\tau\circ\sigma_M\circ\tau^{-1}$ where $\tau$ is a PP over an extension field $\mathbb{F}_{q^d}$ and $\sigma_M$ is an invertible linear map  over  $\mathbb{F}_{q^d}$.
  We give a general construction of $r$-regular PPs for any positive integer $r$. When $\tau$ is  additive, we give a general construction of $r$-regular CPPs for any positive integer $r$. When $\tau$ is not additive, we give many examples of regular CPPs over the extension fields for $r=3,4,5,6,7$ and for arbitrary odd positive integer $r$.  These examples are the generalization of the first class of $r$-regular CPPs constructed by Xu,  Zeng and Zhang   (Des. Codes Cryptogr. 90, 545-575 (2022)).

{\bf Index Terms:} Cycle structure $\cdot$  Permutation polynomial  $\cdot$  Regular complete permutation polynomial   $\cdot$ Finite field $\cdot$ Linear map

\section{\bf Introduction}

 Let $p$ be a prime, $q$ be a power of $p$ and $\mathbb{F}_q$ the finite field with $q$ elements. A polynomial $f(x)\in{\mathbb{F}}_q[x]$ is called a \emph{permutation polynomial} (PP) over ${\mathbb{F}}_q$ if the associated polynomial function $f: c\mapsto f(c)$ from ${\mathbb{F}}_q$ into ${\mathbb{F}}_q$ is a permutation of ${\mathbb{F}}_q$. PPs play important roles in finite field theory and they have broad applications in coding theory, combinatorial designs, and cryptography   \cite{CCZ1998, DP2013, D2013, DQWYY2015, DY2006, D1999a, D1999b, H2015, LM1984, M1973}. The two most important concepts related to a permutation  is the existence of fixed points and the specification of its cycle structure.

  Let $f(x)$ be a PP over ${\mathbb{F}}_q$,  $e$  the identity map over ${\mathbb{F}}_q$ and   $n$  a positive integer. Let $f^{(n)}$ be   the $n$-th composite power of $f$, where the $n$-th composite power of $f$ is defined inductively by $f^{(n)}:=f\circ f^{(n-1)}=f^{(n-1)}\circ f$ and $f^{(1)}:=f,f^{(0)}:=e$, $f^{(-n)}:=(f^{-1})^{(n)}$.    The  polynomial $f(x)$ is called  an \emph{$n$-cycle permutation} if $f^{(n)}$ equals  the identity map $e$. On the one hand, PPs with long cycles (especially full cycles) can be used to generate key-stream sequences with large periods \cite{F1982, G1967, GG2005}. On the other hand,  PPs with short cycles (especially  involutions $f^{(2)}=e$) can be used   to construct Bent functions over finite fields \cite{CM2018,G1962,M2016},  to design codes \cite{G1962} and to  against  some cryptanalytic attacks \cite{CR2015}. In general, it is difficult to  determine the cycle structure of a PP. Very few known PPs whose explicit cycle structures have been obtained (see \cite{A1969, LM1991, RC2004, RMCC2008} for monomials and Dickson polynomials).

  A polynomial $f(x)\in{\mathbb{F}}_q[x]$ is called a \emph{complete permutation polynomial} (CPP) over ${\mathbb{F}}_q$ if  both $f(x)$ and $f(x)+x$ are permutations of  ${\mathbb{F}}_q$. These polynomials were introduced by Mann in the construction of orthogonal Latin squares \cite{M1942}. Niederreiter and Robinson later gave a detailed study of CPPs over finite fields \cite{NR1982}.  CPPs have widely applications in the design of nonlinear dynamic substitution device \cite{M1995,M1997},  the Lay-Massey scheme \cite{V1999}, the block cipher SMS4 \cite{DL2008}, the stream cipher Loiss \cite{FFZ2011}, the design of Hash functions \cite{SV1995, V1994}, quasigroups \cite{MM2009,MM2012a,MM2012b}, and  the constructions of some cryptographically strong functions \cite{MP2014,SG2012,ZHC2015}.

  A polynomial $f(x)\in{\mathbb{F}}_q[x]$ is called   a \emph{$r$-regular} PP over ${\mathbb{F}}_q$ if $f(x)$ is a PP over ${\mathbb{F}}_q$ with all the cycles of the same length $r$ (ignoring the fixed points). Regular PPs are very important in applications of turbo-like coding, low-density parity-check codes (LDPC) and block cipher designs \cite{B2003,RC2004,RMCC2008,SSP2012}. In \cite{MP2014}, a recursive construction of CPPs  over finite fields
via subfield functions was proposed to construct CPPs  with no fixed points  over   finite fields  with odd characteristic. The similar technique was used  to construct strong complete mappings  in \cite{M2021} and to construct some  $r$-regular CPPs over ${\mathbb{F}}_q$  with even characteristic for some small positive integers $r$ in \cite{XZZ2022}.

The main purpose of this paper is to generalize the technique used in \cite{MP2014,M2021,XZZ2022} and give a general construction of regular PPs and regular CPPs over extension fields. The maps considered in this paper are of the forms $f=\tau\circ\sigma_M\circ\tau^{-1}$ where $\tau$ is a PP over an extension field $\mathbb{F}_{q^d}$ and $\sigma_M$ is an invertible linear map  over  $\mathbb{F}_{q^d}$. By linear algebra, it is easy to determine the cycle structure of $f$ and then construct regular PPs. In order to get regular CPPs, the difficulty is to make sure that $f+e$ is also a PP over $\mathbb{F}_{q^d}$. When $\tau$ is additive, we have that
$$f+e=\tau\circ\sigma_M\circ\tau^{-1}+\tau\circ e\circ\tau^{-1}=\tau\circ(\sigma_M+e)\circ\tau^{-1}, $$
and then $f+e$ is a PP over $\mathbb{F}_{q^d}$ if and only if $\sigma_M+e$ is an invertible linear map over $\mathbb{F}_{q^d}$, which is very easy to construct. While when $\tau$ is not additive, in general,
$$f+e\neq \tau\circ(\sigma_M+e)\circ\tau^{-1}.$$
In section \ref{section 4}, based on the ideal used in the additive cases, we will give several examples of regular CPPs. For different example, we may need different method to prove $f+e$ is a PP.  We hope that we can find a general method to make sure $f+e$ is a PP over $\mathbb{F}_{q^d}$  in further research.

Comparing with the first class of $r$-regular CPPs constructed in \cite{XZZ2022}, our results have the following advantages. First, We rewrite the first class of $r$-regular CPPs construction in \cite{XZZ2022} in the new form  $f=\tau\circ\sigma_M\circ\tau^{-1}$, which make us can easily give a general construction of regular PPs and regular CPPs. Second, in our results $p$ can be an arbitrary prime while in \cite{XZZ2022} $p$ is equal to 2. Third, for a given prime number $r$, we can give many different constructions while in \cite{XZZ2022} there are only construction (except  when $r=7$, there are two constructions). Fourth, we construct $r$-regular CPPs for any positive  number $r$ while in \cite{XZZ2022}, only prime cases were considered.  Fifth, in most places, our proof is easier than theirs.

This paper is organized as follows. In section \ref{section Preliminaries}, we introduce some basic knowledge about cycle structure of permutation polynomials, cyclotomic polynomials over finite fields and some properties of linear maps in linear algebra. In section \ref{Section 3}, based on the properties of linear maps and cyclotomic polynomials, we give a construction method in three steps: the linear cases, the additive cases and the general cases. Every step is based on the previous step. In section \ref{section 4}, based on the construction method given in section \ref{Section 3}, we give many examples. Two of these examples include \cite[Theorem 1, 2]{XZZ2022} as special cases. In section \ref{section Concluding remarks}, we conclude this paper and give some problems for further research.

 \section{\bf Preliminaries}\label{section Preliminaries}

\subsection{Cycle structures of permutation polynomials}

 In this subsection, we prepare and discuss the  cycle structures of permutation polynomials.

   Let $q$ be a prime power and $\mathbb{F}_q$ the finite field with $q$ elements.


 \begin{Definition}\cite[Definition 4]{CMS2016}
 Let $f$ be a PP over ${\mathbb{F}}_q$ and $t$  a positive integer. A \emph{cycle} of $f$ is a subset ${x_1,\dots , x_t}$ of pairwise distinct elements of ${\mathbb{F}}_q$ such that $f(x_i)=x_{i+1}$ for $1\leq i\leq t-1$ and $f(x_t)=x_1$. The cardinality of a cycle is called its length.
 \end{Definition}

 The result below gives  some cycle structures of  $n$-cycle permutations.

 \begin{Proposition}\cite[Proposition 2.2]{CWZ2021}
Let $f$ be a PP over ${\mathbb{F}}_q$. Then f is an $n$-cycle permutation if and only if the length $l$ of each cycle of $f$ is no more than $n$ and $l\ |\ n$.
 \end{Proposition}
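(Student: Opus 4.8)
The plan is to reduce the statement $f^{(n)}=e$ to a cycle-by-cycle condition, exploiting that a bijection of a finite set decomposes its domain into disjoint cycles. The only thing requiring a genuine argument is that this cyclic decomposition is well defined; after that the equivalence is just the division algorithm.

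First I would record the orbit structure of $f$. Fix $x\in{\mathbb F}_q$ and look at the sequence $x,f(x),f^{(2)}(x),\dots$. Since ${\mathbb F}_q$ is finite there are $i<j$ with $f^{(i)}(x)=f^{(j)}(x)$, and because $f$ is injective this yields $f^{(j-i)}(x)=x$. Let $\ell=\ell(x)$ be the least positive integer with $f^{(\ell)}(x)=x$; then $C_x=\{x,f(x),\dots,f^{(\ell-1)}(x)\}$ is a cycle of $f$ of length $\ell$ containing $x$ (the entries are pairwise distinct by minimality of $\ell$), a fixed point being the case $\ell=1$. A short check shows two such sets $C_x,C_y$ are either equal or disjoint and that $\ell(y)=\ell(x)$ for every $y\in C_x$, so ${\mathbb F}_q$ is partitioned into the cycles of $f$.

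For the forward implication, assume $f^{(n)}=e$ and let $C=\{x_1,\dots,x_\ell\}$ be any cycle, with $f(x_i)=x_{i+1}$ read cyclically, so that $\ell$ is the least positive integer with $f^{(\ell)}(x_1)=x_1$. Writing $n=q\ell+s$ with $0\le s<\ell$ and using $f^{(n)}(x_1)=x_1$ together with $f^{(\ell)}(x_1)=x_1$, we get $x_1=f^{(n)}(x_1)=f^{(s)}(x_1)$; minimality of $\ell$ forces $s=0$, hence $\ell\mid n$, and in particular $\ell\le n$ (so the ``$\le n$'' clause is automatic).

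For the converse, suppose every cycle of $f$ has length dividing $n$. For arbitrary $x\in{\mathbb F}_q$, let $\ell=\ell(x)$ be the length of the cycle $C_x$ through $x$; then $f^{(\ell)}(x)=x$, and writing $n=m\ell$ gives $f^{(n)}(x)=(f^{(\ell)})^{(m)}(x)=x$. As $x$ was arbitrary, $f^{(n)}=e$, i.e.\ $f$ is an $n$-cycle permutation. The main (and only) point needing care is the bookkeeping in the second paragraph—that the cycles really do partition ${\mathbb F}_q$ and that $\ell(\cdot)$ is constant on each cycle; once that is in place, the rest is a one-line application of Euclidean division and there is no real obstacle.
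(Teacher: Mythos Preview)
Your proof is correct; it is the standard cycle-decomposition argument combined with the division algorithm, and the redundancy you note (that $l\mid n$ already forces $l\le n$) is indeed the right observation. The paper itself does not prove this proposition---it is quoted without proof from \cite{CWZ2021}---so there is no in-paper argument to compare against.
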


 Once we obtained an $n$-cycle permutation, it is natural to obtain more $n$-cycle permutations by composing itself in the following lemma.

 \begin{Lemma}\cite[Lemma 2.9]{CWZ2021}
 Assume that $f$ is an $n$-cycle permutation over ${\mathbb{F}}_q$. Then $f^{(k)}$ is also an $n$-cycle permutation, where $n \geq 3$, $1<k<n$.
 \end{Lemma}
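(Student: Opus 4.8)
The plan is to check the defining property directly: it suffices to show that $\left(f^{(k)}\right)^{(n)}=e$, for then $f^{(k)}$ is an $n$-cycle permutation by definition. First I would record that $f^{(k)}$ is indeed a PP over $\mathbb{F}_q$, being a composition of the permutation $f$ with itself, and compositions of permutations of $\mathbb{F}_q$ are again permutations. The key point is the exponent-additivity of composite powers of a single map, namely $\left(f^{(k)}\right)^{(n)}=f^{(kn)}=\left(f^{(n)}\right)^{(k)}$, which one proves by a routine induction from the inductive definition $f^{(m)}=f\circ f^{(m-1)}=f^{(m-1)}\circ f$ together with associativity of composition. Granting this, since $f$ is an $n$-cycle permutation we have $f^{(n)}=e$, hence $\left(f^{(n)}\right)^{(k)}=e^{(k)}=e$, and therefore $\left(f^{(k)}\right)^{(n)}=e$, as required.

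An alternative, more geometric route uses the Proposition stated above. Because $f$ is an $n$-cycle permutation, each cycle of $f$ has some length $\ell$ with $\ell\mid n$. Passing to the $k$-th composite power breaks a cycle of length $\ell$ into $\gcd(\ell,k)$ cycles of common length $\ell/\gcd(\ell,k)$, and $\ell/\gcd(\ell,k)$ divides $\ell$, hence divides $n$. Thus every cycle of $f^{(k)}$ has length dividing $n$, and by that Proposition $f^{(k)}$ is an $n$-cycle permutation. I would present the algebraic argument as the main proof since it needs nothing beyond the definitions, and mention the cycle-structure viewpoint only as a remark.

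I do not anticipate a real obstacle. The only care needed is the bookkeeping identity $\left(f^{(k)}\right)^{(n)}=f^{(kn)}$, which, if one wishes, can be isolated as a tiny auxiliary lemma on iterates of a single self-map (and which extends to all integer exponents under the convention $f^{(-n)}=(f^{-1})^{(n)}$ adopted in the Introduction). The hypotheses $n\geq 3$ and $1<k<n$ are not used in the argument; they only serve to make the set of relevant exponents non-empty and to single out the genuinely new permutations $f^{(k)}$, since $f^{(0)}=e$, $f^{(1)}=f$, and $f^{(k+n)}=f^{(k)}$.
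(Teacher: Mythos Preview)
Your argument is correct. The paper does not actually supply its own proof of this lemma: it is quoted verbatim from \cite[Lemma 2.9]{CWZ2021} as background material, with no argument given. Your direct verification that $\left(f^{(k)}\right)^{(n)}=f^{(kn)}=\left(f^{(n)}\right)^{(k)}=e^{(k)}=e$ is exactly the kind of one-line check one would expect, and your observation that the hypotheses $n\geq 3$, $1<k<n$ play no logical role is also accurate. The alternative cycle-decomposition argument you sketch is fine too, though unnecessary here.
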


 Furthermore,  we have the following result.

 \begin{Proposition}\cite[Proposition 2.11]{CWZ2021}\label{proposite 2.2}
Let $f$ and $g$ be  PPs over ${\mathbb{F}}_q$.  Furthermore, $f$ is an $n$-cycle permutation. Then $g\circ f\circ g^{-1}$ is also an $n$-cycle permutation. Moreover, $f$ and $g\circ f\circ g^{-1}$ have the same cycle structures.
 \end{Proposition}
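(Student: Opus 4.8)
The plan is to show two things: (1) that $g\circ f\circ g^{-1}$ is an $n$-cycle permutation, and (2) that it has the same cycle structure as $f$. Both will follow from the observation that conjugation by the bijection $g$ transports cycles of $f$ to cycles of $g\circ f\circ g^{-1}$ of the same length. First I would set $h:=g\circ f\circ g^{-1}$; since $f$ and $g$ are PPs, $g^{-1}$ exists and $h$ is a composition of bijections of $\mathbb{F}_q$, hence a PP. The key computation is that the $k$-th composite power of $h$ telescopes: $h^{(k)}=(g\circ f\circ g^{-1})^{(k)}=g\circ f^{(k)}\circ g^{-1}$, because the inner $g^{-1}\circ g$ factors cancel. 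This is proved by a one-line induction on $k$.

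Next I would use this identity to handle part (1). Since $f$ is an $n$-cycle permutation, $f^{(n)}=e$, and therefore $h^{(n)}=g\circ f^{(n)}\circ g^{-1}=g\circ e\circ g^{-1}=e$, so $h$ is an $n$-cycle permutation by definition.

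For part (2), I would show that $\{x_1,\dots,x_t\}$ is a cycle of $f$ if and only if $\{g(x_1),\dots,g(x_t)\}$ is a cycle of $h$ (of the same length $t$). Indeed, if $f(x_i)=x_{i+1}$ for $1\le i\le t-1$ and $f(x_t)=x_1$, then $h(g(x_i))=g(f(g^{-1}(g(x_i))))=g(f(x_i))=g(x_{i+1})$ for $1\le i\le t-1$, and similarly $h(g(x_t))=g(x_1)$; moreover the $g(x_i)$ are pairwise distinct since $g$ is injective and the $x_i$ are. The converse is obtained by running the same argument with $g^{-1}$ in place of $g$ and $h$ in place of $f$ (noting $f=g^{-1}\circ h\circ g$). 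Hence $x\mapsto g(x)$ induces a length-preserving bijection from the set of cycles of $f$ to the set of cycles of $h$, which is exactly the assertion that $f$ and $h$ have the same cycle structure. I do not anticipate a real obstacle here; the only point requiring a little care is stating precisely what "same cycle structure" means and checking that the map on cycles is well defined and length-preserving, but this is routine once the telescoping identity $h^{(k)}=g\circ f^{(k)}\circ g^{-1}$ is in hand.
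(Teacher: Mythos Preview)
Your proof is correct and entirely standard: the telescoping identity $h^{(k)}=g\circ f^{(k)}\circ g^{-1}$ immediately gives $h^{(n)}=e$, and the map $\{x_1,\dots,x_t\}\mapsto\{g(x_1),\dots,g(x_t)\}$ is the usual length-preserving bijection between cycles of $f$ and cycles of $h$. Note that the paper does not actually supply its own proof of this proposition; it is quoted as a preliminary result from \cite[Proposition 2.11]{CWZ2021}, so there is no in-paper argument to compare against, but your argument is exactly the expected one.
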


 Now we recall the definitions of  regular PPs and  regular CPPs over ${\mathbb{F}}_q$.

\begin{Definition}\cite[Definition 1]{XZZ2022}
A PP $f$ over ${\mathbb{F}}_q$ is called  \emph{$r$-regular} if   all the cycles of $f$ have the same length $r$ (ignoring the fixed points).
\end{Definition}

\begin{Remark}\label{Remark 2.1}
 It is easy to see that,  if  $r$ is a prime, then  any non-identity  $r$-cycle permutation is $r$-regular.
\end{Remark}

\begin{Definition}\cite[Definition 2]{XZZ2022}
A CPP $f$ over ${\mathbb{F}}_q$ is called \emph{$r$-regular}   if   all the cycles of $f$ have the same length $r$ (ignoring the fixed points).
\end{Definition}

\begin{Remark}
 It is easy to see that,  when the characteristic of  ${\mathbb{F}}_q$  is even,   any CPP $f$ has only one fixed point.
\end{Remark}

\subsection{Maps from $\mathbb{F}_q$ to $\mathbb{F}_q$}

Let $\mathcal{A}:=\{a:\mathbb{F}_q\rightarrow\mathbb{F}_q\}$ be the set of all maps from $\mathbb{F}_q$ to $\mathbb{F}_q$. In this subsection, we define two operations on $\mathcal{A}$ and get a non-commutative ring $\mathcal{B}$.

First, we define the addition on $\mathcal{A}$ to be the addition of maps: for any $a, b\in \mathcal{A}$, the sum of  $a, b$ is $a+b\in \mathcal{A}$ such that
\begin{center}
$(a+b)(x):=a(x)+b(x),$ for any $x\in \mathbb{F}_q.$
\end{center}
It  is trivially verified that with respect to addition, $\mathcal{A}$ is a commutative group.

Next, we define the multiplication on $\mathcal{A}$ to be composition of maps: for any $a, b\in \mathcal{A}$, the product  of  $a, b$ is $a\circ b\in \mathcal{A}$ such that
\begin{center}
$(a\circ b)(x):=a(b(x)),$ for any $x\in \mathbb{F}_q.$
\end{center}
It  is trivially verified that the multiplication is associative, and has a unit element. Its unit element is of course the identity map: $e(x)=x$ for any $x\in \mathbb{F}_q.$

For all $a,b,c\in \mathcal{A}$, we have the \emph{right distributivity}
 \begin{equation*}
 (a+b)\circ c=a\circ c+b\circ c.
 \end{equation*}
But in general,
 \begin{equation*}
 c\circ (a+b)\neq c\circ a+c\circ b.
 \end{equation*}
So $(\mathcal{A},+,\circ)$ is not a ring. For the definition of a ring, one can see \cite[p.83]{L2002}.

In order to get a ring, we need to chose a suitable subset of $\mathcal{A}.$
\begin{Definition}
A map $a\in \mathcal{A}$ is called \emph{additive} if
\begin{equation}\label{additive}
a(x+y)=a(x)+a(y), {\rm{\ for\ any\ }} x,y\in \mathbb{F}_q.
\end{equation}
\end{Definition}

If  $c\in \mathcal{A}$ is additive, then for any $a, b\in \mathcal{A}$ and any $x\in \mathbb{F}_q,$ we have
\begin{align*}
(c\circ (a+b))(x)&=c((a+b)(x))=c(a(x)+b(x))\\
                 &=c(a(x))+c(b(x))=(c\circ a)(x)+(c\circ b)(x)\\
                 &=(c\circ a+c\circ b)(x).
\end{align*}
So we have the \emph{left distributivity}

\begin{equation}\label{left distributivity}
 c\circ (a+b)= c\circ a+c\circ b.
\end{equation}

Now, the following lemma is  trivial.
\begin{Lemma}
Let $\mathcal{B}:=\{a\in \mathcal{A}\  |\  a$ is  additive\}. Then $(\mathcal{B},+, \circ)$ is a non-commutative ring.
\end{Lemma}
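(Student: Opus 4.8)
The plan is to verify the ring axioms for $(\mathcal{B},+,\circ)$ one at a time, leaning on the distributivity facts already recorded in this subsection. First I would check closure under both operations. For addition: if $a,b\in\mathcal{B}$, then for all $x,y\in\mathbb{F}_q$,
\[
(a+b)(x+y)=a(x+y)+b(x+y)=a(x)+a(y)+b(x)+b(y)=(a+b)(x)+(a+b)(y),
\]
so $a+b\in\mathcal{B}$. For composition: $(a\circ b)(x+y)=a\bigl(b(x)+b(y)\bigr)=a(b(x))+a(b(y))=(a\circ b)(x)+(a\circ b)(y)$, using additivity of $b$ and then of $a$; hence $a\circ b\in\mathcal{B}$.

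Next I would observe that $(\mathcal{B},+)$ is an abelian group: it is a subset of the commutative group $(\mathcal{A},+)$ that contains the zero map (which is additive) and is closed under negation (if $a$ is additive, so is $-a$), hence it is a subgroup. Associativity of $\circ$ is inherited from $\mathcal{A}$, where composition of maps is always associative, and the identity map $e(x)=x$ is additive, so $e\in\mathcal{B}$ and $e$ is a two-sided unit for $\circ$. It then remains to check the two distributive laws on $\mathcal{B}$: right distributivity $(a+b)\circ c=a\circ c+b\circ c$ holds for arbitrary maps in $\mathcal{A}$, hence a fortiori on $\mathcal{B}$; and left distributivity $c\circ(a+b)=c\circ a+c\circ b$ is exactly equation \eqref{left distributivity}, valid because every $c\in\mathcal{B}$ is additive. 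Together these establish that $(\mathcal{B},+,\circ)$ is a ring.

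Finally, to justify the word \emph{non-commutative} I would exhibit a non-commuting pair in $\mathcal{B}$ whenever $q=p^m$ with $m\geq 2$: take the Frobenius $\phi(x)=x^{p}$ and the scalar map $m_\alpha(x)=\alpha x$ with $\alpha\in\mathbb{F}_q\setminus\mathbb{F}_p$. Both are additive, and $(\phi\circ m_\alpha)(x)=\alpha^{p}x^{p}\neq \alpha x^{p}=(m_\alpha\circ\phi)(x)$ since $\alpha^{p}\neq\alpha$. (In the boundary case $q=p$ prime, $\mathcal{B}$ consists only of the scalar maps and is in fact commutative, so here \emph{non-commutative} is to be read as \emph{not necessarily commutative}, which is all that the later sections require.) I expect no genuine obstacle in this lemma — the entire statement is a routine check; the only point needing a moment's thought is the explicit choice of a non-commuting pair and the observation about the degenerate case $q=p$.
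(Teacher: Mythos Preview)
Your proposal is correct; the paper itself offers no proof beyond declaring the lemma ``trivial,'' so your explicit verification of closure, the abelian-group structure, associativity, identity, and both distributive laws is exactly the routine check the paper invites the reader to perform. Your Frobenius/scalar example for non-commutativity is standard and correct, and your remark that $\mathcal{B}$ is actually commutative when $q=p$ (so that ``non-commutative'' must be read as ``not necessarily commutative'') is a genuine subtlety the paper does not address.
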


\begin{Remark}
If $p$ is the characteristic of $\mathbb{F}_q$, it is  trivially verified that $a\in \mathcal{B}$ if and only if $a$ is  $\mathbb{F}_p$-linear.
\end{Remark}

The additive group  $(\mathbb{F}_q,+)$, together with the natural  operation of $\mathcal{B}$ on $\mathbb{F}_q$
\begin{center}
$ax:=a(x)$, for any $a\in \mathcal{B}$ and $x\in \mathbb{F}_q,$
\end{center}
is a left module over $\mathcal{B}.$

Let  $d$ be a positive integer and $V:=\mathbb{F}_q^d$   the column   vector space with dimension $d$ over $\mathbb{F}_q.$ Let  $M_{d\times d}(\mathcal{B})$ be the matrix ring over $\mathcal{B}$ and $M$  any $d\times d$ matrix in  $M_{d\times d}(\mathcal{B})$.  Then we can define  a \emph{linear map} $\sigma_M$ from $V$ to $V$ by the usual way
 \begin{equation}\label{linear map}
 \sigma_M(v)=Mv,{\ \rm{ for\  any \ }} v\in V.
 \end{equation}
Then the additive group  $(V,+)$ together with the above operation of $M_{d\times d}(\mathcal{B})$ on $V$ is a left module over $M_{d\times d}(\mathcal{B}).$

For any $y\in \mathbb{F}_q$, we can define an $a_y\in \mathcal{A}$ by $a_y(x):=yx$, for any $x\in \mathbb{F}_q.$ It is trivially verified that for any $y\in \mathbb{F}_q$, $a_y\in \mathcal{B}.$ So $\mathbb{F}_q$ can be consider as a commutative subring of $\mathcal{B}.$

\subsection{Cayley-Hamilton Theorem}
In this subsection, we consider  $V=\mathbb{F}_q^d$  as   a free module of dimension    $d$ over $\mathbb{F}_q.$



\begin{Definition}
Let $M$ be any $d\times d$ matrix over $\mathbb{F}_q$ and $\sigma_M$ the linear map associate with $M$. We define the \emph{characteristic polynomial} $P_{\sigma_M}(t):=P_M(t)$ to be the determinant
\begin{center}
$\mathrm{det}$$(tI_d-M)$
\end{center}
where $I_d$ is the unit $d\times d$ matrix. The characteristic polynomial is an element of $\mathbb{F}_q[t]$.

 \end{Definition}




The following result  is important in the proof of  our main theorem.

\begin{Theorem}\cite[p.561]{L2002}{(Cayley-Hamilton Theorem)}
Let $M$ be any $d\times d$ matrix over $\mathbb{F}_q.$ We have
\begin{center}
 $P_{\sigma_M}({\sigma_M})=0.$
\end{center}
\end{Theorem}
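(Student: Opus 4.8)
The plan is to give the classical adjugate-matrix proof, which is valid over an arbitrary commutative ring and in particular over $\mathbb{F}_q$, so that no separability or algebraic-closure hypotheses enter. Write $N(t):=tI_d-M$, viewed as a matrix with entries in the polynomial ring $\mathbb{F}_q[t]$, and let $B(t):=\mathrm{adj}(N(t))$ be its classical adjoint. Two facts are then immediate: first, cofactor expansion yields the identity
\begin{equation*}
B(t)\,N(t)=\det(N(t))\,I_d=P_M(t)\,I_d
\end{equation*}
in the matrix ring $M_{d\times d}(\mathbb{F}_q[t])$; second, every entry of $B(t)$ is a signed $(d-1)\times(d-1)$ minor of $N(t)$, hence a polynomial in $t$ of degree at most $d-1$. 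So we may expand $B(t)=\sum_{i=0}^{d-1}B_i t^i$ with each $B_i\in M_{d\times d}(\mathbb{F}_q)$, and write $P_M(t)=\sum_{i=0}^{d}c_i t^i$ with $c_d=1$.

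Next I would compare coefficients of equal powers of $t$ on the two sides of the displayed identity. Since $B(t)N(t)=\sum_i B_i t^{i+1}-\sum_i B_i M\,t^i$, matching with $\sum_i c_i I_d\,t^i$ gives $B_{d-1}=I_d$, the relations $B_{i-1}-B_i M=c_i I_d$ for $1\le i\le d-1$, and $-B_0 M=c_0 I_d$. Now multiply the $i$-th relation on the right by $M^i$ and sum over $i=0,1,\dots,d$: the left-hand side telescopes to $0$, because each term $B_j M^{j+1}$ occurs once with a plus sign and once with a minus sign, while the right-hand side collapses to $\sum_{i=0}^{d}c_i M^i=P_M(M)$. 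Hence $P_M(M)=0$, and since $\sigma_M$ has matrix $M$ in the standard basis, $P_{\sigma_M}(\sigma_M)=0$.

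The one point that needs care — and the only place the argument goes wrong if carried out naively — is that one must \emph{not} simply ``substitute $t=M$'' into $B(t)N(t)=P_M(t)I_d$: evaluation at $t=M$ is not a ring homomorphism on $M_{d\times d}(\mathbb{F}_q[t])$, because the matrix coefficients need not commute with $M$. The device of comparing coefficients first and only afterward right-multiplying by powers of $M$ inside the single ring $M_{d\times d}(\mathbb{F}_q)$ sidesteps this entirely. (As alternatives one could verify the statement directly for companion matrices and then invoke the rational canonical form, or pass to $\overline{\mathbb{F}_q}$ and use Zariski density of the diagonalizable matrices; but the adjugate proof is the shortest fully self-contained one.) Since this is a standard result, I expect no substantive obstacle beyond the bookkeeping of the telescoping sum.
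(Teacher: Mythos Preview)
Your proof is correct: the adjugate identity $B(t)(tI_d-M)=P_M(t)I_d$, expansion of $B(t)$ in powers of $t$, coefficient comparison, and the telescoping sum after right-multiplication by $M^i$ is the standard self-contained argument, and your caveat about why one cannot simply ``set $t=M$'' is exactly the right point to flag.

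As for comparison with the paper: there is nothing to compare. The paper does not prove this statement at all; it merely cites it from Lang's \emph{Algebra} \cite[p.~561]{L2002} as a known result and then uses it as a black box in the proof of Theorem~\ref{linear theorem}. (For what it is worth, the proof in Lang is essentially the same adjugate/determinant-trick argument you have written, phrased in slightly more module-theoretic language.) So your write-up goes strictly beyond what the paper itself provides.
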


\subsection{Minimal polynomial } In this subsection, we recall a few concepts and facts from linear algebra \cite[p.60, 525]{R2003}.

If $\sigma$ is a linear map on the vector space $V=\mathbb{F}_q^d$, then a polynomial $h(t)\in \mathbb{F}_q[t]$ is said to \emph{annihilate} $\sigma$ if $h(\sigma)=0$, where 0 is the zero map on $V$. The uniquely determined monic polynomial of least positive degree with this property is called the \emph{minimal polynomial} of $\sigma$. Minimal polynomial  divides any other polynomial in $\mathbb{F}_q[x]$ annihilating $\sigma$. In particular, the minimal polynomial of $\sigma$ divides the characteristic polynomial $P_\sigma$ by Cayley-Hamilton Theorem.

A vector $v\in V$ is called a \emph{cyclic vector} for $\sigma$ if the vectors $\sigma^kv$, $k=0,1,2,\ldots$ span $V$. The following is a standard result from linear algebra.

\begin{Lemma}\cite[Lemma 2.34]{R2003}
Let $\sigma$ be a linear map on $V=\mathbb{F}_q^d$. Then $\sigma$ has a cyclic vector if and only if the characteristic and minimal polynomials for $\sigma$ are identical.
\end{Lemma}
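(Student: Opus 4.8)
The statement to prove is the standard linear-algebra fact: a linear map $\sigma$ on $V = \mathbb{F}_q^d$ has a cyclic vector if and only if its characteristic polynomial $P_\sigma$ equals its minimal polynomial. I will prove both implications directly.

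For the "only if" direction, suppose $v$ is a cyclic vector, so that $v, \sigma v, \sigma^2 v, \dots$ span $V$. Let $m$ be the largest integer such that $v, \sigma v, \dots, \sigma^{m-1}v$ are linearly independent; then these vectors form a basis of $V$ (since any $\sigma^k v$ for $k \ge m$ lies in their span, by an easy induction pushing the relation forward under $\sigma$), and hence $m = d$. Writing $\sigma^d v$ in this basis gives a monic relation $g(\sigma)v = 0$ of degree $d$. I claim $g$ is the minimal polynomial: any polynomial $h$ with $h(\sigma) = 0$ certainly satisfies $h(\sigma)v = 0$, which forces $\deg h \ge d$ because $v, \dots, \sigma^{d-1}v$ are independent; conversely $g(\sigma)$ kills $v$ and therefore kills $\sigma^k v$ for all $k$ (as $\sigma^k$ commutes with $g(\sigma)$), hence kills a spanning set, hence is the zero map. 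So the minimal polynomial has degree $d$; since it always divides $P_\sigma$ (stated in the excerpt) and $P_\sigma$ is monic of degree $d$, the two polynomials are equal.

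For the "if" direction, assume the minimal polynomial $\mu$ has degree $d$. The key point is to produce a vector $v$ whose \emph{order polynomial} (the monic generator of the ideal $\{h \in \mathbb{F}_q[t] : h(\sigma)v = 0\}$) has degree $d$; such a $v$ is automatically cyclic, since $v, \sigma v, \dots, \sigma^{d-1}v$ are then independent and hence a basis. I would argue by contradiction: if every $v \in V$ had order polynomial of degree $< d$, I would take the least common multiple $\ell$ of all these order polynomials and observe that $\ell(\sigma)$ annihilates every vector, so $\ell(\sigma) = 0$, hence $\mu \mid \ell$; but each order polynomial divides $\mu$, so $\mathrm{lcm}$ over them also divides $\mu$ — and combining with a counting/degree argument on the exponents of irreducible factors I get $\deg \ell \le \deg \mu = d$, which is fine, so I need the sharper fact that for some single vector the order polynomial actually equals $\mu$. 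The clean way: factor $\mu = \prod p_i^{e_i}$ into prime powers; for each $i$ the map $\mu/p_i$ evaluated at $\sigma$ is nonzero (else $\mu$ wasn't minimal), so pick $w_i$ with $(\mu/p_i)(\sigma)w_i \ne 0$, let $v_i = (\mu/p_i^{e_i})(\sigma)w_i$, whose order polynomial is exactly $p_i^{e_i}$, and then set $v = \sum_i v_i$; a coprimality argument shows the order polynomial of $v$ is $\prod p_i^{e_i} = \mu$, of degree $d$. Hence $v$ is a cyclic vector.

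The main obstacle is the "if" direction: the naive lcm argument only shows $\deg\mu = d$ is consistent with a cyclic vector existing, not that one does, so the real content is the primary-decomposition construction of a single vector realizing the full minimal polynomial as its order. Everything else — the forward-pushing induction showing $v,\sigma v,\dots$ stabilize into a $\sigma$-invariant span, and the degree comparison using "minimal divides characteristic" from the excerpt — is routine. Since this is a cited standard result \cite[Lemma 2.34]{R2003}, in the paper one may simply invoke it; the sketch above records the argument for completeness.
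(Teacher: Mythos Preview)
The paper does not prove this lemma; it is stated with a citation to \cite[Lemma 2.34]{R2003} and invoked without argument. Your sketch is the standard textbook proof and is correct: the forward direction via the basis $v,\sigma v,\dots,\sigma^{d-1}v$ and the degree comparison is routine, and in the converse direction the primary-decomposition construction (choosing $v_i$ with order polynomial exactly $p_i^{e_i}$ and summing, then using coprimality to show the sum has order polynomial $\mu$) is precisely where the content lies, as you note. Since the paper only cites the result, there is nothing further to compare.
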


For a monic polynomial
$$h(t)=t^k+h_{k-1}t^{k-1}+\cdots+h_1t+h_0$$
over $\mathbb{F}_q$, its \emph{companion matrix} M(h(t)) is given by
\begin{equation}
M(h(t))=\left( {\begin{array}{*{20}{c}}
0&1&0& \ldots &0\\
0&0&1& \ldots &0\\
0&0&0& \ldots &0\\
 \vdots & \vdots & \vdots & \ddots & \vdots \\
-h_0&-h_1&-h_2&\ldots &-h_{k - 1}
\end{array}} \right).
\end{equation}
Then $h(t)$ is the characteristic polynomial and the minimal polynomial of $M(h(t))$.

\subsection{Cyclotomic polynomials}
In this subsection, we recall some results about cyclotomic polynomials \cite[p.64]{R2003}.

\begin{Definition}\cite[Definition 2.44]{R2003}
Let $n$ be a positive integer not divisible by $p$, and $\zeta$ a primitive $n$-th root of unity over $\mathbb{F}_q$. Then the polynomial
$$Q_n(x)=\prod\limits_{s=1\atop\gcd (s,n) = 1}^{n} {(x - {\zeta ^s})}$$
is called the \emph{n}-th cyclotomic polynomial over $\mathbb{F}_q$.
\end{Definition}
The following results are basic.

\begin{Proposition}\cite[Theorem 2.47, Lemma 2.50]{R2003}
Let $n$ be a positive integer not divisible by $p$. Then
\begin{enumerate}
  \item $x^n-1=\prod\limits_{d|n}{Q_d(x)}$;
  \item if $d$ is a divisor of $n$ with $1\leq d<n$, then $Q_n(x)$ divides $(x^n-1)/(x^d-1)$.
\end{enumerate}
\end{Proposition}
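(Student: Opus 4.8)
The plan is to prove both parts of this Proposition by passing to the splitting field of $x^n-1$ over $\mathbb{F}_q$ and exploiting the group structure of the $n$-th roots of unity; the hypothesis $p\nmid n$ is exactly what makes the root count exact.

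First I would note that since $p\nmid n$, the formal derivative of $x^n-1$ is $nx^{n-1}$, a nonzero scalar multiple of $x^{n-1}$, which has no common root with $x^n-1$; hence $x^n-1$ is separable and has exactly $n$ distinct roots in its splitting field $K$ over $\mathbb{F}_q$. Being a finite subgroup of $K^\times$, the set $U_n$ of these roots is cyclic of order $n$, so a primitive $n$-th root of unity $\zeta$ exists and $Q_n(x)$ is well defined. Every $\alpha\in U_n$ has a multiplicative order $d\mid n$, and $\alpha$ has order exactly $d$ if and only if $\alpha$ is a primitive $d$-th root of unity, i.e. a root of $Q_d(x)$; there are $\varphi(d)=\deg Q_d$ such elements. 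Moreover, since $\gcd(q,n)=1$, the Frobenius map $\alpha\mapsto\alpha^q$ is an automorphism of $U_n$ that preserves multiplicative orders, hence permutes the roots of each $Q_d(x)$; therefore each $Q_d(x)$ is Frobenius-invariant and lies in $\mathbb{F}_q[x]$.

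For part (1), I would group the linear factors of $x^n-1$ by the order of the root:
$$x^n-1=\prod_{\alpha\in U_n}(x-\alpha)=\prod_{d\mid n}\ \prod_{\operatorname{ord}(\alpha)=d}(x-\alpha)=\prod_{d\mid n}Q_d(x),$$
and since both sides are monic the identity holds in $\mathbb{F}_q[x]$ (as a sanity check this recovers $\sum_{d\mid n}\varphi(d)=n$). For part (2), let $d\mid n$ with $1\le d<n$; note $p\nmid d$. Applying part (1) to $n$ and to $d$, and using that every divisor of $d$ is a divisor of $n$, together with the fact that the $Q_e$ are pairwise coprime (distinct roots), I get
$$\frac{x^n-1}{x^d-1}=\frac{\prod_{e\mid n}Q_e(x)}{\prod_{e\mid d}Q_e(x)}=\prod_{\substack{e\mid n\\ e\nmid d}}Q_e(x)\in\mathbb{F}_q[x].$$
Since $n\mid n$ but $n\nmid d$ (because $0<d<n$), the index $e=n$ occurs in the last product, so $Q_n(x)$ is one of its factors and hence divides $(x^n-1)/(x^d-1)$.

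I do not anticipate a real obstacle here: the only delicate points are invoking $p\nmid n$ to secure separability (so the root-by-order decomposition is exact) and the Frobenius-invariance argument ensuring $Q_d(x)\in\mathbb{F}_q[x]$ rather than merely in $K[x]$; the rest is bookkeeping with divisors. The statement is the finite-field analogue of the classical cyclotomic factorization, and the argument above is its standard adaptation.
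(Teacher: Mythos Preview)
Your argument is correct. Note, however, that the paper does not supply its own proof of this proposition: it is stated as a citation of Theorem~2.47 and Lemma~2.50 in Lidl--Niederreiter, \emph{Finite Fields}, and used as background. Your proof is precisely the standard one found there: separability from $p\nmid n$, the cyclic structure of $U_n$, grouping roots by multiplicative order for part~(1), and then cancelling the factorizations of $x^n-1$ and $x^d-1$ for part~(2). The Frobenius-invariance step you include to land each $Q_d$ in $\mathbb{F}_q[x]$ is exactly the point Lidl--Niederreiter make in their Theorem~2.45, so nothing is missing.
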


\subsection{Univariate form and multivariable form} In this subsection, we discuss the univariate form and multivariable form of the same polynomial from the extension field  $\mathbb{F}_{q^d}$ to itself through the dual basis. For  other relations between univariate forms and multivariable forms, one can see \cite{MP2014,XZZ2022}.

It is well know that $\mathbb{F}_{q^d}$ is a vector space with dimension $d$ over $\mathbb{F}_q.$ The isomorphism between $\mathbb{F}_{q^d}$ and $\mathbb{F}_q^d$ through an fixed basis $\{\alpha_1,\dots,\alpha_d\}$ implies that each element $\mathbb{F}_{q^d}$ can be uniquely represented as
\begin{equation}\label{isomorphism}
x=\alpha_1x_1+\alpha_2x_2+\cdots+\alpha_dx_d,
\end{equation}
where $x_i\in \mathbb{F}_q.$
Let  $\{\beta_1,\dots,\beta_d\}$ be the dual basis of $\{\alpha_1,\dots,\alpha_d\}$, see \cite[p.58]{R2003}. Then  for  $1 \leq i, j \leq d,$  we have
\begin{equation*}
\operatorname{Tr}_{\mathbb{F}_{q^d} / \mathbb{F}_q}\left(\alpha_{i} \beta_{j}\right)=\left\{\begin{array}{ll}
0 & \text { for } i \neq j, \\
1 & \text { for } i=j.
\end{array}\right.
\end{equation*}
For $1 \leq j \leq d,$ we have
\begin{equation*}
  \operatorname{Tr}_{\mathbb{F}_{q^d} / \mathbb{F}_q}\left(x \beta_{j}\right)=\operatorname{Tr}_{\mathbb{F}_{q^d} / \mathbb{F}_q}\left(\alpha_{1} \beta_{j}\right)x_1+\cdots+\operatorname{Tr}_{\mathbb{F}_{q^d} / \mathbb{F}_q}\left(\alpha_{d} \beta_{j}\right)x_d=x_j.
\end{equation*}

On the one hand, if we have a univariate polynomial $f(x)\in \mathbb{F}_{q^d}[x]$, then we get $d$ multivariable polynomials $f_1(x_1,\dots,x_d),\dots, f_d(x_1,\dots,x_d)\in \mathbb{F}_q[x_1,\dots,x_d]$ such that
      \begin{center}
$\alpha_1f_1(x_1,\dots,x_d)+\cdots+\alpha_df_d(x_1,\dots,x_d)=f(x)$.
\end{center}

On the other hand, if we have $d$ multivariable polynomials $f_1(x_1,\dots,x_d),\dots, f_d(x_1,\dots,x_d)\in \mathbb{F}_q[x_1,\dots,x_d]$, then we can get  a univariate polynomial $f(x)\in \mathbb{F}_{q^d}[x]$ such that

\begin{center}
$f(x)=\alpha_1f_1(\operatorname{Tr}_{\mathbb{F}_{q^d} / \mathbb{F}_q}\left(x \beta_{1}\right),\dots,\operatorname{Tr}_{\mathbb{F}_{q^d} / \mathbb{F}_q}\left(x \beta_{d}\right))+ \cdots+\alpha_df_d(\operatorname{Tr}_{\mathbb{F}_{q^d} / \mathbb{F}_q}\left(x \beta_{1}\right),\dots,\operatorname{Tr}_{\mathbb{F}_{q^d} / \mathbb{F}_q}\left(x \beta_{d}\right))$.
\end{center}


\emph{In the rest of this paper, utilizing the isomorphism between $\mathbb{F}_{q^d}$ and $\mathbb{F}_q^d$ through an fixed basis $\{\alpha_1,\ldots,\alpha_d\},$ we will regard any element in $\mathbb{F}_{q^d}$ as a vector in $\mathbb{F}_q^d$ and any map $f$ from finite field $\mathbb{F}_{q^d}$ to $\mathbb{F}_{q^d}$ as a map from vector space $\mathbb{F}_q^d$ to $\mathbb{F}_q^d$. }

\section{\bf  constructions   of PPs and CPPs over the extension   fields}\label{Section 3}
In this section,   we studied the cycle structure and the regularity of PPs and CPPs over the extension   fields of $\mathbb{F}_q$ based on maps from $\mathbb{F}_q$ to $\mathbb{F}_q$. First, we studied the $\mathbb{F}_q$-linear cases in subsection \ref{linear subsection}. The main method is the Cayley-Hamilton Theorem. Next, based on the $\mathbb{F}_q$-linear cases and the composition of maps, we studied the additive cases in subsection \ref{additive subsection}.  Finally, based on the idea of method used in the additive cases, we studied the general cases in subsection \ref{general subsection}. The main difference between the  additive cases and the general cases is that additive maps have the left distributivity which do not have in general.

\subsection{\bf the $\mathbb{F}_q$-linear cases }\label{linear subsection}
In this subsection, we study the cycle structure and the regularity of linear maps. It is the foundation of next two subsections.  The following theorem   is based on linear algebra and the theory of characteristic polynomials.

\begin{Theorem}\label{linear theorem}
Assume that $h(t)\in \mathbb{F}_q[t]$ and $d=\mathrm{deg}(h(t))>0$. Let $M$ be any $d\times d$ matrix over $\mathbb{F}_q$  such that $P_M(t)=h(t)$. Let $\sigma_M$ be the linear map associate with $M$ from $\mathbb{F}_{q^d}$ to $\mathbb{F}_{q^d}$ and $e$ the identity map from $\mathbb{F}_{q^d}$ to $\mathbb{F}_{q^d}$. Then
\begin{enumerate}
  \item if $h(0)\neq 0$, then $\sigma_M$ is a PP over $\mathbb{F}_{q^d}$;
  \item if $h(t)\ |\  (t^n-1)$ for some positive integer $n$, then $\sigma_M$ is   an  $n$-cycle permutation over $\mathbb{F}_{q^d}$;
  \item if $h(-1)\neq 0$, then $\sigma_M+e$ is a PP over $\mathbb{F}_{q^d}$;
  \item if $h(t)\ |\  ((t+1)^m-1)$ for some positive integer $m$, then $\sigma_M+e$ is   an  $m$-cycle permutation over $\mathbb{F}_{q^d}$.
\end{enumerate}
\end{Theorem}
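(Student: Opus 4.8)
The plan is to reduce each of the four statements to a fact about the polynomial $h(t)$ evaluated at, or dividing, a certain auxiliary polynomial, and to extract that fact from the Cayley--Hamilton Theorem together with the basic properties of cyclotomic polynomials recalled in Section~\ref{section Preliminaries}. The unifying observation is that $\sigma_M$ is a linear map on the $\mathbb{F}_q$-vector space $\mathbb{F}_{q^d}\cong\mathbb{F}_q^d$, so every assertion about $\sigma_M$ (being a PP, being an $n$-cycle permutation) translates into a purely linear-algebraic statement: $\sigma_M$ is a PP over $\mathbb{F}_{q^d}$ precisely when $\sigma_M$ is invertible as a linear map, i.e.\ when $\det(M)\neq 0$; and $\sigma_M$ is an $n$-cycle permutation precisely when $\sigma_M^{(n)}=e$, i.e.\ when $M^n=I_d$.

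\textbf{Parts (1) and (3).} For (1), note $\det(M)=(-1)^d P_M(0)=(-1)^d h(0)$, so $h(0)\neq 0$ forces $M$ invertible, hence $\sigma_M$ a bijection on $\mathbb{F}_q^d$, hence a PP over $\mathbb{F}_{q^d}$. For (3), the linear map $\sigma_M+e$ is $\sigma_{M+I_d}$, and $\det(M+I_d)=(-1)^d P_M(-1)=(-1)^d h(-1)$; so $h(-1)\neq 0$ gives that $\sigma_M+e$ is invertible, hence a PP. (Strictly, to write $\det(M+I_d)$ in terms of $h$ one evaluates $P_M(t)=\det(tI_d-M)$ at $t=-1$, giving $\det(-I_d-M)=(-1)^d\det(I_d+M)$.)

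\textbf{Parts (2) and (4).} For (2), suppose $h(t)\mid (t^n-1)$, say $t^n-1=h(t)g(t)$. By Cayley--Hamilton, $h(\sigma_M)=h(M)=0$ as a linear map, so $\sigma_M^{(n)}-e = M^n-I_d = h(M)g(M)=0$; thus $\sigma_M^{(n)}=e$, and since $h(0)\neq0$ (as $h(0)$ divides $(t^n-1)|_{t=0}=-1$, so $h(0)=\pm1\neq0$) part (1) already makes $\sigma_M$ a PP, hence an $n$-cycle permutation by definition. Part (4) is the same argument applied to the linear map $\sigma_M+e=\sigma_{M+I_d}$: its characteristic polynomial is $P_{M+I_d}(t)=\det(tI_d-M-I_d)=P_M(t-1)=h(t-1)$, so Cayley--Hamilton gives $h(\sigma_M+e-e)=h(\sigma_M)=0$ — equivalently one checks directly that $h(t)\mid((t+1)^m-1)$ means $(t+1)^m-1 = h(t)g(t)$, whence $(\sigma_M+e)^{(m)}-e = (M+I_d)^m - I_d = h(M)g(M)=0$ using $h(M)=0$. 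Again $h(-1)\neq 0$ follows because $h(-1)$ divides $((t+1)^m-1)|_{t=-1}=-1$, so part (3) supplies that $\sigma_M+e$ is a PP, hence an $m$-cycle permutation.

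\textbf{Main obstacle.} There is no deep obstacle: the only points requiring care are the bookkeeping of signs in passing between $\det(tI_d-M)$ and $\det(M)$, $\det(M+I_d)$, and the observation that in parts (2) and (4) the divisibility hypothesis automatically supplies the non-vanishing hypothesis of (1) and (3) (so that $\sigma_M$, resp.\ $\sigma_M+e$, is genuinely a permutation and not merely satisfies $\sigma^{(n)}=e$ in some degenerate sense). I would present (1) and (2) first, then note that (3) and (4) follow verbatim by replacing $M$ with $M+I_d$ and $h(t)$ with $h(t-1)$, which reduces the characteristic polynomial of $\sigma_M+e$ to the hypothesis already analysed — this keeps the write-up short and makes the parallel structure transparent.
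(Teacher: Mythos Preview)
Your proposal is correct and follows essentially the same route as the paper: determinant computation via $P_M(0)$ and $P_M(-1)$ for (1) and (3), and Cayley--Hamilton plus the divisibility hypothesis for (2) and (4). Your additional observation---that the divisibility hypotheses in (2) and (4) automatically force $h(0)\neq 0$ and $h(-1)\neq 0$ (since $h(0)g(0)=-1$ and $h(-1)g(-1)=-1$)---is a nice touch the paper omits, but the core argument is the same.
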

\begin{proof}
\begin{enumerate}
  \item Note that $0\neq h(0)=P_M(0)=\mathrm{det}$$(0I_d-M)=\mathrm{det}$$(-M)=(-1)^d\mathrm{det}(M)$. Hence $M$ is an invertible matrix and $\sigma_M$ is an invertible linear map.
  \item By the Cayley-Hamilton Theorem, we have $h(\sigma_M)=0$. If $h(t)| (t^n-1)$, then $\sigma_M^n-e=0$ and $\sigma_M^n=e$. So $\sigma_M$ is an  $n$-cycle permutation over $\mathbb{F}_{q^d}$.
  \item Note that the matrix of $\sigma_M+e$ is $M+I_d$ and its characteristic polynomial is $P_{\sigma_M+e}(t)=\mathrm{det}$$(tI_d-(M+I_d))=\mathrm{det}$$((t-1)I_d-M)=P_M(t-1)=h(t-1)$. Hence $(-1)^d\mathrm{det}(M+I_d)=P_{\sigma_M+e}(0)=h(-1)\neq 0$, $M+I_d$ is an invertible matrix and $\sigma_M+e$ is an invertible linear map.
  \item By the Cayley-Hamilton Theorem, $h(\sigma_M)=h((\sigma_M+e)-e)=P_{\sigma_M+e}(\sigma_M+e)=0$. Since $h(t)\ |\  ((t+1)^m-1)$, we have $(\sigma_M+e)^m-e=0$ and $(\sigma_M+e)^m=e$. So $\sigma_M+e$ is an  $m$-cycle permutation over $\mathbb{F}_{q^d}$.
\end{enumerate}
\qed\end{proof}

\begin{Remark}
For a given $h(t)$, there are many matrices $M$ satisfy that $P_M(t)=h(t)$ in Theorem \ref{linear theorem}.  For example, $M=M(h(t))$, the companion matrix of $h(t)$.
\end{Remark}

Now we   consider the regularity of some  CPPs constructed in Theorem \ref{linear theorem}. First, we consider the odd prime cases.
\begin{Proposition}
Let $r$ be an odd prime which is relatively prime to $p$. Assume that $h(t)\in \mathbb{F}_q[t]$ satisfies $h(t)\ |\ (t^r-1)$, $h(t)\neq t-1$ and $d=\mathrm{deg}(h(t))>0$. Let $M$ be any $d\times d$ matrix over $\mathbb{F}_q$  such that $P_M(t)=h(t)$. Let $\sigma_M$ be the linear map associate with $M$ from $\mathbb{F}_{q^d}$ to $\mathbb{F}_{q^d}$ and $e$ the identity map from $\mathbb{F}_{q^d}$ to $\mathbb{F}_{q^d}$. Then $\sigma_M$ is an $r$-regular CPP over $\mathbb{F}_{q^d}$.
\end{Proposition}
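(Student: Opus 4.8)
The plan is to deduce everything from Theorem \ref{linear theorem} together with the cyclotomic-polynomial facts recalled in the Preliminaries. First I would check that $\sigma_M$ is a CPP. Since $h(t)\mid (t^r-1)$ and $r$ is coprime to $p$, the constant term $h(0)$ is nonzero (as $t^r-1$ has no root at $0$), so part (1) of Theorem \ref{linear theorem} gives that $\sigma_M$ is a PP over $\mathbb{F}_{q^d}$. For $\sigma_M+e$ being a PP I would invoke part (3): it suffices to show $h(-1)\neq 0$. Because $r$ is odd, $(-1)^r-1=-2\neq 0$ in characteristic $p$ when $p\neq 2$; when $p=2$ we have $-1=1$ and $t^r-1$ evaluated at $1$ is $0$, so a tiny extra argument is needed — but $h(t)\neq t-1$ forces $h(t)$ to divide $(t^r-1)/(t-1)$ (using that $r$ is prime, so the only monic divisors of $t^r-1$ are $1$, $t-1$, $(t^r-1)/(t-1)$ over the appropriate field, or more carefully that $(t-1)^2\nmid t^r-1$ since $r$ is coprime to $p$), and by Proposition on cyclotomic polynomials $(t^r-1)/(t-1)$ has nonzero value at any root of $t-1$; hence $h(-1)\neq 0$ regardless of $p$. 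So $\sigma_M$ and $\sigma_M+e$ are both PPs, i.e.\ $\sigma_M$ is a CPP.

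Next I would pin down the cycle structure. By part (2) of Theorem \ref{linear theorem}, since $h(t)\mid (t^r-1)$, the map $\sigma_M$ is an $r$-cycle permutation, so by Proposition 2.1 every cycle length divides $r$. As $r$ is prime, every cycle has length $1$ or $r$; thus $\sigma_M$ is $r$-regular provided it is not the identity map (ignoring fixed points is automatic once all non-fixed cycles have the prime length $r$; see Remark \ref{Remark 2.1}). So the remaining point is to rule out $\sigma_M=e$. If $\sigma_M=e$ then its minimal polynomial is $t-1$, hence $t-1\mid h(t)$ and in fact $h(t)$ is a power of $(t-1)$; but $h(t)\mid t^r-1$ and $\gcd(r,p)=1$ means $t^r-1$ is squarefree, so $h(t)=t-1$, contradicting the hypothesis $h(t)\neq t-1$. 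Therefore $\sigma_M\neq e$ and $\sigma_M$ is a genuine $r$-regular permutation.

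Combining the two paragraphs: $\sigma_M$ is a CPP over $\mathbb{F}_{q^d}$ all of whose cycles (ignoring fixed points) have the same length $r$, which is exactly the definition of an $r$-regular CPP. The only genuinely delicate step is handling the characteristic-$2$ case in the verification that $\sigma_M+e$ is a permutation, where the naive evaluation $h(-1)=h(1)$ could vanish; this is resolved by using the hypothesis $h(t)\neq t-1$ together with the squarefreeness of $t^r-1$ and the cyclotomic divisibility statement $Q_r(t)\mid (t^r-1)/(t-1)$, and I would also note that the same squarefreeness is what makes the exclusion of $\sigma_M=e$ clean. Everything else is a direct appeal to Theorem \ref{linear theorem} and Proposition 2.1, so the write-up should be short.
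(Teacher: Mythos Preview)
Your approach is the paper's: Theorem~\ref{linear theorem}(2) gives the $r$-cycle property, primality of $r$ together with $\sigma_M\neq e$ gives $r$-regularity via Remark~\ref{Remark 2.1}, and Theorem~\ref{linear theorem}(3) with $h(-1)\neq 0$ gives the CPP conclusion. The paper's proof simply asserts ``since $r$ is an odd prime and $h(t)\mid(t^r-1)$, we have $h(-1)\neq 0$,'' which is justified for odd $p$ because then $(-1)^r-1=-2\neq 0$.

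The gap is in your characteristic-$2$ patch. You claim that $h(t)\neq t-1$ forces $h(t)\mid(t^r-1)/(t-1)$, but nothing in the hypotheses rules out $(t-1)\mid h(t)$ with $\deg h>1$. Your parenthetical justifications do not help: over $\mathbb{F}_q$ the factor $Q_r(t)$ may split further, so $t^r-1$ has many more monic divisors than the ones you list (and $t^r-1$ itself is already one of them, contradicting your implication); squarefreeness of $t^r-1$ excludes $(t-1)^2\mid h(t)$ but not $(t-1)\mid h(t)$. Concretely, with $p=2$, $r=3$, $h(t)=t^3-1$, all hypotheses of the Proposition hold, yet $h(-1)=h(1)=0$, so $\sigma_M+e$ is singular and $\sigma_M$ is \emph{not} a CPP. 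You were right to flag $p=2$ as delicate---the paper's one-line justification glosses over exactly this---but your fix does not close the gap; the statement as written needs the additional hypothesis $(t-1)\nmid h(t)$ (equivalently $h(t)\mid Q_r(t)$) to be correct in characteristic~$2$.
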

\begin{proof}
Since $h(t)|(t^r-1)$, by Theorem \ref{linear theorem}(2), $\sigma_M$ is   an  $r$-cycle permutation over $\mathbb{F}_{q^d}$. Since $h(t)\neq t-1$, $\sigma_M\neq e$. By Remark \ref{Remark 2.1},  $\sigma_M$ is $r$-regular.
Since $r$ is an odd prime and $h(t)\ |\ (t^r-1)$, we have  $h(-1)\neq 0$. By Theorem \ref{linear theorem}(3), $\sigma_M+e$ is a PP over $\mathbb{F}_{q^d}$. So $\sigma_M$ is an \emph{r}-regular CPP over $\mathbb{F}_{q^d}$.

\qed\end{proof}

Next, we we consider the composite number cases. Let $r$ be a composite number. We give two propositions. The CPPs in the first proposition are \emph{r}-regular while the CPPs in the second proposition are \emph{r}-cycle but not \emph{r}-regular.

\begin{Proposition}\label{proposition 3.2}
Let $r$ be a composite number which is relatively prime to $p$ and $Q_r(t)$ the $r$-th cyclotomic polynomial over $\mathbb{F}_q$. Assume that $h(t)\in \mathbb{F}_q[t]$ satisfies that $h(t)\ |\ Q_r(t)$ and  $d=\mathrm{deg}(h(t))>0$. Let $M$ be any $d\times d$ matrix over $\mathbb{F}_q$ such that $P_M(t)=h(t)$. Let $\sigma_M$ be the linear map associated with $M$ from $\mathbb{F}_{q^d}$ to $\mathbb{F}_{q^d}$ and $e$ the identity map from $\mathbb{F}_{q^d}$ to $\mathbb{F}_{q^d}$. Then $\sigma_M$ is an $r$-regular CPP over $\mathbb{F}_{q^d}$.
\end{Proposition}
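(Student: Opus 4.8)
\emph{Proof proposal.} The plan is to verify in turn the three properties that together say $\sigma_M$ is an $r$-regular CPP: that $\sigma_M$ is a PP, that $\sigma_M+e$ is a PP (so that $\sigma_M$ is a CPP), and that every cycle of $\sigma_M$ apart from the fixed point has length exactly $r$. The first two properties drop out of Theorem \ref{linear theorem} once one checks that neither $0$ nor $-1$ is a root of $h(t)$; the substantive part is the third, because an $r$-cycle permutation with $r$ composite need not be $r$-regular --- Remark \ref{Remark 2.1} only handles the prime case --- so the cycle lengths that are proper divisors of $r$ have to be excluded by hand.

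For the permutation and complete permutation properties: from $h(t)\mid Q_r(t)$ and $Q_r(t)\mid t^r-1$ we get $h(t)\mid t^r-1$, so $h(0)\neq 0$ (since $0$ is not a root of $t^r-1$) and Theorem \ref{linear theorem}(1) yields that $\sigma_M$ is a PP, while Theorem \ref{linear theorem}(2) yields that $\sigma_M$ is an $r$-cycle permutation, i.e.\ $\sigma_M^{r}=e$. Since $\gcd(r,p)=1$, the roots of $Q_r(t)$ in $\overline{\mathbb{F}_q}$ are exactly the primitive $r$-th roots of unity; as $r$ is composite we have $r\geq 4$, whereas the multiplicative order of $-1$ is $1$ or $2$, so $-1$ is not a primitive $r$-th root of unity and therefore $h(-1)\neq 0$. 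Theorem \ref{linear theorem}(3) then shows that $\sigma_M+e$ is a PP, so $\sigma_M$ is a CPP.

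For the $r$-regularity: since $\sigma_M^{r}=e$, the length of each cycle of $\sigma_M$ divides $r$. Let $l$ be a proper divisor of $r$, so $1\leq l<r$. By the second basic property of cyclotomic polynomials recalled above, $Q_r(t)\mid (t^r-1)/(t^l-1)$, and since $t^r-1$ is squarefree (again because $\gcd(r,p)=1$), any common divisor of $Q_r(t)$ and $t^l-1$ has its square dividing $t^r-1$, hence is a constant; thus $\gcd(Q_r(t),t^l-1)=1$, and a fortiori $\gcd(h(t),t^l-1)=1$. Pick $a(t),b(t)\in\mathbb{F}_q[t]$ with $a(t)h(t)+b(t)(t^l-1)=1$. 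Evaluating at $\sigma_M$ and using $h(\sigma_M)=0$ (Cayley--Hamilton Theorem, since $P_{\sigma_M}=P_M=h$) gives $b(\sigma_M)\circ(\sigma_M^{l}-e)=e$, so $\sigma_M^{l}-e$ is an invertible linear map and $\ker(\sigma_M^{l}-e)=\{0\}$. Consequently any $x$ lying in a cycle of length $l<r$ satisfies $\sigma_M^{l}(x)=x$, hence $x=0$, which is the fixed point of $\sigma_M$. So the only element of $\mathbb{F}_{q^d}$ in a cycle of length less than $r$ is the fixed point $0$, and every other cycle has length exactly $r$: $\sigma_M$ is $r$-regular, and combined with the previous paragraph it is an $r$-regular CPP. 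The one delicate step is precisely this: converting $h(t)\mid Q_r(t)$ into invertibility of $\sigma_M^{l}-e$ for all proper divisors $l$ of $r$ via Cayley--Hamilton, which kills all short cycles.
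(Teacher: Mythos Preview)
Your proof is correct and follows essentially the same route as the paper: both invoke Theorem \ref{linear theorem} for the PP and CPP properties after checking $h(0)\neq 0$ and $h(-1)\neq 0$, and both reduce $r$-regularity to the coprimality $\gcd(Q_r(t),t^l-1)=1$ for proper divisors $l\mid r$ combined with Cayley--Hamilton. The only cosmetic difference is in the final step: the paper builds the cyclic invariant subspace $W=\mathbb{F}_q x+\cdots+\mathbb{F}_q\sigma_M^{l-1}x$ and argues that the minimal polynomial of $\sigma_M|_W$ divides $\gcd(h(t),t^l-1)=1$, whereas you use a B\'ezout identity $a(t)h(t)+b(t)(t^l-1)=1$ to exhibit an inverse for $\sigma_M^l-e$ directly; your version is slightly more streamlined since it avoids introducing $W$ and its minimal polynomial.
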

\begin{proof}
Since $r$ is a composite number with $(r,p)=1$ and $h(t)\ |\ Q_r(t)$, we have $h(-1)\neq 0$ and $h(t)\ |\ (t^r-1)$. By $h(-1)\neq 0$ and Theorem \ref{linear theorem}(3), we have $\sigma_M+e$ is a PP over $\mathbb{F}_{q^d}$. By $h(t)\ |\ (t^r-1)$ and Theorem \ref{linear theorem}(2), we have $\sigma_M$ is an \emph{r}-cycle permutation over $\mathbb{F}_{q^d}$.

Next, we prove that $\sigma_M$ is $r$-regular. For any $x\in \mathbb{F}_{q^d}^*$, let $l$ be its length in  $\sigma_M$. That is to say, $x, {\sigma_M}x,\ldots,\sigma_M^{l-1}x$ are pairwise distinct and ${\sigma_M^l}x=x$. Let
$$W:={\mathbb{F}_q}x+{\mathbb{F}_q}{\sigma_M}x+\cdots+{\mathbb{F}_q}{\sigma_M^{l-1}}x.$$
Then $W$ is an invariant subspace of $\sigma_M$. Moreover, $h(t)$ and $t^l-1$ are two annihilating polynomials of $\sigma_M|_W$.
Let $g(t)$ be the minimal polynomial of $\sigma_M|_W$. Then
$$g(t)\ |\ (h(t),t^l-1).$$
By $h(t)\ |\ Q_r(t)$, we have $g(t)\ |\ (Q_r(t),t^l-1)$. If $l<r$, then by \cite[Lemma 2.50, p.66] {R2003}, $(Q_r(t),t^l-1)=1$ and $g(t)=1$, that is a contradiction. So $l=r$ and $\sigma_M$ is \emph{r}-regular.
Combining all, we have that $\sigma_M$ is an \emph{r}-regular CPP over $\mathbb{F}_{q^d}$.

\qed\end{proof}
\begin{Proposition}\label{proposition 3.3}
Let $r$ be a composite number which is relatively prime to $p$. Assume that $h(t)\in \mathbb{F}_q[t]$ satisfies that $h(t)\ |\ (t^r-1)$, $h(t)$ is reducible, $(h(t),\frac{t^r-1}{Q_r(t)})\neq 1$, $h(-1)\neq 0$ and $d=\mathrm{deg}(h(t))>0$. Let $M=M(h(t))$ be the companion matrix of $h(t)$. Let $\sigma_M$ be the linear map associated with $M$ from $\mathbb{F}_{q^d}$ to $\mathbb{F}_{q^d}$ and $e$ the identity map from $\mathbb{F}_{q^d}$ to $\mathbb{F}_{q^d}$. Then $\sigma_M$ is an r-cycle CPP over $\mathbb{F}_{q^d}$ but $\sigma_M$ is not r-regular.
\end{Proposition}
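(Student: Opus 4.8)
The plan is to split the statement into its three parts — that $\sigma_M$ is a CPP, that $\sigma_M^{(r)}=e$, and that $\sigma_M$ is \emph{not} $r$-regular — dispatch the first two at once from Theorem \ref{linear theorem}, and spend the real effort on the non-regularity. The slogan is that this is the ``negative companion'' of Proposition \ref{proposition 3.2}: there $h$ was confined inside $Q_r(t)$, which forced $r$-regularity; here $h$ is allowed an irreducible factor lying outside $Q_r(t)$ (and outside $t-1$), and that factor is exactly what breaks regularity.

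For the easy part: writing $t^r-1=h(t)k(t)$ and setting $t=0$ gives $h(0)k(0)=-1$, so $h(0)\neq 0$, whence Theorem \ref{linear theorem}(1) makes $\sigma_M$ a PP over $\mathbb{F}_{q^d}$; since $h(-1)\neq 0$ by hypothesis, Theorem \ref{linear theorem}(3) makes $\sigma_M+e$ a PP, so $\sigma_M$ is a CPP; and since $h(t)\mid t^r-1$, Theorem \ref{linear theorem}(2) makes $\sigma_M$ an $r$-cycle permutation. It then remains to exhibit a cycle of $\sigma_M$ whose length $d_0$ satisfies $1<d_0<r$.

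Here I would use that $M=M(h(t))$ is a companion matrix, so the minimal polynomial of $\sigma_M$ is $h(t)$ itself (and $h(t)$ is squarefree, being a divisor of the separable polynomial $t^r-1$ since $\gcd(r,p)=1$). Using
\[
\frac{t^r-1}{Q_r(t)}=\prod_{d\mid r,\ d<r}Q_d(t),
\]
the hypothesis $\big(h(t),\tfrac{t^r-1}{Q_r(t)}\big)\neq 1$ yields a monic irreducible $g(t)\in\mathbb{F}_q[t]$ with $g(t)\mid h(t)$ and $g(t)\mid Q_{d_0}(t)$ for some divisor $d_0$ of $r$ with $d_0<r$; I would then argue (see below) that one may take $d_0\geq 2$. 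Put $W:=\ker g(\sigma_M)$. Since $g\mid h$ and $h$ is the minimal polynomial of $\sigma_M$, $W\neq\{0\}$; it is $\sigma_M$-invariant, and as $g$ is irreducible the minimal polynomial of $\sigma_M|_W$ equals $g(t)$. From $g(t)\mid Q_{d_0}(t)\mid t^{d_0}-1$ we get $(\sigma_M|_W)^{d_0}=e|_W$, so any nonzero $x\in W$ has $\sigma_M$-orbit of some length $\ell$ with $\ell\mid d_0$. If $\ell<d_0$, the $\sigma_M$-cyclic subspace generated by $x$ lies in $W$, is nonzero, and is annihilated by both $g(t)$ and $t^\ell-1$; its minimal polynomial is therefore the irreducible $g(t)$, which hence divides $t^\ell-1$, so $g(t)\mid\gcd(Q_{d_0}(t),t^\ell-1)=1$ by \cite[Lemma 2.50, p.66]{R2003} (applicable since $\ell\mid d_0$ and $\ell<d_0$), a contradiction. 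Hence $\ell=d_0$, and since $1<d_0<r$ this is a cycle of $\sigma_M$ that is neither a fixed point nor of length $r$, so $\sigma_M$ is not $r$-regular. This orbit-length computation is essentially the one already carried out in the proof of Proposition \ref{proposition 3.2}.

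The step I expect to be the main obstacle is guaranteeing $d_0\geq 2$: a shared irreducible factor $g(t)=t-1$ (the $d_0=1$ case) produces only a fixed point, which is ignored in the definition of $r$-regularity and so witnesses nothing. The real content is therefore to use the standing hypotheses — $h$ reducible, $h(-1)\neq 0$, and $\big(h,\tfrac{t^r-1}{Q_r}\big)\neq 1$ — to force $h$ to have an irreducible factor dividing $Q_d(t)$ for some $d$ with $2\leq d<r$; in effect one wants $\big(h(t),\tfrac{t^r-1}{(t-1)Q_r(t)}\big)\neq 1$, and I would verify that the stated conditions really amount to this, sharpening the gcd hypothesis to the displayed form if necessary. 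Once $d_0\geq 2$ is secured, everything else is a direct transcription of Theorem \ref{linear theorem} together with the cyclotomic-gcd argument.
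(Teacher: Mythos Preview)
Your approach is the paper's: it too dispatches the $r$-cycle CPP claim via Theorem~\ref{linear theorem}(1)(2)(3), then for non-regularity picks an irreducible $h_1\mid\gcd(h,Q_l)$ for some proper divisor $l$ of $r$, uses that $h$ is the minimal polynomial of its companion matrix to produce $0\neq x_1\in\ker h_1(\sigma_M)$, and concludes from $h_1\mid t^l-1$ that $\sigma_M^l(x_1)=x_1$ with $l<r$. The paper stops there and does not pin down the orbit length as you do.

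The obstacle you flag at $d_0=1$ is genuine, and the paper simply ignores it: its proof never rules out $l=1$, and with $l=1$ the vector $x_1$ is a fixed point, which says nothing about regularity. Worse, the stated hypotheses do \emph{not} exclude this case, so the proposition is false as written. Concretely, take $q=p=5$, $r=4$, $h(t)=(t-1)(t-2)$. Over $\mathbb{F}_5$ one has $Q_4(t)=(t-2)(t-3)$ and $\tfrac{t^4-1}{Q_4(t)}=(t-1)(t+1)$, so $\gcd\bigl(h,\tfrac{t^4-1}{Q_4}\bigr)=t-1\neq 1$; also $h$ is reducible and $h(-1)=(-2)(-3)=1\neq 0$ in $\mathbb{F}_5$. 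Every hypothesis holds, yet $\sigma_M$ has eigenvalues $1$ and $2$, and since $2$ has multiplicative order $4$ in $\mathbb{F}_5^{\times}$, every vector outside the $1$-eigenline has orbit length exactly $4$: $\sigma_M$ \emph{is} $4$-regular. So your instinct to sharpen the hypothesis to $\gcd\bigl(h(t),\tfrac{t^r-1}{(t-1)Q_r(t)}\bigr)\neq 1$ is not optional but necessary; with that in place your argument (and the paper's) goes through, and the reducibility assumption on $h$ becomes redundant for the non-regularity conclusion.
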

\begin{proof}
By $h(t)\ |\ (t^r-1)$, $h(-1)\neq 0$ and Theorem \ref{linear theorem}(1)(2)(3), we have $\sigma_M$ is an \emph{r}-cycle CPP over $\mathbb{F}_{q^d}$.

Next, we prove that $\sigma_M$ is not \emph{r}-regular. Since $h(t)$ is reducible and $(h(t),\frac{t^r-1}{Q_r(t)})\neq 1$, there exists a proper divisor of $r$, i.e. $l\ |\ r$ such that $(h(t),Q_l(t))\neq 1$. Let $h_1(t)$ be an irreducible factor of $(h(t),Q_l(t))$ over $\mathbb{F}_q$ and $h_2(t)=\frac{h(t)}{h_1(t)}$.

Since $M=M(h(t))$ is the companion matrix of $h(t)$, we have that $h(t)$ is the minimal polynomial of $M$. Since $h_2(t)$ is a proper factor of $h(t)$, we have $h_2(\sigma_M)\neq 0$.
Let $0\neq x_0\in \mathbb{F}_{q^d}$ such that $x_1=h_2(\sigma_M)(x_0)\neq 0$. Note that $h_1(\sigma_M)$ and $h_2(\sigma_M)$ are linear maps. So
 $$h_1(\sigma_M)(x_1)=h_1(\sigma_M)\circ h_2(\sigma_M)(x_0)=h_1(M)h_2(M)x_)=h(M)x_0=0.$$

Since $h_1(t)\ |\ Q_l(t)$, we have $h_1(t)\ |\ (t^l-1)$ and $(\sigma_M^l-e)(x_1)=0$. So $\sigma_M^l(x_1)=x_1$ and $\sigma_M$ is not \emph{r}-regular.

\qed\end{proof}

\begin{Remark}
For a given $h(t)$, there are many matrices $M$ satisfy that $P_M(t)=h(t)$ in Proposition \ref{proposition 3.2}. But in Proposition \ref{proposition 3.3}, in order to make sure that $h(t)$ is the minimal polynomial of $M$, we choose $M=M(h(t))$, the companion matrix of $h(t)$.
\end{Remark}

\subsection{\bf the additive cases }\label{additive subsection}
In this subsection, we study the cycle structure and the regularity of additive maps. It is based on the $\mathbb{F}_q$-linear cases and the composition of maps.  The main advantage of additive maps is that they have the left distributivity. The following theorem is the main result of this subsection.

\begin{Theorem}\label{Theorem 3.2}
Assume that $h(t)\in \mathbb{F}_q[t]$ and $d=\mathrm{deg}(h(t))>0$. Let $M$ be any $d\times d$ matrix over $\mathbb{F}_q$ such that $P_M(t)=h(t)$. Let $\tau_1,\tau_2$ be any additive PPs over $\mathbb{F}_{q^d}$. Assume that $\sigma=\tau_1\circ \sigma_M\circ \tau_2$. Then
\begin{enumerate}
  \item if $h(0)\neq 0$, then $\sigma$ is a PP over $\mathbb{F}_{q^d}$;
  \item if $\tau_1\circ\tau_2=e$ and $h(t)\ |\ (t^n-1)$ for some positive integer $n$, then $\sigma$ is   an  $n$-cycle permutation over $\mathbb{F}_{q^d}$;
  \item if $\tau_1\circ\tau_2=e$ and $h(-1)\neq 0$, then $\sigma+e$ is a PP over $\mathbb{F}_{q^d}$;
  \item if $\tau_1\circ\tau_2=e$ and $h(t)\ |\ ((t+1)^m-1)$ for some positive integer $m$, then $\sigma+e$ is   an  $m$-cycle permutation over $\mathbb{F}_{q^d}$.
\end{enumerate}
\end{Theorem}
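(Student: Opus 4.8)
The plan is to deduce each of the four statements from the corresponding part of Theorem \ref{linear theorem} applied to the linear map $\sigma_M$, transporting the conclusion across composition with the additive permutations $\tau_1$ and $\tau_2$. The only feature of additive maps that is actually used — and the reason this subsection is separated from the purely $\mathbb{F}_q$-linear one — is the left distributivity \eqref{left distributivity}, which lets one pull $\tau_1$ out of a sum of maps.

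For (1), if $h(0)\neq 0$ then $\sigma_M$ is a PP over $\mathbb{F}_{q^d}$ by Theorem \ref{linear theorem}(1); since $\sigma=\tau_1\circ\sigma_M\circ\tau_2$ is then a composition of three permutations, it is a PP. For (2), the condition $\tau_1\circ\tau_2=e$ says $\tau_2=\tau_1^{-1}$, so $\sigma=\tau_1\circ\sigma_M\circ\tau_1^{-1}$; by Theorem \ref{linear theorem}(2), $h(t)\mid(t^n-1)$ gives $\sigma_M^n=e$, whence $\sigma^{(n)}=\tau_1\circ\sigma_M^{\,n}\circ\tau_1^{-1}=e$ and $\sigma$ is an $n$-cycle permutation. (Proposition \ref{proposite 2.2} in fact yields the stronger fact that $\sigma$ and $\sigma_M$ have identical cycle structure.)

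The crux is (3). Using $\tau_1\circ\tau_2=e$, then the left distributivity \eqref{left distributivity} of the additive map $\tau_1$ (with $a=\sigma_M\circ\tau_2$ and $b=\tau_2$), and finally the always-valid right distributivity, one computes
$$\sigma+e=\tau_1\circ\sigma_M\circ\tau_2+\tau_1\circ\tau_2=\tau_1\circ(\sigma_M\circ\tau_2+\tau_2)=\tau_1\circ(\sigma_M+e)\circ\tau_2.$$
Since $h(-1)\neq0$, Theorem \ref{linear theorem}(3) gives that $\sigma_M+e$ is a PP over $\mathbb{F}_{q^d}$, so $\sigma+e$ is a composition of three permutations and hence a PP. For (4), the identity just obtained reads $\sigma+e=\tau_1\circ(\sigma_M+e)\circ\tau_1^{-1}$ when $\tau_1\circ\tau_2=e$, so
$$(\sigma+e)^{(m)}=\tau_1\circ(\sigma_M+e)^m\circ\tau_1^{-1},$$
which equals $e$ once Theorem \ref{linear theorem}(4) supplies $(\sigma_M+e)^m=e$ under $h(t)\mid((t+1)^m-1)$; hence $\sigma+e$ is an $m$-cycle permutation.

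The only step I expect to require genuine care is the identity $\sigma+e=\tau_1\circ(\sigma_M+e)\circ\tau_2$ in (3): it is exactly the place where additivity of $\tau_1$ is indispensable, and one must keep straight which factorization uses left distributivity (extracting $\tau_1$ from the sum) and which uses right distributivity (factoring $\tau_2$ out of $\sigma_M\circ\tau_2+\tau_2$). Once that identity is in hand, parts (1)--(4) are a routine transport of Theorem \ref{linear theorem} through composition by permutations, with (2) and (4) obtained as the $n$- and $m$-cycle analogues by conjugation.
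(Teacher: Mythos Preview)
Your proposal is correct and follows essentially the same approach as the paper: each item is reduced to the corresponding part of Theorem \ref{linear theorem}, with the key identity $\sigma+e=\tau_1\circ(\sigma_M+e)\circ\tau_1^{-1}$ derived from the additivity of $\tau_1$ via left distributivity together with right distributivity. The paper's computation runs in the reverse direction (starting from $\tau_1\circ(\sigma_M+e)\circ\tau_1^{-1}$ and simplifying to $\sigma+e$) and cites Proposition \ref{proposite 2.2} rather than writing out $\sigma^{(n)}=\tau_1\circ\sigma_M^{\,n}\circ\tau_1^{-1}$ explicitly, but these are cosmetic differences only.
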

\begin{proof}
\begin{enumerate}
  \item By Theorem \ref{linear theorem}(1), $\sigma_M$ is a PP over $\mathbb{F}_{q^d}$. Combining that $\tau_1$ and $\tau_2$ are PPs over $\mathbb{F}_{q^d}$, we have $\sigma=\tau_1\circ\sigma_M\circ\tau_2$ is a PP over $\mathbb{F}_{q^d}$.
  \item By  Theorem \ref{linear theorem}(2), $\sigma_M$ is an  $n$-cycle permutation over $\mathbb{F}_{q^d}$. By Proposition \ref{proposite 2.2}, we have $\sigma=\tau_1\circ\sigma_M\circ\tau_1^{-1}$ is an $n$-cycle permutation over $\mathbb{F}_{q^d}$.
  \item By Theorem \ref{linear theorem}(3), $\sigma_M+e$ is a PP over $\mathbb{F}_{q^d}$. Since $\tau_1$ is additive, we have
      $$\tau_1\circ(\sigma_M+e)\circ\tau_1^{-1}=(\tau_1\circ\sigma_M+\tau_1)\circ\tau_1^{-1}
      =\tau_1\circ\sigma_M\circ\tau_1^{-1}+\tau_1\circ\tau_1^{-1}
      =\sigma+e$$
      So $\sigma+e=\tau_1\circ(\sigma_M+e)\circ\tau_1^{-1}$ is a PP over $\mathbb{F}_{q^d}$.
  \item By Theorem \ref{linear theorem}(4) and Proposition \ref{proposite 2.2}, $\sigma+e=\tau_1\circ(\sigma_M+e)\circ\tau_1^{-1}$ is an  $m$-cycle permutation over $\mathbb{F}_{q^d}$.
\end{enumerate}
\qed\end{proof}

When $\tau_1$ is additive, we always have
$$\sigma+e=\tau_1\circ(\sigma_M+e)\circ\tau_1^{-1}.$$

Then we have
\begin{enumerate}
  \item $\sigma+e$ is a PP over $\mathbb{F}_{q^d}$ if and only if $\sigma_M+e$ is an invertible linear map over $\mathbb{F}_{q^d}$;
  \item if $\sigma+e$ is a PP over $\mathbb{F}_{q^d}$, then $\sigma+e$ has the same cycle structure with $\sigma_M+e$.
\end{enumerate}

Now the following propositions about the regularity of some CPPs is easy to get by the propositions in subsection \ref{linear subsection}.

\begin{Proposition}\label{proposition 3.4}
Let $r$ be an odd prime which is relatively prime to $p$. Assume that $h(t)\in \mathbb{F}_q[t]$ satisfies that $h(t)\ |\ (t^r-1)$, $h(t)\neq t-1$ and $d=\mathrm{deg}(h(t))>0$. Let $M$ be any $d\times d$ matrix over $\mathbb{F}_q$ such that $P_M(t)=h(t)$. Let $\tau_1$ be any additive PP over $\mathbb{F}_{q^d}$ and $\sigma=\tau_1\circ\sigma_M\circ\tau_1^{-1}$. Then $\sigma$ is an r-regular CPP over $\mathbb{F}_{q^d}$.
\end{Proposition}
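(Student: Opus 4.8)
The plan is to reduce the statement to the $\mathbb{F}_q$-linear case, which is already essentially done, and then transport the conclusion along the conjugation by $\tau_1$. For the linear map $\sigma_M$ everything needed is contained in Theorem \ref{linear theorem} together with Remark \ref{Remark 2.1}: from $h(t)\mid (t^r-1)$ we get $h(0)\ne 0$ (since $0^r-1=-1\ne 0$) and, $r$ being an odd prime, $h(-1)\ne 0$ (the roots of $h$ lie among the $r$-th roots of unity, and $-1$ is not one of them); also $h(t)\mid (t^r-1)$ makes $\sigma_M$ an $r$-cycle permutation by Theorem \ref{linear theorem}(2). Moreover $h(t)\ne t-1$ forces $\sigma_M\ne e$: if $\sigma_M=e$ then $M=I_d$, so $h(t)=(t-1)^d$, which is incompatible with $h(t)\mid (t^r-1)$ (the latter being squarefree, as $\gcd(r,p)=1$) unless $d=1$, i.e. $h(t)=t-1$. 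Hence, by Remark \ref{Remark 2.1} and Theorem \ref{linear theorem}(1)(2)(3), $\sigma_M$ is itself an $r$-regular CPP over $\mathbb{F}_{q^d}$.

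Next I would pass from $\sigma_M$ to $\sigma=\tau_1\circ\sigma_M\circ\tau_1^{-1}$. The key observations are that $\tau_1^{-1}$ is again an additive PP, so $\sigma$ has exactly the shape treated in Theorem \ref{Theorem 3.2} with $\tau_2=\tau_1^{-1}$ and $\tau_1\circ\tau_2=e$. Then: (i) $\sigma$ is a PP, being a composite of the PPs $\tau_1,\sigma_M,\tau_1^{-1}$ (equivalently, Theorem \ref{Theorem 3.2}(1)); (ii) by Theorem \ref{Theorem 3.2}(2) $\sigma$ is an $r$-cycle permutation, and by Proposition \ref{proposite 2.2} it has the same cycle structure as $\sigma_M$, so $\sigma\ne e$, and since $r$ is prime, $\sigma$ is $r$-regular by Remark \ref{Remark 2.1}; (iii) by Theorem \ref{Theorem 3.2}(3), using $h(-1)\ne 0$, $\sigma+e$ is a PP, the point being that additivity of $\tau_1$ gives the left-distributivity identity $\sigma+e=\tau_1\circ(\sigma_M+e)\circ\tau_1^{-1}$, a composite of PPs. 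Combining (i) and (iii), $\sigma$ is a CPP, and by (ii) it is $r$-regular, which is the claim.

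I do not expect a serious obstacle: once Theorem \ref{linear theorem} and Theorem \ref{Theorem 3.2} are available, the argument is an assembly. The one step that is conceptually essential rather than bookkeeping is (iii), namely that $\sigma+e$ is still a PP; this is exactly where the additivity hypothesis on $\tau_1$ is used, through $\sigma+e=\tau_1\circ(\sigma_M+e)\circ\tau_1^{-1}$, and dropping it would throw us into the general situation of Section \ref{section 4} where case-by-case methods are needed. The only minor point not to overlook is the verification that $\sigma\ne e$, which is what allows the prime-$r$ regularity criterion (Remark \ref{Remark 2.1}) to be invoked.
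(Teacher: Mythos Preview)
Your proposal is correct and follows exactly the route the paper indicates: the paper does not give a separate proof of Proposition \ref{proposition 3.4} but simply states that it ``is easy to get by the propositions in subsection \ref{linear subsection}'', i.e.\ reduce to the linear case (Proposition 3.1/Theorem \ref{linear theorem}) and transport along the additive conjugation via Theorem \ref{Theorem 3.2} and Proposition \ref{proposite 2.2}. Your write-up is in fact more careful than the paper's terse proof of the linear case, since you justify why $h(t)\ne t-1$ forces $\sigma_M\ne e$ via squarefreeness of $t^r-1$, a step the paper glosses over.
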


\begin{Proposition}\label{proposition 3.5}
Let $r$ be a composite number which is relatively prime to $p$ and $Q_r(t)$ the $r$-th cyclotomic polynomial over $\mathbb{F}_q$. Assume that $h(t)\in \mathbb{F}_q[t]$ satisfies that $h(t)\ |\ Q_r(t)$ and $d=\mathrm{deg}(h(t))>0$. Let $M$ be any $d\times d$ matrix over $\mathbb{F}_q$ such that $P_M(t)=h(t)$. Let $\tau_1$ be any additive PP over $\mathbb{F}_{q^d}$ and $\sigma=\tau_1\circ\sigma_M\circ\tau_1^{-1}$. Then $\sigma$ is an r-regular CPP over $\mathbb{F}_{q^d}$.
\end{Proposition}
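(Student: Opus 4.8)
The plan is to reduce everything to the two building blocks already available: Theorem~\ref{Theorem 3.2} for the permutation/cycle statements about $\sigma=\tau_1\circ\sigma_M\circ\tau_1^{-1}$, and Proposition~\ref{proposition 3.2} for the $r$-regularity of the linear map $\sigma_M$ itself. Since $\tau_1$ is an additive PP, Proposition~\ref{proposite 2.2} tells us $\sigma$ and $\sigma_M$ have exactly the same cycle structure, so once we know $\sigma_M$ is $r$-regular we immediately get that $\sigma$ is $r$-regular; the only genuinely new point is that $\sigma$ remains a \emph{complete} permutation, i.e. that $\sigma+e$ is a PP.

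First I would record the arithmetic facts about $h(t)$: since $r$ is composite and $(r,p)=1$, the polynomial $Q_r(t)$ has no root at $t=-1$ (a standard fact about cyclotomic polynomials, as used in Proposition~\ref{proposition 3.2}), hence $h(-1)\neq 0$; and $Q_r(t)\mid (t^r-1)$ forces $h(t)\mid (t^r-1)$, and in particular $h(0)\neq 0$ because $Q_r(0)\neq 0$ for $r>1$. Next, apply Theorem~\ref{Theorem 3.2}: part~(1) with $h(0)\neq 0$ gives that $\sigma$ is a PP; part~(2) with $\tau_2=\tau_1^{-1}$ (so $\tau_1\circ\tau_2=e$) and $h(t)\mid (t^r-1)$ gives that $\sigma$ is an $r$-cycle permutation; part~(3) with $h(-1)\neq 0$ gives that $\sigma+e$ is a PP. Hence $\sigma$ is a CPP over $\mathbb{F}_{q^d}$ whose cycle lengths all divide $r$.

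It remains to upgrade ``$r$-cycle'' to ``$r$-regular'', i.e. to exclude cycle lengths that are proper divisors of $r$. Here I would invoke Proposition~\ref{proposition 3.2} directly: its hypotheses ($h(t)\mid Q_r(t)$, $\deg h>0$, $r$ composite with $(r,p)=1$) are exactly ours, and its conclusion is that $\sigma_M$ is $r$-regular. Then Proposition~\ref{proposite 2.2} says $\sigma=\tau_1\circ\sigma_M\circ\tau_1^{-1}$ has the same cycle structure as $\sigma_M$, so $\sigma$ is $r$-regular as well. Combining this with the CPP statement from the previous paragraph finishes the proof.

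The main obstacle is essentially bookkeeping rather than mathematics: one must be careful that the $r$-regularity of $\sigma_M$ is genuinely a statement about cycle structure (so that conjugation by the additive PP $\tau_1$ transports it), and that the completeness of $\sigma$ uses $\tau_1$ additive in an essential way through the identity $\sigma+e=\tau_1\circ(\sigma_M+e)\circ\tau_1^{-1}$ from Theorem~\ref{Theorem 3.2}(3). Alternatively, if one prefers a self-contained argument not citing Proposition~\ref{proposition 3.2}, one can reprove the regularity: for $x\in\mathbb{F}_{q^d}^*$ of cycle length $l$ under $\sigma$, the corresponding element $y=\tau_1^{-1}(x)$ has cycle length $l$ under $\sigma_M$, the $\sigma_M$-invariant subspace $W=\sum_{i\ge 0}\mathbb{F}_q\,\sigma_M^i y$ is annihilated by both $h(t)$ and $t^l-1$, so the minimal polynomial of $\sigma_M|_W$ divides $\gcd(Q_r(t),t^l-1)$, which is $1$ when $l<r$ by \cite[Lemma~2.50]{R2003}, a contradiction since $W\neq 0$; hence $l=r$.
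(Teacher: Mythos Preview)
Your proposal is correct and is essentially the argument the paper intends: the paper does not spell out a proof of Proposition~\ref{proposition 3.5} but simply remarks that it follows from the propositions in the linear subsection, and your write-up makes this explicit by combining Proposition~\ref{proposition 3.2} (for the $r$-regularity of $\sigma_M$), Proposition~\ref{proposite 2.2} (to transport the cycle structure to $\sigma=\tau_1\circ\sigma_M\circ\tau_1^{-1}$), and Theorem~\ref{Theorem 3.2}(3) (for the completeness via additivity of $\tau_1$). The optional self-contained regularity argument you sketch at the end is also just the proof of Proposition~\ref{proposition 3.2} transcribed through the conjugation, so there is no substantive divergence from the paper.
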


\begin{Proposition}\label{proposition 3.6}
Let $r$ be a composite number which is relatively prime to $p$. Assume that $h(t)\in \mathbb{F}_q[t]$ satisfies that $h(t)\ |\ t^r-1$, $h(t)$ is reducible, $(h(t),\frac{t^r-1}{Q_r(t)})\neq 1$, $h(-1)\neq 0$ and $d=\mathrm{deg}(h(t))>0$. Let $M=M(h(t))$ be the companion matrix of $h(t)$. Let $\tau_1$ be any additive PP over $\mathbb{F}_{q^d}$ and $\sigma=\tau_1\circ\sigma_M\circ\tau_1^{-1}$. Then $\sigma$ is an r-cycle CPP over $\mathbb{F}_{q^d}$ but $\sigma$ is not r-regular.
\end{Proposition}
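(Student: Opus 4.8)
The plan is to reduce everything to the corresponding statement for linear maps, Proposition~\ref{proposition 3.3}, via the conjugation identities already established in Theorem~\ref{Theorem 3.2}. First I would invoke Theorem~\ref{Theorem 3.2}(1)(2) together with Theorem~\ref{linear theorem}(3) (applied as in Theorem~\ref{Theorem 3.2}(3)) to conclude that $\sigma=\tau_1\circ\sigma_M\circ\tau_1^{-1}$ is a PP, that it is an $r$-cycle permutation (since $h(t)\mid(t^r-1)$), and that $\sigma+e=\tau_1\circ(\sigma_M+e)\circ\tau_1^{-1}$ is a PP (since $h(-1)\neq0$ and $\tau_1$ is additive). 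So $\sigma$ is an $r$-cycle CPP. The only remaining point is that $\sigma$ is \emph{not} $r$-regular.

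For the non-regularity, the key observation is that conjugation by a permutation preserves cycle structure: by Proposition~\ref{proposite 2.2}, $\sigma=\tau_1\circ\sigma_M\circ\tau_1^{-1}$ and $\sigma_M$ have exactly the same multiset of cycle lengths. Hence it suffices to show $\sigma_M$ is not $r$-regular, which is precisely the content of Proposition~\ref{proposition 3.3} (note the hypotheses here are identical, and the matrix is chosen to be the companion matrix $M(h(t))$ so that $h(t)$ is genuinely the minimal polynomial of $M$). Thus I would simply quote Proposition~\ref{proposition 3.3} to produce a nonzero $x_1\in\mathbb{F}_{q^d}$ with $\sigma_M^l(x_1)=x_1$ for a proper divisor $l\mid r$, $l<r$, so $\sigma_M$ has a cycle of length dividing $l$, strictly shorter than $r$; therefore $\sigma_M$, and hence $\sigma$, is not $r$-regular.

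Putting the two halves together gives the claim: $\sigma$ is an $r$-cycle CPP that is not $r$-regular. I do not expect any real obstacle here — the proposition is essentially a transport-of-structure corollary of Proposition~\ref{proposition 3.3} using the additivity of $\tau_1$ (needed only to handle $\sigma+e$) and the cycle-structure invariance under conjugation. The one thing to state carefully is why $\tau_1$ additive is used at all: it is \emph{not} needed for the cycle structure of $\sigma$ itself (that is pure conjugation, Proposition~\ref{proposite 2.2}), but it \emph{is} needed to rewrite $\sigma+e=\tau_1\circ(\sigma_M+e)\circ\tau_1^{-1}$ so that Theorem~\ref{linear theorem}(3) applies and $\sigma+e$ comes out a permutation. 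So the structure of the write-up is: (i) CPP part via Theorem~\ref{Theorem 3.2} and the rewriting of $\sigma+e$; (ii) $r$-cycle part via $h(t)\mid(t^r-1)$; (iii) failure of $r$-regularity by transporting the counterexample of Proposition~\ref{proposition 3.3} through $\tau_1$.
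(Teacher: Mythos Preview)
Your proposal is correct and is exactly the approach the paper intends: the paper states Proposition~\ref{proposition 3.6} without a separate proof, noting that it follows directly from the linear case (Proposition~\ref{proposition 3.3}) via the conjugation identities of Theorem~\ref{Theorem 3.2} and the cycle-structure invariance of Proposition~\ref{proposite 2.2}. Your write-up in fact supplies more detail than the paper does, and the points you flag (additivity of $\tau_1$ needed only for $\sigma+e$, companion matrix needed so $h(t)$ is the minimal polynomial) are precisely the right ones.
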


\subsection{\bf the general cases }\label{general  subsection}
Let $\tau_1$ be any PP over $\mathbb{F}_{q^d}$ and $\sigma=\tau_1\circ\sigma_M\circ\tau_1^{-1}$. Then
\begin{enumerate}
  \item $\sigma$ is a PP over $\mathbb{F}_{q^d}$ if and only if $\sigma_M$ is  an invertible linear map over $\mathbb{F}_{q^d}$;
  \item if $\sigma$ is a PP over $\mathbb{F}_{q^d}$, then $\sigma$ has the same cycle structure with $\sigma_M$.
\end{enumerate}
So it is easy to construct regular PPs over $\mathbb{F}_{q^d}$ by subsection \ref{additive subsection}.

When we want to construct regular CPPs over $\mathbb{F}_{q^d}$, the situation is complicated.
When $\tau_1$ is additive,  $\sigma+e=\tau_1\circ(\sigma_M+e)\circ\tau_1^{-1}$, and it is easy   by subsection \ref{additive subsection}.
But when $\tau_1$ is not additive, $\sigma+e\neq\tau_1\circ(\sigma_M+e)\circ\tau_1^{-1}$ in general, and it is difficult to know when $\sigma+e$ is a PP over $\mathbb{F}_{q^d}$. We only have the following results.

\begin{Theorem}\label{Theorem 3.3}
Assume that $h(t)\in \mathbb{F}_q[t]$ and $d=\mathrm{deg}(h(t))>0$. Let $M$ be any $d\times d$ matrix over $\mathbb{F}_q$ such that $P_M(t)=h(t)$. Let $\tau_1,\tau_2$ be any PPs over $\mathbb{F}_{q^d}$. Assume that $\sigma=\tau_1\circ\sigma_M\circ\tau_2$. Then
\begin{enumerate}
  \item if $h(0)\neq 0$, then $\sigma$ is a PP over $\mathbb{F}_{q^d}$;
  \item if $\tau_1\circ\tau_2=e$ and $h(t)\ |\ (t^n-1)$ for some positive integer $n$, then $\sigma$ is an n-cycle permutation over $\mathbb{F}_{q^d}$.
\end{enumerate}
\end{Theorem}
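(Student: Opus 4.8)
The plan is to reduce both assertions to the already-settled $\mathbb{F}_q$-linear case, Theorem \ref{linear theorem}, together with the conjugation-invariance of cycle structure, Proposition \ref{proposite 2.2}. The key observation is that neither statement mentions $\sigma+e$, so the left distributivity (equivalently, the additivity of $\tau_1$) that was essential in Theorem \ref{Theorem 3.2}(3)(4) plays no role here; parts (1) and (2) are the verbatim analogues of Theorem \ref{Theorem 3.2}(1)(2), and their proofs never used additivity, only that $\tau_1,\tau_2$ are bijections of $\mathbb{F}_{q^d}$.

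For part (1), first I would invoke Theorem \ref{linear theorem}(1): since $h(0)\neq 0$, the linear map $\sigma_M$ is a PP over $\mathbb{F}_{q^d}$. Then $\sigma=\tau_1\circ\sigma_M\circ\tau_2$ is a composition of three bijections of the finite set $\mathbb{F}_{q^d}$, hence a bijection, i.e. a PP over $\mathbb{F}_{q^d}$. For part (2), the hypothesis $\tau_1\circ\tau_2=e$ means $\tau_2=\tau_1^{-1}$, so $\sigma=\tau_1\circ\sigma_M\circ\tau_1^{-1}$ is a conjugate of $\sigma_M$. By Theorem \ref{linear theorem}(2), the hypothesis $h(t)\mid(t^n-1)$ gives $\sigma_M^{n}=e$, so $\sigma_M$ is an $n$-cycle permutation; applying Proposition \ref{proposite 2.2} with $f=\sigma_M$ and $g=\tau_1$ yields that $\sigma=g\circ f\circ g^{-1}$ is again an $n$-cycle permutation, in fact with the same cycle structure as $\sigma_M$. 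One may also see $\sigma^{n}=e$ directly from $\sigma^{n}=\tau_1\circ\sigma_M^{n}\circ\tau_1^{-1}=\tau_1\circ e\circ\tau_1^{-1}=e$.

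I do not expect any real obstacle for the two claims as stated — they follow immediately from the linear case by composing with permutations and conjugating. The genuine difficulty, deliberately left out of this theorem, is to control $\sigma+e$: when $\tau_1$ is not additive one has $\sigma+e\neq\tau_1\circ(\sigma_M+e)\circ\tau_1^{-1}$ in general, so the clean reduction available in the additive case collapses, and deciding when $\sigma+e$ is a PP over $\mathbb{F}_{q^d}$ must be handled example by example, which is precisely the content of Section \ref{section 4}.
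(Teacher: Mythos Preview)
Your proposal is correct and matches the paper's approach exactly: the paper does not even supply a separate proof for Theorem~\ref{Theorem 3.3}, precisely because the arguments for parts (1) and (2) of Theorem~\ref{Theorem 3.2} never used additivity and carry over verbatim, invoking Theorem~\ref{linear theorem}(1)(2) and Proposition~\ref{proposite 2.2} just as you do. Your added remark that the obstruction lies entirely in controlling $\sigma+e$ is also exactly the point the paper makes in the surrounding discussion and in the Remark following the theorem.
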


\begin{Remark}
When $\tau_1$ is not additive, it is difficult to know when $\sigma+e$ is a PP over $\mathbb{F}_{q^d}$. In the next section, we will give many CPPs over $\mathbb{F}_{q^d}$. In each example, we must prove that $\sigma+e$ is a PP over $\mathbb{F}_{q^d}$ carefully.
\end{Remark}

In general, regular CPPs are not easy to construct. Bur regular PPs are easily constructed from Theorem \ref{Theorem 3.3}.

\begin{Proposition}\label{proposition 3.7}
Let $r$ be an odd prime which is relatively prime to $p$. Assume that $h(t)\in \mathbb{F}_q[t]$ satisfies that $h(t)\ |\ (t^r-1)$, $h(t)\neq t-1$ and $d=\mathrm{deg}(h(t))>0$. Let $M$ be any $d\times d$ matrix over $\mathbb{F}_q$ such that $P_M(t)=h(t)$. Let $\tau_1$ be any PP over $\mathbb{F}_{q^d}$ and $\sigma=\tau_1\circ\sigma_M\circ\tau_1^{-1}$. Then $\sigma$ is an r-regular PP over $\mathbb{F}_{q^d}$.
\end{Proposition}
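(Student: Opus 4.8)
The plan is to reduce Proposition \ref{proposition 3.7} to the linear case treated in Theorem \ref{linear theorem} together with the conjugation-invariance of cycle structure in Proposition \ref{proposite 2.2}. First I would observe that, since $h(0)\neq 0$ (this follows from $h(t)\mid(t^r-1)$, because $t^r-1$ has nonzero constant term as $(r,p)=1$), Theorem \ref{linear theorem}(1) gives that $\sigma_M$ is a PP over $\mathbb{F}_{q^d}$. As $\tau_1$ is a PP, the composition $\sigma=\tau_1\circ\sigma_M\circ\tau_1^{-1}$ is then a PP over $\mathbb{F}_{q^d}$ (this is exactly Theorem \ref{Theorem 3.3}(1), applied with $\tau_2=\tau_1^{-1}$, so that $\tau_1\circ\tau_2=e$).

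Next I would pin down the cycle structure of $\sigma_M$ itself. Since $h(t)\mid(t^r-1)$, Theorem \ref{linear theorem}(2) shows $\sigma_M$ is an $r$-cycle permutation, so by Proposition \ref{proposite 2.2} the conjugate $\sigma=\tau_1\circ\sigma_M\circ\tau_1^{-1}$ is also an $r$-cycle permutation with the \emph{same} cycle structure as $\sigma_M$. The condition $h(t)\neq t-1$ guarantees $\sigma_M\neq e$: indeed if $\sigma_M=e$ then its characteristic polynomial would be $(t-1)^d$, forcing $h(t)=(t-1)^d$, and since $h(t)\mid(t^r-1)$ which is squarefree (again using $(r,p)=1$) we would get $d=1$ and $h(t)=t-1$, a contradiction. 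Hence $\sigma_M\neq e$, so $\sigma\neq e$ as well.

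Finally, since $r$ is prime, Remark \ref{Remark 2.1} applies: a non-identity $r$-cycle permutation is automatically $r$-regular. Therefore $\sigma_M$ is $r$-regular, and since $\sigma$ has the same cycle structure, $\sigma$ is $r$-regular too. Combining this with the fact that $\sigma$ is a PP completes the proof.

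There is no serious obstacle here — the statement concerns only the regular PP property, not the CPP property, so one never has to analyze $\sigma+e$, which is the genuinely hard issue flagged in the remark preceding the proposition. The only points requiring a word of care are the two squarefreeness/constant-term remarks about $t^r-1$ (both immediate from $\gcd(r,p)=1$) and the clean identification of the cycle structure of $\sigma$ with that of $\sigma_M$ via Proposition \ref{proposite 2.2}; once those are in place the argument is a direct chain of already-established results.
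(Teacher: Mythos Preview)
Your proposal is correct and matches the paper's implicit argument: the paper states Proposition \ref{proposition 3.7} without a separate proof, indicating it follows directly from Theorem \ref{Theorem 3.3} together with the linear case, and your reduction via Theorem \ref{linear theorem}(1)(2), Proposition \ref{proposite 2.2}, and Remark \ref{Remark 2.1} is exactly that route. The extra care you take in verifying $h(0)\neq 0$ and in ruling out $\sigma_M=e$ via the squarefreeness of $t^r-1$ simply makes explicit what the paper leaves tacit in its proof of the parallel Proposition 3.1.
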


\begin{Proposition}\label{proposition 3.8}
Let $r$ be a composite number which is relatively prime to $p$ and $Q_r(t)$ the r-th cyclotomic polynomial over $\mathbb{F}_q$. Assume that $h(t)\in \mathbb{F}_q[t]$ satisfies that $h(t)\ |\ Q_r(t)$ and $d=\mathrm{deg}(h(t))>0$. Let $M$ be any $d\times d$ matrix over $\mathbb{F}_q$ such that $P_M(t)=h(t)$. Let $\tau_1$ be any PP over $\mathbb{F}_{q^d}$ and $\sigma=\tau_1\circ\sigma_M\circ\tau_1^{-1}$. Then $\sigma$ is an r-regular PP over $\mathbb{F}_{q^d}$.
\end{Proposition}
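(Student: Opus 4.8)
The plan is to reduce everything to the already-established facts about linear maps and about conjugation. First I would observe that the hypotheses on $h(t)$ match exactly those of Proposition \ref{proposition 3.2}: namely $h(t)\mid Q_r(t)$ with $r$ a composite number coprime to $p$, and $d=\deg(h(t))>0$. Proposition \ref{proposition 3.2} then tells us that $\sigma_M$ itself is an $r$-regular CPP over $\mathbb{F}_{q^d}$; in particular $\sigma_M$ is a PP all of whose nontrivial cycles have length exactly $r$. (We do not need the "complete" part here, only the regularity of $\sigma_M$ as a PP.)

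Next I would invoke the conjugation principle. Since $\tau_1$ is a PP over $\mathbb{F}_{q^d}$, its inverse $\tau_1^{-1}$ is a PP as well, and $\sigma=\tau_1\circ\sigma_M\circ\tau_1^{-1}$ is a composition of PPs, hence a PP. By Proposition \ref{proposite 2.2}, conjugation by a permutation preserves the cycle structure: $\sigma$ and $\sigma_M$ have the same cycle structure. Since $\sigma_M$ has all its nontrivial cycles of length $r$, so does $\sigma$. Therefore $\sigma$ is an $r$-regular PP over $\mathbb{F}_{q^d}$, which is exactly the claim. Alternatively, one can invoke Theorem \ref{Theorem 3.3}(1)--(2) with $\tau_2=\tau_1^{-1}$: since $h(0)\neq 0$ (because $h(t)\mid Q_r(t)$ and $Q_r(0)\neq 0$ as $r$ is composite, so $Q_r(t)\neq t$) we get that $\sigma$ is a PP, and since $h(t)\mid Q_r(t)\mid (t^r-1)$ we get that $\sigma$ is an $r$-cycle permutation; combining this with the cycle-structure preservation under conjugation upgrades "$r$-cycle" to "$r$-regular."

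There is essentially no obstacle here: the work was already done in Proposition \ref{proposition 3.2} (to see that $\sigma_M$ is $r$-regular, which rested on the minimal-polynomial/invariant-subspace argument together with $\gcd(Q_r(t),t^l-1)=1$ for proper divisors $l\mid r$) and in Proposition \ref{proposite 2.2} (invariance of cycle structure under conjugation). The only point to be careful about is to note that we may freely drop the CPP requirement and conclude just $r$-regularity of a PP, which is why no condition like $h(-1)\neq 0$ is needed in the statement. I would keep the proof to two or three sentences, citing Proposition \ref{proposition 3.2} and Proposition \ref{proposite 2.2}.
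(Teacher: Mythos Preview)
Your proposal is correct and matches the paper's intended (though unwritten) argument: the paper states Proposition~\ref{proposition 3.8} without proof, implicitly relying on Proposition~\ref{proposition 3.2} for the $r$-regularity of $\sigma_M$ and on Proposition~\ref{proposite 2.2} for invariance of cycle structure under conjugation, exactly as you do. One minor quibble: your parenthetical justification ``$Q_r(0)\neq 0$ as $r$ is composite'' is slightly misleading, since $Q_r(0)\neq 0$ holds for every $r\ge 1$ (the roots of $Q_r$ are primitive $r$-th roots of unity, hence nonzero); but this does not affect the argument.
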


\begin{Proposition}\label{proposition 3.9}
Let $r$ be a composite number which is relatively prime to $p$. Assume that $h(t)\in \mathbb{F}_q[t]$ satisfies that $h(t)\ |\ t^r-1$, $h(t)$ is reducible, $(h(t),\frac{t^r-1}{Q_r(t)})\neq 1$, $h(-1)\neq 0$ and $d=\mathrm{deg}(h(t))>0$. Let $M=M(h(t))$ be the companion matrix of $h(t)$. Let $\tau_1$ be any  PP over $\mathbb{F}_{q^d}$ and $\sigma=\tau_1\circ\sigma_M\circ\tau_1^{-1}$. Then $\sigma$ is an r-cycle PP over $\mathbb{F}_{q^d}$ but $\sigma$ is not r-regular.
\end{Proposition}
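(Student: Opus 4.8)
The plan is to reduce Proposition \ref{proposition 3.9} to the corresponding linear statement, Proposition \ref{proposition 3.3}, by exploiting the conjugation identity. Since $\sigma=\tau_1\circ\sigma_M\circ\tau_1^{-1}$ with $\tau_1$ a PP (not necessarily additive) and $\sigma_M$ the linear map attached to the companion matrix $M=M(h(t))$, I would first invoke Theorem \ref{Theorem 3.3}(1) with $h(0)\neq 0$ (which follows from $h(t)\mid t^r-1$ together with $h(-1)\neq 0$ forcing $t\nmid h(t)$; more directly, $h(t)\mid t^r-1$ gives $h(0)\mid(-1)$) to conclude $\sigma$ is a PP, and Theorem \ref{Theorem 3.3}(2) with $\tau_1\circ\tau_1^{-1}=e$ and $h(t)\mid t^r-1$ to conclude $\sigma$ is an $r$-cycle permutation. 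This establishes the "$r$-cycle PP" half of the claim without any new work.

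For the failure of $r$-regularity, I would use Proposition \ref{proposite 2.2}: since $\sigma=\tau_1\circ\sigma_M\circ\tau_1^{-1}$ is a conjugate of $\sigma_M$ by the PP $\tau_1$, the two maps have \emph{the same cycle structure}. By Proposition \ref{proposition 3.3}, $\sigma_M$ is an $r$-cycle CPP that is not $r$-regular, i.e.\ $\sigma_M$ has a cycle of some length $l$ with $l\mid r$ and $l<r$. Because cycle structures are preserved under conjugation, $\sigma$ also has a cycle of length $l<r$, and hence $\sigma$ is not $r$-regular. The argument is essentially a one-line transfer via Proposition \ref{proposite 2.2} once Proposition \ref{proposition 3.3} is in hand.

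The only subtlety to be careful about is that Proposition \ref{proposition 3.3} is stated for $M=M(h(t))$ the companion matrix (this choice is what guarantees $h(t)$ is the minimal polynomial of $\sigma_M$, which is exactly what drives the non-regularity proof there), and Proposition \ref{proposition 3.9} correctly inherits that same hypothesis $M=M(h(t))$. So there is no mismatch: I would simply note that the hypotheses on $h(t)$, $r$, $p$ and the choice $M=M(h(t))$ are identical to those of Proposition \ref{proposition 3.3}, apply that proposition to $\sigma_M$, and then transport the conclusion to $\sigma$ through Proposition \ref{proposite 2.2}. I do not anticipate a genuine obstacle here; the mild "hard part," if any, is just recording that $h(0)\neq0$ so that Theorem \ref{Theorem 3.3}(1) applies, and that $\tau_1$ being merely a PP (rather than additive) is harmless because nothing in this argument touches $\sigma+e$ — the complete-permutation aspect plays no role in Proposition \ref{proposition 3.9}, only the cycle structure does, and cycle structure is conjugation-invariant.

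In short, the proof writes itself as: (i) $\sigma$ is a PP by Theorem \ref{Theorem 3.3}(1); (ii) $\sigma$ is an $r$-cycle permutation by Theorem \ref{Theorem 3.3}(2); (iii) $\sigma$ and $\sigma_M$ share the same cycle structure by Proposition \ref{proposite 2.2}; (iv) $\sigma_M$ is not $r$-regular by Proposition \ref{proposition 3.3}; hence $\sigma$ is an $r$-cycle PP that is not $r$-regular.
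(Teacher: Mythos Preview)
Your proposal is correct and matches the paper's intended argument: the paper states Proposition \ref{proposition 3.9} without proof, having just observed at the start of subsection \ref{general subsection} that $\sigma=\tau_1\circ\sigma_M\circ\tau_1^{-1}$ shares the cycle structure of $\sigma_M$, so the result follows immediately from Proposition \ref{proposition 3.3} exactly as you outline. Your steps (i)--(iv), invoking Theorem \ref{Theorem 3.3} and Proposition \ref{proposite 2.2} to transfer the non-regularity of $\sigma_M$ to $\sigma$, are precisely the reduction the paper has in mind.
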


\section{ \bf examples of regular CPPs over the extension fields}\label{section 4}

In the section, we give several constructions of regular CPPs based on the results in Section \ref{Section 3}.  These examples show the power of our method in constructing regular CPPs.

In order to make sure that $f$ is a $r$-regular CPP over $\mathbb{F}_{q^d}$, we need to prove that:
\begin{enumerate}
  \item $f$ is a PP over $\mathbb{F}_{q^d}$;
  \item $f$ is $r$-regular over $\mathbb{F}_{q^d}$;
  \item $f+e$ is a PP over $\mathbb{F}_{q^d}$.
\end{enumerate}

When $f$ has the form $f=\tau\circ \sigma_M\circ\tau^{-1}$, by Section \ref{Section 3}, for suitable $M$, it is easy to make $f$ to be a $r$-regular PP over $\mathbb{F}_{q^d}$. When $\tau$ is additive, it is also easy to choose suitable $M$ to make $f+e$ to be a PP over $\mathbb{F}_{q^d}$ and then $f$ is a $r$-regular CPP over $\mathbb{F}_{q^d}$. But when $\tau$ is not additive, it is not easy to get $r$-regular CPPs. One can see the following examples.

In the rest of this section, we will consider the $r$-regular complete permutation property of some maps over the extension fields for $r=3,4,5,6,7$ and for arbitrary odd positive integer $r$.

\subsection{$r=3$}
In this subsection, we present constructions of 3-regular CPPs.

\begin{Proposition}\label{proposition 4.1}
Assume that $p\neq3$. Let $h(t)=Q_3(t)=\frac{t^3-1}{t-1}=t^2+t+1\in \mathbb{F}_q[t]$ and $M\in M_{2\times 2}(\mathbb{F}_q)$ such that $P_M(t)=h(t)$. Let $a_1, a_2$ be any PPs over $\mathbb{F}_q$ and $\tau((x_1,x_2)^T)=(a_1(x_1),a_2(x_2))^T$ for any $x_1,x_2\in \mathbb{F}_q$. Let $\sigma=\tau\circ\sigma_M\circ\tau^{-1}$.
Then $\sigma$ is a $3$-regular PP over $\mathbb{F}_{q^2}$ by Proposition \ref{proposition 3.7}. Moveover,
\begin{enumerate}
  \item if $a_1,a_2$ are additive, then $\sigma$ is a $3$-regular CPP over $\mathbb{F}_{q^2}$;
  \item if $a_1,a_2$ are additive and $p=2$, then  $\sigma+e$ is also a $3$-regular CPPs over $\mathbb{F}_{q^2}$;
  \item if $a_2=e$ and
  \begin{equation*}
  M=\left(
  \begin{array}{ccc}
    0 & m\\
    -m^{-1} & -1\\
    \end{array}
  \right)
  \end{equation*}
  where $m\in \mathbb{F}_q^*$, then $\sigma$ is a $3$-regular CPP over $\mathbb{F}_{q^2}$; moreover, if $p=2$, then $\sigma+e$ is also a $3$-regular CPPs over $\mathbb{F}_{q^2}$;
  \item if $a_2=e$ and
    \begin{equation*}
  M=\left(
  \begin{array}{ccc}
    -1 & 1\\
    -1 & 0\\
    \end{array}
  \right)
  \end{equation*}
  then $\sigma$ is a $3$-regular CPP over $\mathbb{F}_{q^2}$.
\end{enumerate}
\end{Proposition}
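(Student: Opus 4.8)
The plan is to isolate the common skeleton of all four parts and then dispatch them one at a time with increasingly explicit computations. Since $p\neq 3$, the integer $r=3$ is an odd prime with $\gcd(3,p)=1$; the polynomial $h(t)=Q_3(t)=t^2+t+1$ divides $t^3-1$, is different from $t-1$, and has degree $d=2>0$; and $\tau$ acts coordinatewise through the PPs $a_1,a_2$, hence is a PP over $\mathbb{F}_{q^2}$. Therefore Proposition \ref{proposition 3.7} applies and shows that $\sigma=\tau\circ\sigma_M\circ\tau^{-1}$ is a $3$-regular PP over $\mathbb{F}_{q^2}$ in every case (this is the assertion already recorded before (1)). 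So for each part it remains only to prove that $\sigma+e$ is a PP, and, for the ``moreover'' claims, that $\sigma+e$ is in addition a $3$-regular CPP. For (1), when $a_1,a_2$ are additive so is $\tau$, and the assertion is exactly Proposition \ref{proposition 3.4} with $r=3$. For (2) I would add the hypothesis $p=2$: then $(\sigma+e)+e=\sigma+2e=\sigma$ is a PP, so $\sigma+e$ is a CPP; moreover $\sigma+e=\tau\circ(\sigma_M+e)\circ\tau^{-1}$ since $\tau$ is additive, and as $h(t)$ divides $(t+1)^3-1=t^3+t^2+t$ in characteristic $2$, Theorem \ref{Theorem 3.2}(4) makes $\sigma+e$ a $3$-cycle permutation, which is not the identity (otherwise $\sigma$ would be the zero map), hence $3$-regular by Remark \ref{Remark 2.1}; thus $\sigma+e$ is a $3$-regular CPP.

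For (3) and (4) the hypothesis is $a_2=e$, so $\tau((x_1,x_2)^T)=(a_1(x_1),x_2)^T$ and $\tau^{-1}((y_1,y_2)^T)=(a_1^{-1}(y_1),y_2)^T$. I would first verify, by a $2\times2$ determinant, that $\det(tI_2-M)=t^2+t+1=h(t)$ for each listed $M$, so that the skeleton above indeed applies. Then, composing $\tau^{-1}$, $\sigma_M$ and $\tau$, I would compute $\sigma+e$ in coordinates, getting
\[(\sigma+e)((y_1,y_2)^T)=\bigl(a_1(my_2)+y_1,\; -m^{-1}a_1^{-1}(y_1)\bigr)^T\]
in case (3) and
\[(\sigma+e)((y_1,y_2)^T)=\bigl(a_1(y_2-a_1^{-1}(y_1))+y_1,\; y_2-a_1^{-1}(y_1)\bigr)^T\]
in case (4). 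In both cases $\sigma+e$ is a composition of two ``triangular'' bijections of $\mathbb{F}_q^2$, so it is a permutation: in (3) the second coordinate is a bijection of $y_1$ alone, after which the first coordinate is a bijection of $y_2$; in (4), $(y_1,y_2)\mapsto(y_1,v)$ with $v=y_2-a_1^{-1}(y_1)$ is a bijection, followed by $(y_1,v)\mapsto(a_1(v)+y_1,v)$. Writing out the explicit inverse in each case shows $\sigma+e$ is a PP, so $\sigma$ is a $3$-regular CPP in both (3) and (4).

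Finally, for the ``moreover'' in (3) with $p=2$: again $(\sigma+e)+e=\sigma$ is a PP, so $\sigma+e$ is a CPP, and it only remains to see that $\sigma+e$ is $3$-regular. I would conjugate by the PP $\psi((z_1,z_2)^T)=(z_1,m^{-1}a_1^{-1}(z_2))^T$; using $a_1(a_1^{-1}(z))=z$ and $-z=z$ in characteristic $2$, a short computation gives
\[\psi^{-1}\circ(\sigma+e)\circ\psi\,((z_1,z_2)^T)=(z_1+z_2,z_1)^T,\]
the linear map with characteristic polynomial $t^2-t-1=t^2+t+1=Q_3(t)$, which is a non-identity $3$-cycle permutation and hence $3$-regular; by Proposition \ref{proposite 2.2} so is $\sigma+e$, and therefore $\sigma+e$ is a $3$-regular CPP. (Alternatively one can check $(\sigma+e)^{(3)}=e$ by direct substitution.) The main obstacle is exactly parts (3)--(4): because $\tau$ is not additive one loses the identity $\sigma+e=\tau\circ(\sigma_M+e)\circ\tau^{-1}$, so the permutation property of $\sigma+e$ must be obtained by hand from the explicit triangular description, and the characteristic-$2$ refinement of (3) hinges on spotting the linearising change of variables $\psi$; everything else is bookkeeping on top of Section \ref{Section 3}.
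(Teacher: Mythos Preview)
Your proposal is correct and follows essentially the same route as the paper. The only cosmetic difference is in parts (3)--(4): where the paper records $\sigma+e$ explicitly as a composite $\tau_1\circ\sigma_{M_1}\circ\tau_2$ of three PPs (with $M_1=\left(\begin{smallmatrix}1&1\\1&0\end{smallmatrix}\right)$, resp.\ $M_1=\left(\begin{smallmatrix}0&-1\\1&1\end{smallmatrix}\right)$), you instead argue directly that $\sigma+e$ is a composition of two triangular bijections---this is the same decomposition in different clothing, and your conjugating map $\psi$ in the ``moreover'' of (3) coincides with the paper's $\tau_1$ in characteristic $2$.
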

\begin{proof}
\begin{enumerate}
  \item We only need to prove that $\sigma+e$ is a permutation of $\mathbb{F}_{q^2}$. Since $a_1,a_2$ are additive and $h(-1)\neq 0$, by Theorem \ref{Theorem 3.2}(3), we have $\sigma+e$ is a PP over $\mathbb{F}_{q^2}$.
  \item Since $p=2$, we have $(t+1)^3-1=t^3+3t^2+3t=t^3+t^2+t$ and then $h(t)=t^2+t+1\ |\ ((t+1)^3-1)$. By Theorem \ref{Theorem 3.2}(4), $\sigma+e$ is a 3-regular PP over $\mathbb{F}_{q^2}$. Meanwhile, $(\sigma+e)+e=\sigma$ is a PP over $\mathbb{F}_{q^2}$. So $\sigma+e$ is  also a 3-regular CPP over $\mathbb{F}_{q^2}$.
  \item Let
  \begin{equation}
  \begin{aligned}
  \tau_1\begin{pmatrix}
  x_1\\
  x_2\\
  \end{pmatrix}
  =
  \begin{pmatrix}
  x_1\\
  -m^{-1}a_1^{-1}(x_2)\\
  \end{pmatrix}
  ,
  \tau_2\begin{pmatrix}
  x_1\\
  x_2\\
  \end{pmatrix}
  =
  \begin{pmatrix}
  x_1\\
  a_1(mx_2)\\
  \end{pmatrix}
  \end{aligned}
  \end{equation}
  and
    \begin{equation*}
  M_1=\left(
  \begin{array}{ccc}
    1 & 1\\
    1 & 0\\
    \end{array}
  \right).
  \end{equation*}
  It is easy that $\tau_1,\tau_2,\sigma_{M_1}$ are PPs over $\mathbb{F}_{q^2}$.
  By calculating,
  \begin{equation}
  \begin{aligned}
  (\sigma+e)\begin{pmatrix}
  x_1\\
  x_2\\
  \end{pmatrix}
  =
  \begin{pmatrix}
  a_1(mx_2)+x_1\\
  -m^{-1}a_1^{-1}(x_1)\\
  \end{pmatrix}
  =
  \tau_1\circ\sigma_{M_1}\circ\tau_2\begin{pmatrix}
  x_1\\
  x_2\\
  \end{pmatrix}
  \end{aligned}.
  \end{equation}
   So $\sigma+e=\tau_1\circ\sigma_{M_1}\circ\tau_2$ is a PP over $\mathbb{F}_{q^2}$;\label{3}

  Moreover, when $p=2$, we have $\tau_1\circ\tau_2=e$ and $P_{M_1}(t)=t^2+t+1$. By  Theorem \ref{Theorem 3.3}(2), we have $\sigma+e=\tau_1\circ\sigma_{M_1}\circ\tau_1^{-1}$ is a 3-regular CPP over $\mathbb{F}_{q^2}$. Meanwhile, $(\sigma+e)+e=\sigma$ is a PP over $\mathbb{F}_{q^2}$. So $\sigma+e$ is    also  a 3-regular CPP over $\mathbb{F}_{q^2}$.
  \item Let
  \begin{equation}
  \begin{aligned}
  \tau_1\begin{pmatrix}
  x_1\\
  x_2\\
  \end{pmatrix}
  =
  \begin{pmatrix}
  a_1(x_1)+a_1(x_2)\\
  x_1\\
  \end{pmatrix}
  ,
  \tau_2\begin{pmatrix}
  x_1\\
  x_2\\
  \end{pmatrix}
  =
  \begin{pmatrix}
   x_2\\
  a_1^{-1}(x_1)-x_2\\
  \end{pmatrix}
  \end{aligned}
  \end{equation}
   and
   \begin{equation*}
  M_1=\left(
  \begin{array}{ccc}
    0 & -1\\
    1 & 1\\
    \end{array}
  \right).
  \end{equation*}
  It is easy that $\tau_1,\tau_2,\sigma_{M_1}$ are PPs over $\mathbb{F}_{q^2}$. By calculating, we have
  \begin{equation}
  \begin{aligned}
  (\sigma+e)\begin{pmatrix}
  x_1\\
  x_2\\
  \end{pmatrix}
  =
  \begin{pmatrix}
  a_1(-a_1^{-1}(x_1)+x_2)+x_1\\
  -a_1^{-1}(x_1)+x_2\\
  \end{pmatrix}
  =
  \tau_1\circ\sigma_{M_1}\circ\tau_2\begin{pmatrix}
  x_1\\
  x_2\\
  \end{pmatrix}
  \end{aligned}.
  \end{equation}
  So $\sigma+e=\tau_1\circ\sigma_{M_1}\circ\tau_2$ is a PP over $\mathbb{F}_{q^2}$.
\end{enumerate}
\qed\end{proof}

\begin{Remark}
\begin{enumerate}
  \item In the Proposition \ref{proposition 4.1}(3), if $m=1$, then $M=M(h(t))$ and it becomes \cite[Corollary 1]{XZZ2022}. Our proof is different from theirs and is much easier than theirs.
  \item If $a_1=e$ and
  \begin{equation*}
  M=\left(
  \begin{array}{ccc}
    0 & m\\
    -m^{-1} & -1\\
    \end{array}
  \right),
  \end{equation*}
  we can also prove that $\sigma$ is a $3$-regular CPP over $\mathbb{F}_{q^2}$.
  \item In general, if none of $a_1,a_2$ are equal to $e$, it is difficult to know whether $\sigma+e$ is a PP over $\mathbb{F}_{q^2}$.
  \item In Proposition \ref{proposition 4.1}(4), in general, $\tau_1\circ\tau_2\neq e$. So we can not use Theorem \ref{Theorem 3.3}(2) to get the cycle structure of $\sigma+e=\tau_1\circ\sigma_{M_1}\circ\tau_2$.
\end{enumerate}
\end{Remark}

\subsection{$r=4$}
In this subsection, we present constructions of 4-regular CPPs.

\begin{Proposition}\label{proposition 4.2}
Assume that $p\neq 2$. Let $h(t)=Q_4(t)=t^2+1\in \mathbb{F}_q[t]$ and $M\in M_{2\times2}(\mathbb{F}_q)$ such that $P_M(t)=h(t)$. Let $a_1,a_2$ be any PPs over $\mathbb{F}_q$ and $\tau((x_1, x_2)^T)=(a_1(x_1),a_2(x_2))^T$ for any $x_1,x_2\in \mathbb{F}_q$. Let $\sigma=\tau\circ\sigma_M\circ\tau^{-1}$. Then $\sigma$ is a $4$-regular PP over $\mathbb{F}_{q^2}$ by Proposition \ref{proposition 3.8}. Moreover,
\begin{enumerate}
  \item if $a_1,a_2$ are additive, then $\sigma$ is a $4$-regular CPP over $\mathbb{F}_{q^2}$;
  \item if $a_1=a_2=a$ satisfy $a(-x)=-a(x)$ for any $x\in\mathbb{F}_q$ and
  \begin{equation*}
  M=M(h(t))=\left(
  \begin{array}{ccc}
    0 & 1\\
    -1 & 0\\
    \end{array}
  \right),
  \end{equation*}
  then $\sigma$ is a $4$-regular CPP over $\mathbb{F}_{q^2}$;
  \item if $a_2=e$ and
  \begin{equation*}
  M=\left(
  \begin{array}{ccc}
    -1 & m\\
    -2m^{-1} & 1\\
    \end{array}
  \right)
  \end{equation*}
  where $m\in \mathbb{F}_q^*$, then $\sigma$ is a $4$-regular CPP over $\mathbb{F}_{q^2}$.
\end{enumerate}
\end{Proposition}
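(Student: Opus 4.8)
The plan is to verify the three requirements listed at the start of Section~\ref{section 4} for each of the four items. Requirement (2), that $\sigma$ is $4$-regular, is handled uniformly by Proposition~\ref{proposition 3.8} once we check that $h(t)=Q_4(t)=t^2+1$ is the fourth cyclotomic polynomial over $\mathbb{F}_q$ (which holds since $p\neq 2$ and $\mathrm{ord}$ of a primitive $4$-th root of unity is $4$), and that $h(t)\mid Q_4(t)$ trivially. Similarly, requirement (1), that $\sigma$ is a PP, follows because $h(0)=1\neq 0$ together with Theorem~\ref{Theorem 3.3}(1). So in every case the real content is requirement (3): showing $\sigma+e$ is a PP over $\mathbb{F}_{q^2}$.

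For item~(1), $\tau$ is additive (being a coordinatewise additive map), so the left-distributivity argument of Theorem~\ref{Theorem 3.2}(3) applies: $\sigma+e=\tau\circ(\sigma_M+e)\circ\tau^{-1}$, and since $h(-1)=(-1)^2+1=2\neq 0$ (as $p\neq 2$), Theorem~\ref{linear theorem}(3) gives that $\sigma_M+e$ is an invertible linear map, hence $\sigma+e$ is a PP. For item~(2), $\tau$ is no longer additive in general, so I would compute $\sigma+e$ explicitly. With $M=M(h(t))$ and $a_1=a_2=a$ odd, one gets, for $(x_1,x_2)^T$, that $\sigma_M\circ\tau^{-1}$ sends $(x_1,x_2)$ to $(-a^{-1}(x_2),a^{-1}(x_1))$, hence $\sigma(x_1,x_2)=(a(-a^{-1}(x_2)),a(a^{-1}(x_1)))=(-x_2,x_1)$ using oddness of $a$, so $\sigma+e$ is the \emph{linear} map $(x_1,x_2)\mapsto(x_1-x_2,\,x_1+x_2)$, whose matrix has determinant $2\neq 0$; thus $\sigma+e$ is a PP. The oddness hypothesis on $a$ is exactly what collapses the conjugation to a linear map, which is the key observation here.

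For items~(3), the approach mirrors Proposition~\ref{proposition 4.1}: since $a_2=e$, the map $\tau$ acts as the identity on the second coordinate, so one can explicitly write out $\sigma=\tau\circ\sigma_M\circ\tau^{-1}$ and then $\sigma+e$ as a concrete map $(x_1,x_2)^T\mapsto(\ast,\ast)$ in terms of $a_1,a_1^{-1},m$. Then I would exhibit PPs $\tau_1,\tau_2$ over $\mathbb{F}_{q^2}$ (each modifying only one coordinate by an $a_1$-twist, so that their bijectivity is obvious) and a matrix $M_1\in M_{2\times2}(\mathbb{F}_q)$ with $\sigma+e=\tau_1\circ\sigma_{M_1}\circ\tau_2$, where $\sigma_{M_1}$ is an invertible linear map because $\det M_1\neq 0$. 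Concretely, the specific $M$ with entries $-1,m,-2m^{-1},1$ is engineered so that $M+I_2=\bigl(\begin{smallmatrix}0&m\\-2m^{-1}&2\end{smallmatrix}\bigr)$ has nonzero determinant $2$, and so that conjugating by the coordinatewise $a_1$-twist leaves a factorable expression; verifying the matrix identity $\sigma+e=\tau_1\circ\sigma_{M_1}\circ\tau_2$ is then a routine (if slightly tedious) substitution.

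The main obstacle throughout is item~(3) (and to a lesser extent item~(2)): when $\tau$ fails to be additive, there is no structural reason for $\sigma+e$ to be a permutation, so one must find the right auxiliary decomposition $\sigma+e=\tau_1\circ\sigma_{M_1}\circ\tau_2$ by hand. The difficulty is not in checking a given decomposition but in discovering it — in particular, choosing $\tau_1,\tau_2$ so that they are manifestly bijective and so that the residual linear part $\sigma_{M_1}$ has an invertible matrix. I expect the search to be guided by the same principle used in Proposition~\ref{proposition 4.1}(3)--(4): absorb all occurrences of $a_1$ and $a_1^{-1}$ into $\tau_1$ and $\tau_2$, leaving a purely $\mathbb{F}_q$-linear core, and then it only remains to confirm $h(0)\neq 0$ for that core (equivalently, that the relevant $2\times2$ determinant is nonzero, which in all three cases reduces to $2\neq 0$, guaranteed by $p\neq 2$).
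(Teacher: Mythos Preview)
Your approach matches the paper's in all three parts. Two minor corrections: in (2), with $M=\bigl(\begin{smallmatrix}0&1\\-1&0\end{smallmatrix}\bigr)$ one actually gets $\sigma(x_1,x_2)=(x_2,-x_1)$ rather than $(-x_2,x_1)$ (the determinant of the linear map $\sigma+e$ is still $2$, so the conclusion is unaffected); and in (3), the paper's $\tau_1,\tau_2$ are not single-coordinate twists but the triangular-type maps $\tau_1(x_1,x_2)=(a_1(x_1)+a_1(x_2),\,2m^{-1}x_1)$, $\tau_2(x_1,x_2)=(2^{-1}mx_2,\,a_1^{-1}(x_1)-2^{-1}mx_2)$ with $M_1=\bigl(\begin{smallmatrix}1&-1\\1&1\end{smallmatrix}\bigr)$ --- your guiding principle of absorbing the $a_1$-twists into $\tau_1,\tau_2$ to leave an $\mathbb{F}_q$-linear core of determinant $2$ is exactly what the paper does, but the bookkeeping is slightly more involved than ``modifying only one coordinate.''
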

\begin{proof}
\begin{enumerate}
  \item Since $a_1,a_2$ are additive and $h(-1)\neq 0$, by Theorem \ref{Theorem 3.2}(3), we have $\sigma+e$ is a PP over $\mathbb{F}_{q^2}$, and then $\sigma$ is a 4-regular CPP over $\mathbb{F}_{q^2}$.
  \item Let
  \begin{equation*}
  M_1=\left(
  \begin{array}{ccc}
    1 & 1\\
    -1 & 1\\
    \end{array}
  \right).
  \end{equation*}
  Then $\sigma_{M_1}$ is a PP over $\mathbb{F}_{q^2}$. By calculating, we have
  \begin{equation*}
  \begin{aligned}
  (\sigma+e)\begin{pmatrix}
  x_1\\
  x_2\\
  \end{pmatrix}
  =
  \begin{pmatrix}
  x_1+x_2\\
  -x_1+x_2\\
  \end{pmatrix}
  =
  \sigma_{M_1}\begin{pmatrix}
  x_1\\
  x_2\\
  \end{pmatrix}
  \end{aligned}.
  \end{equation*}
  So $\sigma+e=\sigma_{M_1}$ is a PP over $\mathbb{F}_{q^2}$ and $\sigma$ is a 4-regular CPP over $\mathbb{F}_{q^2}$.
  \item Let \begin{equation}
  \begin{aligned}
  \tau_1\begin{pmatrix}
  x_1\\
  x_2\\
  \end{pmatrix}
  =
  \begin{pmatrix}
  a_1(x_1)+a_1(x_2)\\
  2m^{-1}x_1\\
  \end{pmatrix}
  ,
  \tau_2\begin{pmatrix}
  x_1\\
  x_2\\
  \end{pmatrix}
  =
  \begin{pmatrix}
  2^{-1}mx_2\\
  a_1^{-1}(x_1)-2^{-1}mx_2\\
  \end{pmatrix}
  \end{aligned}
  \end{equation}
  and
  \begin{equation*}
  M_1=\left(
  \begin{array}{ccc}
    1 & -1\\
    1 & 1\\
    \end{array}
  \right).
  \end{equation*}
  It is easy that $\tau_1,\tau_2,\sigma_{M_1}$ are PPs over $\mathbb{F}_{q^2}$. By calculating, we have
  \begin{equation}
  \begin{aligned}
  (\sigma+e)\begin{pmatrix}
  x_1\\
  x_2\\
  \end{pmatrix}
  =
  \begin{pmatrix}
  a_1(-a_1^{-1}(x_1)+mx_2)+x_1\\
  -2m^{-1}a_1^{-1}(x_1)+2x_2\\
  \end{pmatrix}
  =
  \tau_1\circ\sigma_{M_1}\circ\tau_2\begin{pmatrix}
  x_1\\
  x_2\\
  \end{pmatrix}
  \end{aligned}.
  \end{equation}
  So $\sigma+e=\tau_1\circ\sigma_{M_1}\circ\tau_2$ is a PP over $\mathbb{F}_{q^2}$ and $\sigma$ is a 4-regular CPP over $\mathbb{F}_{q^2}$.
\end{enumerate}
\qed\end{proof}

\begin{Remark}
\begin{enumerate}
  \item In Proposition \ref{proposition 4.2}(2), $M=M(h(t))$. In this case, we need one more hypothesis on $a_1, a_2$ that  $a_1=a_2=a$ satisfy $a(-x)=-a(x)$ for any $x\in\mathbb{F}_q$.     We can not find an example with less restricted about $a_1$ and $a_2$.
  \item In Proposition \ref{proposition 4.2}(3), in general, $\tau_1\circ\tau_2\neq e$. So we can not use Theorem \ref{Theorem 3.3}(2) to get the cycle structure of $\sigma+e=\tau_1\circ\sigma_{M_1}\circ\tau_2$.
\end{enumerate}
\end{Remark}




\subsection{$r=5$}
In this subsection, we present constructions of 5-regular CPPs.

\begin{Proposition}\label{proposition 4.3}
Assume that $p\neq5$. Let $h(t)=Q_5(t)=\frac{t^5-1}{t-1}=t^4+t^3+t^2+t+1\in\mathbb{F}_q[t]$ and $M=M(h(t))\in M_{4\times4}(\mathbb{F}_q)$. Let $a$ be any PP over $\mathbb{F}_q$  and $\tau((x_1,x_2,x_3,x_4)^T)=(a(x_1),x_2,x_3,x_4)^T$ for any $x_i\in\mathbb{F}_q$. Let $\sigma=\tau\circ\sigma_M\circ\tau^{-1}$. Then $\sigma$ is a $5$-regular CPP over $\mathbb{F}_{q^4}$.
\end{Proposition}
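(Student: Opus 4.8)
The plan is to follow the template established in Propositions \ref{proposition 4.1} and \ref{proposition 4.2}: use Proposition \ref{proposition 3.8} to get that $\sigma$ is a $5$-regular PP over $\mathbb{F}_{q^4}$ for free (note $h(t)=Q_5(t)$ is the $5$-th cyclotomic polynomial, $p\neq 5$, and $\deg h=4>0$), so the entire content is to verify that $\sigma+e$ is a PP over $\mathbb{F}_{q^4}$. Since $\tau$ is \emph{not} additive in general, we cannot write $\sigma+e=\tau\circ(\sigma_M+e)\circ\tau^{-1}$; instead I would compute $\sigma+e$ explicitly in multivariable coordinates and try to realize it in the form $\tau_1\circ\sigma_{M_1}\circ\tau_2$ with $\tau_1,\tau_2$ manifestly PPs over $\mathbb{F}_{q^4}$ and $\sigma_{M_1}$ an invertible linear map, exactly as in Proposition \ref{proposition 4.1}(3)(4).

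First I would fix $M=M(h(t))$, the companion matrix of $t^4+t^3+t^2+t+1$, write out $\sigma_M$ as a linear map on $\mathbb{F}_q^4$, and conjugate by $\tau((x_1,x_2,x_3,x_4)^T)=(a(x_1),x_2,x_3,x_4)^T$ and its inverse $\tau^{-1}((x_1,x_2,x_3,x_4)^T)=(a^{-1}(x_1),x_2,x_3,x_4)^T$. Because $\tau$ only modifies the first coordinate, $\sigma=\tau\circ\sigma_M\circ\tau^{-1}$ will have a fairly transparent shape: all coordinates are $\mathbb{F}_q$-linear combinations of $x_2,x_3,x_4$ and $a^{-1}(x_1)$, with $a$ applied once at the end in the coordinate(s) where $M$ feeds back into the first slot. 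Then $\sigma+e$ adds $(x_1,x_2,x_3,x_4)^T$ to this. The key step is then to find an invertible change of variables on the source (a PP $\tau_2$, built from $a^{-1}$ and linear maps) and on the target (a PP $\tau_1$, built from $a$ and linear maps) so that $\sigma+e=\tau_1\circ\sigma_{M_1}\circ\tau_2$; concretely, one absorbs the lone application of $a$ into $\tau_1$, absorbs $a^{-1}$ into $\tau_2$, and checks that what remains in the middle is a genuine invertible $\mathbb{F}_q$-linear map $\sigma_{M_1}$, i.e. that $\det M_1\neq 0$. Since $h(-1)=Q_5(-1)=1\neq 0$, the linear part $M+I_4$ is invertible, which is what makes such a factorization plausible.

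The main obstacle I expect is bookkeeping rather than conceptual: one must choose $\tau_1$ and $\tau_2$ so that (i) each is visibly a permutation of $\mathbb{F}_q^4$ (typically a "triangular" map — one coordinate is $a$ or $a^{-1}$ of a single variable, the rest linear in the remaining variables — which is automatically bijective), and (ii) the composite $\tau_1^{-1}\circ(\sigma+e)\circ\tau_2^{-1}$ is exactly linear with invertible matrix $M_1$. With a $4\times 4$ companion matrix the intermediate algebra is heavier than in the $r=3,4$ cases, and there may be a genuine constraint forcing a specific choice of $M_1$ (as happened in Proposition \ref{proposition 4.1}(4), where $\tau_1\circ\tau_2\neq e$ and one cannot invoke Theorem \ref{Theorem 3.3}(2) for the cycle structure — but here we do not need the cycle structure of $\sigma+e$, only that it is a PP, so that is not an issue). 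Once the factorization $\sigma+e=\tau_1\circ\sigma_{M_1}\circ\tau_2$ is exhibited and $\det M_1\neq 0$ is checked, $\sigma+e$ is a PP, hence $\sigma$ is a CPP; combined with the $5$-regularity from Proposition \ref{proposition 3.8}, $\sigma$ is a $5$-regular CPP over $\mathbb{F}_{q^4}$, completing the proof. A possible shortcut, worth trying first, is to see whether $\sigma+e$ is already of the form $\tau\circ(\text{linear})\circ\tau^{-1}$ for the \emph{same} $\tau$ after a coordinate permutation, which would shorten the argument considerably.
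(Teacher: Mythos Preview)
Your overall scaffold is right: first obtain that $\sigma$ is a $5$-regular PP, then prove $\sigma+e$ is a PP. One minor correction: since $r=5$ is prime, the applicable reference is Proposition~\ref{proposition 3.7}, not Proposition~\ref{proposition 3.8} (the latter is stated only for composite $r$).

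For the second step, however, the paper does \emph{not} attempt a factorization $\sigma+e=\tau_1\circ\sigma_{M_1}\circ\tau_2$. Instead it takes the more elementary route of directly solving the system $(\sigma+e)(x)=y$. With $M=M(h(t))$ one computes
\[
(\sigma+e)(x_1,x_2,x_3,x_4)^T=\bigl(x_1+a(x_2),\ x_2+x_3,\ x_3+x_4,\ -a^{-1}(x_1)-x_2-x_3\bigr)^T,
\]
and then simply adds the second and fourth equations to obtain $-a^{-1}(x_1)=y_2+y_4$, i.e.\ $x_1=a(-y_2-y_4)$; the remaining unknowns then fall out by back-substitution. This is a two-line argument.

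Your factorization plan would also succeed (for instance with $\tau_2(x)=(a^{-1}(x_1),x_2,x_3,x_4)^T$, an upper-triangular $M_1$, and a triangular $\tau_1$ of the type used in Proposition~\ref{proposition 4.1}(4)), so there is no gap. But it is considerably more work than needed here: the companion-matrix structure makes the system already ``triangular after one elimination,'' and the paper exploits that directly. The factorization viewpoint pays off when one also wants the cycle structure of $\sigma+e$ (via Theorem~\ref{Theorem 3.3}(2)); since that is not required in this proposition, the paper's bare-hands inversion is the cleaner choice.
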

\begin{proof}
By Proposition \ref{proposition 3.7}, $\sigma$ is a 5-regular PP over $\mathbb{F}_{q^4}$. We only need to prove that $\sigma+e$ is a permutation of $\mathbb{F}_{q^4}$. Indeed, for any $y=(y_1,y_2,y_3,y_4)^T\in\mathbb{F}_{q^4}$, the equation $(\sigma+e)(x)=y$ yields the following system of equations
\begin{equation}
\left\{
\begin{aligned}
x_1+a(x_2)=y_1\\
x_2+x_3=y_2\\
x_3+x_4=y_3\\
-a^{-1}(x_1)-x_2-x_3=y_4
\end{aligned}
\right.
\end{equation}
By the second and fourth equations, we get $x_1=a(-y_2-y_4)$. Then it is easy to find that the system has only one solution. Hence $\sigma+e$ is a PP over $\mathbb{F}_{q^4}$ and $\sigma$ is a 5-regular CPP over $\mathbb{F}_{q^4}$.

\qed\end{proof}

\begin{Remark}
\begin{enumerate}
  \item If $M$ is any $4\times4$ matrix over $\mathbb{F}_q$ such that $P_M(t)=h(t)$, we have not find a general method to construct $5$-regular CPPs over $\mathbb{F}_{q^4}$. We will consider this in further research.
  \item If $q^2\equiv1 \mod 5$, then there exists a quadratic polynomial $h(t)\in \mathbb{F}_q[t]$ such that $h(t)\ |\ Q_5(t)$. By Section \ref{Section 3}, we can construct $5$-regular CPPs over $\mathbb{F}_{q^2}$. We will study this explicitly in further research.
\end{enumerate}
\end{Remark}

\subsection{$r=6$}\label{subsection4.4}
In this subsection, we present constructions of 6-regular CPPs.

\begin{Proposition}\label{proposition 4.4}
Assume that $p\neq2, 3.$ Let $h(t)=Q_6(t)=t^2-t+1\in\mathbb{F}_q[t]$ and $M\in M_{2\times2}(\mathbb{F}_q)$ such that $P_M(t)=h(t)$. Let $a_1,\ a_2$ be any PPs over $\mathbb{F}_q$ and $\tau((x_1,x_2)^T)=(a_1(x_1),a_2(x_2))^T$ for any $x_1,\ x_2\in \mathbb{F}_q$. Let $\sigma=\tau\circ\sigma_M\circ\tau^{-1}$. Then $\sigma$ is a $6$-regular PP over $\mathbb{F}_{q^2}$ by Proposition \ref{proposition 3.8}. Moreover,
\begin{enumerate}
  \item if $a_1,\ a_2$ are additive, then $\sigma$ is a $6$-regular CPP over $\mathbb{F}_{q^2}$;
  \item if $a_2=e$ and
  \begin{equation*}
  M=M(h(t))=\left(
  \begin{array}{ccc}
    0 & 1\\
    -1 & 1\\
    \end{array}
  \right),
  \end{equation*}
  then $\sigma$ is a $6$-regular CPP over $\mathbb{F}_{q^2}$;
  \item if $a_2=e$ and
  \begin{equation*}
  M=\left(
  \begin{array}{ccc}
    -1 & m\\
    -3m^{-1} & 2\\
    \end{array}
  \right)
  \end{equation*}
  where $m \in\mathbb{F}_q^*,$ then $\sigma$ is a $6$-regular CPP over $\mathbb{F}_{q^2}$.
\end{enumerate}
\end{Proposition}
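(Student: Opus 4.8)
The plan is to follow the same three-part template used throughout Section \ref{section 4}: first invoke the general machinery to get that $\sigma$ is a $6$-regular PP, then handle the three sub-cases by exhibiting $\sigma+e$ as a permutation. Since $r=6$ is composite and $h(t)=Q_6(t)=t^2-t+1$ is the $6$-th cyclotomic polynomial over $\mathbb{F}_q$ (using $p\neq 2,3$), Proposition \ref{proposition 3.8} applies directly with $\tau_1=\tau$: $\sigma=\tau\circ\sigma_M\circ\tau^{-1}$ is a $6$-regular PP over $\mathbb{F}_{q^2}$. This is stated in the Proposition itself, so the only work is showing $\sigma+e$ is a PP in each of (1), (2), (3); once that is done, $r$-regularity of $\sigma$ is inherited and $\sigma$ is an $r$-regular CPP.

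For part (1), with $a_1,a_2$ additive, $\tau$ is additive, so $\sigma+e=\tau\circ(\sigma_M+e)\circ\tau^{-1}$, and by Theorem \ref{Theorem 3.2}(3) it suffices that $h(-1)\neq 0$; indeed $h(-1)=1+1+1=3\neq 0$ since $p\neq 3$. Hence $\sigma+e$ is a PP and $\sigma$ is a $6$-regular CPP. For parts (2) and (3), the map $\tau$ fixes the second coordinate ($a_2=e$), so I would first compute $\sigma+e$ explicitly as a map $\mathbb{F}_q^2\to\mathbb{F}_q^2$ in coordinates. With $M=\matrixx{0}{1}{-1}{1}$ one gets $\sigma_M((x_1,x_2)^T)=(x_2,-x_1+x_2)^T$, and conjugating by $\tau$ (which sends $(x_1,x_2)\mapsto(a_1(x_1),x_2)$) yields $\sigma((x_1,x_2)^T)=(a_1(a_1^{-1}(x_1)\text{-related expression}), \ldots)^T$; adding $e$ gives an explicit formula. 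The strategy, as in Propositions \ref{proposition 4.1}(3)(4) and \ref{proposition 4.2}(2)(3), is to produce PPs $\tau_1,\tau_2$ over $\mathbb{F}_{q^2}$ and an invertible $M_1\in M_{2\times 2}(\mathbb{F}_q)$ with $\sigma+e=\tau_1\circ\sigma_{M_1}\circ\tau_2$ (or $\sigma+e=\sigma_{M_1}$ outright in the luckiest case), from which PP-ness is immediate since a composition of PPs is a PP.

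The main obstacle — and the only genuinely nontrivial step — is the bookkeeping in parts (2) and (3): guessing the right factorization $\sigma+e=\tau_1\circ\sigma_{M_1}\circ\tau_2$. The shape of $\tau_1,\tau_2$ is dictated by where the nonlinear ingredient $a_1$ must sit: $\sigma+e$ will contain one occurrence of $a_1$ and one of $a_1^{-1}$ (from $\tau$ and $\tau^{-1}$), and these must be absorbed into $\tau_1$ and $\tau_2$ so that $\sigma_{M_1}$ is genuinely $\mathbb{F}_q$-linear. Following the pattern of Proposition \ref{proposition 4.2}(3), for part (3) with $M=\matrixx{-1}{m}{-3m^{-1}}{2}$ I expect $\tau_1$ of the form $(x_1,x_2)^T\mapsto(a_1(x_1)+a_1(x_2),\, 3m^{-1}x_1)^T$ and $\tau_2$ of the form $(x_1,x_2)^T\mapsto(3^{-1}mx_2,\, a_1^{-1}(x_1)-3^{-1}mx_2)^T$, with $M_1$ a fixed $2\times 2$ matrix to be read off from the computation; the constants $3$ and $3^{-1}$ are available precisely because $p\neq 3$. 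For part (2), where $M=M(h(t))$ is the companion matrix, a simpler factorization (possibly $\sigma+e=\tau_1\circ\sigma_{M_1}\circ\tau_2$ with $\tau_1,\tau_2$ swapping coordinates and carrying $a_1,a_1^{-1}$) should work. After writing down the candidate maps, verifying the identity is a routine matrix computation over $\mathbb{F}_q$ which I would present compactly in a single displayed equation per case, exactly as done in Propositions \ref{proposition 4.1} and \ref{proposition 4.2}. I would also add a Remark noting, as in the $r=4$ case, that in general $\tau_1\circ\tau_2\neq e$, so Theorem \ref{Theorem 3.3}(2) cannot be used to read off the cycle structure of $\sigma+e$ — but this is not needed, since the regularity of $\sigma$ (not $\sigma+e$) is what the CPP conclusion requires.
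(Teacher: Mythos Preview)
Your proposal is correct and matches the paper's approach exactly: part (1) via Theorem \ref{Theorem 3.2}(3) with $h(-1)=3\neq 0$, and parts (2) and (3) via explicit factorizations $\sigma+e=\tau_1\circ\sigma_{M_1}\circ\tau_2$ with $\tau_1,\tau_2$ PPs and $M_1$ invertible. Your guess for part (3) is precisely the paper's choice (with $M_1=\matrixx{2}{-1}{1}{1}$); for part (2) the paper uses $\tau_1((x_1,x_2)^T)=(a_1(x_1)+a_1(x_2),\,-x_1)^T$, $\tau_2((x_1,x_2)^T)=(-x_2,\,a_1^{-1}(x_1)+x_2)^T$, $M_1=\matrixx{1}{1}{-1}{0}$, and the Remark you anticipate about $\tau_1\circ\tau_2\neq e$ is indeed included.
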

\begin{proof}
\begin{enumerate}
  \item Since $a_1,\ a_2$ are additive and $h(-1)\neq0$, by Theorem \ref{Theorem 3.2}(3) we have $\sigma+e$ is a PP over $\mathbb{F}_{q^2}$, and then $\sigma$ is a 6-regular CPP over $\mathbb{F}_{q^2}$.
  \item Let \begin{equation}
  \begin{aligned}
  \tau_1\begin{pmatrix}
  x_1\\
  x_2\\
  \end{pmatrix}
  =
  \begin{pmatrix}
  a_1(x_1)+a_1(x_2)\\
  -x_1\\
  \end{pmatrix}
  ,
  \tau_2\begin{pmatrix}
  x_1\\
  x_2\\
  \end{pmatrix}
  =
  \begin{pmatrix}
  -x_2\\
  a_1^{-1}(x_1)+x_2\\
  \end{pmatrix}
  \end{aligned}
  \end{equation}
  and
  \begin{equation*}
  M_1=\left(
  \begin{array}{ccc}
    1 & 1\\
    -1 & 0\\
    \end{array}
  \right).
  \end{equation*}
  It is easy that $\tau_1,\tau_2,\sigma_{M_1}$ are PPs over $\mathbb{F}_{q^2}$. By calculating, we have
  \begin{equation}
  \begin{aligned}
  (\sigma+e)\begin{pmatrix}
  x_1\\
  x_2\\
  \end{pmatrix}
  =
  \begin{pmatrix}
  x_1+a_1(x_2)\\
  -a_1^{-1}(x_1)\\
  \end{pmatrix}
  =
  \tau_1\circ\sigma_{M_1}\circ\tau_2\begin{pmatrix}
  x_1\\
  x_2\\
  \end{pmatrix}
  \end{aligned}.
  \end{equation}
   So $\sigma+e=\tau_1\circ\sigma_{M_1}\circ\tau_2$ is a PP over $\mathbb{F}_{q^2}$ and $\sigma$ is a 6-regular CPP over $\mathbb{F}_{q^2}$.
   \item Let \begin{equation}
  \begin{aligned}
  \tau_1\begin{pmatrix}
  x_1\\
  x_2\\
  \end{pmatrix}
  =
  \begin{pmatrix}
  a_1(x_1)+a_1(x_2)\\
  3m^{-1}x_1\\
  \end{pmatrix}
  ,
  \tau_2\begin{pmatrix}
  x_1\\
  x_2\\
  \end{pmatrix}
  =
  \begin{pmatrix}
  3^{-1}mx_2\\
  a_1^{-1}(x_1)-3^{-1}mx_2\\
  \end{pmatrix}
  \end{aligned}
  \end{equation}
  and
  \begin{equation*}
  M_1=\left(
  \begin{array}{ccc}
    2 & -1\\
    1 & 1\\
    \end{array}
  \right).
  \end{equation*}
  It is easy that $\tau_1,\tau_2,\sigma_{M_1}$ are PPs over $\mathbb{F}_{q^2}$. By calculating, we have
  \begin{equation}
  \begin{aligned}
  (\sigma+e)\begin{pmatrix}
  x_1\\
  x_2\\
  \end{pmatrix}
  =
  \begin{pmatrix}
  a_1(-a_1^{-1}(x_1)+mx_2)+x_1\\
  -3m^{-1}a_1^{-1}(x_1)+3x_2\\
  \end{pmatrix}
  =
  \tau_1\circ\sigma_{M_1}\circ\tau_2\begin{pmatrix}
  x_1\\
  x_2\\
  \end{pmatrix}
  \end{aligned}.
  \end{equation}
   So $\sigma+e=\tau_1\circ\sigma_{M_1}\circ\tau_2$ is a PP over $\mathbb{F}_{q^2}$ and $\sigma$ is a 6-regular CPP over $\mathbb{F}_{q^2}$.
\end{enumerate}
\qed\end{proof}

\begin{Remark}
In Proposition \ref{proposition 4.4}(2)(3), in general, $\tau_1\circ\tau_2\neq e$. So we can not use Theorem \ref{Theorem 3.3}(2) to get the cycle structure of $\sigma+e=\tau_1\circ\sigma_{M_1}\circ\tau_2$.
\end{Remark}

\subsection{$r=7$}
In this subsection, we present constructions of 7-regular CPPs.

\begin{Proposition}\label{proposition 4.5}
Assume that $p\neq7$. Let $h(t)=Q_7(t)=\frac{t^7-1}{t-1}=t^6+t^5+t^4+t^3+t^2+t+1\in\mathbb{F}_q[t]$ and $M=M(h(t))\in M_{6\times6}(\mathbb{F}_q)$. Let $a$ be any PP over $\mathbb{F}_q$  and $\tau((x_1,x_2,\ldots,x_6)^T)=(a(x_1),x_2,\ldots,x_6)^T$ for any $x_i\in\mathbb{F}_q$. Let $\sigma=\tau\circ\sigma_M\circ\tau^{-1}$. Then $\sigma$ is a $7$-regular CPP over $\mathbb{F}_{q^6}$.
\end{Proposition}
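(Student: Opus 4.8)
The plan is to follow the template established in Propositions \ref{proposition 4.1}(\ref{3}), \ref{proposition 4.3}, and the $r=6$ case: first invoke the general machinery to get that $\sigma$ is a $7$-regular PP over $\mathbb{F}_{q^6}$, then do the real work, which is showing $\sigma+e$ is a permutation. For the first part, note $h(t)=Q_7(t)\mid (t^7-1)$, $h(t)\neq t-1$, $7$ is an odd prime coprime to $p$ (since $p\neq 7$), and $\tau$ is a PP over $\mathbb{F}_{q^6}$; hence Proposition \ref{proposition 3.7} applies verbatim and gives that $\sigma=\tau\circ\sigma_M\circ\tau^{-1}$ is a $7$-regular PP over $\mathbb{F}_{q^6}$.

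The substantive step is to prove $\sigma+e$ is a PP over $\mathbb{F}_{q^6}$. First I would write down $\sigma+e$ explicitly in multivariable form. Since $M=M(h(t))$ is the companion matrix of $h(t)=t^6+t^5+t^4+t^3+t^2+t+1$, the linear map $\sigma_M$ sends $(x_1,\dots,x_6)^T$ to $(x_2,\,x_3,\,x_4,\,x_5,\,x_6,\,-(x_1+x_2+x_3+x_4+x_5+x_6))^T$, and $\tau$ only twists the first coordinate by $a$. Conjugating, $\sigma=\tau\circ\sigma_M\circ\tau^{-1}$ acts by: apply $a^{-1}$ to the first coordinate, apply $\sigma_M$, then apply $a$ to the first coordinate. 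So for $x=(x_1,\dots,x_6)^T$, writing the output coordinatewise, $\sigma(x)=(a(x_2),\,x_3,\,x_4,\,x_5,\,x_6,\,-(a^{-1}(x_1)+x_2+x_3+x_4+x_5+x_6))^T$, and therefore
\begin{equation*}
(\sigma+e)(x)=\bigl(x_1+a(x_2),\; x_2+x_3,\; x_3+x_4,\; x_4+x_5,\; x_5+x_6,\; -a^{-1}(x_1)-x_2-x_3-x_4-x_5-x_6\bigr)^T.
\end{equation*}
Given a target $y=(y_1,\dots,y_6)^T\in\mathbb{F}_{q^6}$, I would solve the system $(\sigma+e)(x)=y$ for $x$ and check that the solution is unique, exactly as in the $r=5$ proof. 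The key trick is the same telescoping observation: from the equations $x_2+x_3=y_2$, $x_3+x_4=y_3$, $x_4+x_5=y_4$, $x_5+x_6=y_5$ one has $x_2+x_3+x_4+x_5+x_6 = y_2 + y_4 + x_6 = \dots$; more to the point, adding the second through fifth equations in an alternating fashion recovers $x_2+x_3+x_4+x_5+x_6$ as an explicit affine combination of $y_2,\dots,y_5$ together with one free coordinate, and the sixth equation then reads $-a^{-1}(x_1) - (x_2+x_3+x_4+x_5+x_6) = y_6$. Substituting, $a^{-1}(x_1)$ is pinned down by the $y_i$'s alone, so $x_1$ is determined (using that $a$ is a bijection); then the first equation $x_1 + a(x_2) = y_1$ determines $a(x_2)$ and hence $x_2$, and then $x_3,x_4,x_5,x_6$ cascade down from the middle equations. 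This shows the map is injective on the finite set $\mathbb{F}_{q^6}$, hence bijective.

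The main obstacle I anticipate is purely bookkeeping: one must verify that the alternating sum of the four "middle" equations genuinely isolates $x_2+x_3+x_4+x_5+x_6$ (or whichever partial sum the sixth equation needs) as a function of the $y_i$ with the right parity/sign pattern, so that no circularity creeps in and the solution for $a^{-1}(x_1)$ really is forced. Because $p$ is arbitrary (not necessarily $2$), signs matter and I would track them carefully rather than hand-wave "it telescopes." Once $x_1$, and hence $x_2$, is uniquely determined, the remaining coordinates are immediate, so the only genuinely delicate point is this first elimination; after that, uniqueness — and therefore the permutation property of $\sigma+e$ — follows, and combined with the first paragraph we conclude that $\sigma$ is a $7$-regular CPP over $\mathbb{F}_{q^6}$.
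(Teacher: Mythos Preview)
Your approach is correct and matches the paper's proof exactly: invoke Proposition~\ref{proposition 3.7} for $7$-regularity, then solve $(\sigma+e)(x)=y$ coordinatewise. One arithmetic slip to fix: when you add $e$, the sixth coordinate gains a $+x_6$, so the last component of $(\sigma+e)(x)$ is $-a^{-1}(x_1)-x_2-x_3-x_4-x_5$ (the $x_6$ terms cancel), not $-a^{-1}(x_1)-x_2-\cdots-x_6$. With this correction the elimination you worried about becomes trivial---adding the second, fourth, and sixth equations directly gives $-a^{-1}(x_1)=y_2+y_4+y_6$, hence $x_1=a(-y_2-y_4-y_6)$, and the remaining coordinates cascade as you describe; this is precisely the paper's argument.
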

\begin{proof}
By Proposition \ref{proposition 3.7}, $\sigma$ is a 7-regular PP over $\mathbb{F}_{q^6}$. We only need to prove that $\sigma+e$ is a permutation of $\mathbb{F}_{q^6}$. Indeed, for any $y=(y_1,y_2,\ldots,y_6)^T\in\mathbb{F}_{q^6}$, the equation $(\sigma+e)(x)=y$ yields the following system of equations
\begin{equation}
\left\{
\begin{aligned}
x_1+a(x_2)=y_1\\
x_2+x_3=y_2\\
x_3+x_4=y_3\\
x_4+x_5=y_4\\
x_5+x_6=y_5\\
-a^{-1}(x_1)-x_2-x_3-x_4-x_5=y_6
\end{aligned}
\right.
\end{equation}
By the second, fourth, sixth equations, we get $x_1=a(-y_2-y_4-y_6)$. Then it is easy to find that the system has only one solution. Hence $\sigma+e$ is a PP over $\mathbb{F}_{q^6}$ and $\sigma$ is a 7-regular CPP over $\mathbb{F}_{q^6}$.

\qed\end{proof}

\begin{Remark}
\begin{enumerate}
  \item If $M$ is any $6\times6$ matrix over $\mathbb{F}_q$ such that $P_M(t)=h(t)$, we have not find a general method to construct $7$-regular CPPs over $\mathbb{F}_{q^6}$. We will consider this in further research.
  \item If $q^2\equiv1\mod 7$, then there exists a quadratic polynomial $h(t)\in \mathbb{F}_q[t]$ such that $h(t)\ |\ Q_7(t)$. By Section \ref{Section 3}, we can construct $7$-regular CPPs over $\mathbb{F}_{q^2}$. We will study this explicitly in further research.
  \item If $q^3\equiv1\mod 7$, then there exists a cubic polynomial $h(t)\in \mathbb{F}_q[t]$ such that $h(t)\ |\ Q_7(t)$. By Section \ref{Section 3}, we can construct $7$-regular CPPs over $\mathbb{F}_{q^3}$. Next, we consider the cases $p=2$. While for other cases, we  will study them in further research.
\end{enumerate}
\end{Remark}

Now we present constructions of 7-regular CPPs over $\mathbb{F}_{q^3}$ where $p=2$. First, we consider the additive cases.

\begin{Proposition}\label{proposition 4.6}
Assume that $p=2$. Let $h(t)=t^3+t^2+1$ and $M\in M_{3\times3}(\mathbb{F}_q)$ such that $P_M(t)=h(t)$. Let $\tau$ be any additive PP over $\mathbb{F}_{q^3}$. Let $\sigma=\tau\circ\sigma_M\circ\tau^{-1}$. Then $\sigma$ and $\sigma+e$ are both $7$-regular CPPs over $\mathbb{F}_{q^3}$.
\end{Proposition}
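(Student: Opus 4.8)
The plan is to verify the three defining properties of a $7$-regular CPP for $\sigma=\tau\circ\sigma_M\circ\tau^{-1}$ when $p=2$ and $h(t)=t^3+t^2+1$, and then to do the same for $\sigma+e$. First I would check that $h(t)\mid(t^7-1)$ over $\mathbb{F}_q$: since $p=2$, the polynomial $t^7-1$ factors as $(t-1)(t^3+t^2+1)(t^3+t+1)$, so $h(t)=t^3+t^2+1$ is one of the degree-$3$ irreducible factors, and in particular $h(0)=1\neq0$ and $h(-1)=h(1)=1+1+1=1\neq0$. By Theorem~\ref{linear theorem}(1)(2), $\sigma_M$ is a PP and a $7$-cycle permutation over $\mathbb{F}_{q^3}$; since $h(t)\neq t-1$ and $7$ is prime, Remark~\ref{Remark 2.1} gives that $\sigma_M$ is $7$-regular, and by Theorem~\ref{Theorem 3.2}(1)(2) (with $\tau_1=\tau$, $\tau_2=\tau^{-1}$, which are additive) so is $\sigma$. (This is exactly the content of Proposition~\ref{proposition 3.4} with $r=7$.)

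Next I would handle $\sigma+e$. Because $\tau$ is additive and $\tau\circ\tau^{-1}=e$, Theorem~\ref{Theorem 3.2}(3)(4) applies: we have $\sigma+e=\tau\circ(\sigma_M+e)\circ\tau^{-1}$, and the characteristic polynomial of $\sigma_M+e$ is $h(t-1)$. To see $\sigma+e$ is a PP it suffices that $h(-1)\neq0$, which we already checked. To pin down its cycle structure I would show $h(t)\mid\bigl((t+1)^7-1\bigr)$: over $\mathbb{F}_2$ one computes $(t+1)^7-1=t^7+t^6+t^5+t^4+t^3+t^2+t=t(t^6+\cdots+1)=t\,Q_7(t)=t\,(t^3+t^2+1)(t^3+t+1)$, so indeed $h(t)\mid\bigl((t+1)^7-1\bigr)$. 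Hence by Theorem~\ref{Theorem 3.2}(4), $\sigma+e$ is a $7$-cycle permutation over $\mathbb{F}_{q^3}$, and since it is not the identity (its characteristic polynomial $h(t-1)\neq t-1$) and $7$ is prime, Remark~\ref{Remark 2.1} makes it $7$-regular.

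Finally, to conclude that $\sigma+e$ is itself a \emph{complete} permutation polynomial I observe that $(\sigma+e)+e=\sigma+2e=\sigma$ because $p=2$, and $\sigma$ is already a PP over $\mathbb{F}_{q^3}$. Therefore both $\sigma$ and $\sigma+e$ are $7$-regular CPPs over $\mathbb{F}_{q^3}$. The only genuinely computational step is the factorization of $t^7-1$ and of $(t+1)^7-1$ over $\mathbb{F}_2$; since $7$ is prime and $q$ is a power of $2$, the argument is uniform in $q$ and no case analysis is needed, so there is no real obstacle here beyond bookkeeping — the work done in Section~\ref{Section 3}, especially Theorem~\ref{Theorem 3.2} and Proposition~\ref{proposition 3.4}, already carries almost all the weight, with the extra symmetry $\sigma\leftrightarrow\sigma+e$ coming for free from characteristic $2$ and the identity $h(t)\mid(t+1)^7-1$.
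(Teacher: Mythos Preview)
Your proof is correct and follows essentially the same route as the paper: factor $t^7-1$ and $(t+1)^7-1$ over $\mathbb{F}_2$ to see that $h(t)=t^3+t^2+1$ divides both, invoke Theorem~\ref{Theorem 3.2}(2)(3)(4) (together with Remark~\ref{Remark 2.1} since $7$ is prime) to get that $\sigma$ and $\sigma+e$ are $7$-regular PPs and that $\sigma$ is a CPP, and finish by noting $(\sigma+e)+e=\sigma$ in characteristic $2$. If anything, you spell out a few steps (e.g.\ $h(0)\neq0$, $h(-1)\neq0$, the non-identity check for $\sigma+e$) that the paper leaves implicit.
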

\begin{proof}
Since $p=2$, we have
$$Q_7(t)=t^6+t^5+t^4+t^3+t^2+t+1$$
 $$\ \ \ =(t^3+t^2+1)(t^3+t+1)$$
 and
 $$(t+1)^7-1=t^7+7t^6+21t^5+35t^4+35t^3+21t^2+7t+1-1$$
$$\ \ \ =t^7+t^6+t^5+t^4+t^3+t^2+t=tQ_7(t).$$
Then $h(t)=t^3+t^2+1\ |\ (t^7-1)$ and $h(t)\ |\ ((t+1)^7-1)$. By Theorem \ref{Theorem 3.2}(2)(4), we have $\sigma$ is a 7-regular CPP over $\mathbb{F}_{q^3}$ and $\sigma+e$ is a 7-regular PP over $\mathbb{F}_{q^3}$. Meanwhile, $(\sigma+e)+e=\sigma$ is a PP over $\mathbb{F}_{q^3}$. So $\sigma+e$ is also a 7-regular CPP over $\mathbb{F}_{q^3}$.

\qed\end{proof}

By the same method used in Proposition \ref{proposition 4.6}, we can prove
\begin{Proposition}
Assume that $p=2$. Let $h(t)=t^3+t+1$ and $M\in M_{3\times3}(\mathbb{F}_q)$ such that $P_M(t)=h(t)$. Let $\tau$ be any additive PP over $\mathbb{F}_{q^3}$. Let $\sigma=\tau\circ\sigma_M\circ\tau^{-1}$. Then $\sigma$ and $\sigma+e$ are both $7$-regular CPPs over $\mathbb{F}_{q^3}$.
\end{Proposition}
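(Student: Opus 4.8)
The plan is to mimic the proof of Proposition \ref{proposition 4.6} verbatim, since the only change is replacing the irreducible factor $t^3+t^2+1$ of $Q_7(t)$ by the other irreducible cubic factor $t^3+t+1$. First I would record the factorization of $Q_7(t)$ over $\mathbb{F}_q$ when $p=2$, namely
\begin{equation*}
Q_7(t)=t^6+t^5+t^4+t^3+t^2+t+1=(t^3+t^2+1)(t^3+t+1),
\end{equation*}
which one checks by direct multiplication in characteristic $2$; in particular $h(t)=t^3+t+1$ divides $t^7-1$ because $t^7-1=(t-1)Q_7(t)$ and $h(t)\mid Q_7(t)$.

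Next I would compute $(t+1)^7-1$ in characteristic $2$. Using the binomial coefficients $\binom{7}{k}$ modulo $2$ (all odd for $0\le k\le 7$) one gets
\begin{equation*}
(t+1)^7-1=t^7+t^6+t^5+t^4+t^3+t^2+t=t\,Q_7(t),
\end{equation*}
so $h(t)=t^3+t+1$ also divides $(t+1)^7-1$.

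Then I would invoke Theorem \ref{Theorem 3.2}: from $h(t)\mid(t^7-1)$ and part (2), together with $\tau_1\circ\tau_2=\tau\circ\tau^{-1}=e$, the map $\sigma=\tau\circ\sigma_M\circ\tau^{-1}$ is a $7$-cycle permutation of $\mathbb{F}_{q^3}$; since $7$ is prime and $\sigma\ne e$ (as $h(t)\ne t-1$), Remark \ref{Remark 2.1} makes $\sigma$ $7$-regular. From $h(0)=1\ne 0$, Theorem \ref{Theorem 3.2}(1) gives that $\sigma$ is a PP; and since $h(-1)=h(1)=1\ne 0$ in characteristic $2$, Theorem \ref{Theorem 3.2}(3) gives that $\sigma+e$ is a PP, so $\sigma$ is a $7$-regular CPP. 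Finally, applying $h(t)\mid((t+1)^7-1)$ and Theorem \ref{Theorem 3.2}(4) shows $\sigma+e$ is a $7$-cycle permutation, hence $7$-regular by Remark \ref{Remark 2.1}, while $(\sigma+e)+e=\sigma$ is a PP; therefore $\sigma+e$ is also a $7$-regular CPP over $\mathbb{F}_{q^3}$.

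There is essentially no obstacle here: the only things to verify are the two polynomial identities in characteristic $2$, which are routine, and that the divisibility hypotheses of Theorem \ref{Theorem 3.2} are met. I would simply write ``by the same method used in Proposition \ref{proposition 4.6}'' and note that the key point is that $t^3+t+1$ is the \emph{other} irreducible factor of $Q_7(t)$ over $\mathbb{F}_2$, so all the divisibility relations used for $t^3+t^2+1$ hold equally for it.
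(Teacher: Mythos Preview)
Your proposal is correct and follows exactly the approach the paper intends: the paper itself states this proposition with the remark ``By the same method used in Proposition \ref{proposition 4.6}, we can prove'' and gives no separate proof, so your verbatim adaptation---checking that $t^3+t+1$ is the other cubic factor of $Q_7(t)$ over $\mathbb{F}_2$, that $(t+1)^7-1=tQ_7(t)$, and then invoking Theorem \ref{Theorem 3.2}(2)(3)(4)---is precisely what is expected.
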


Now we consider the general cases.

\begin{Proposition}\label{proposition 4.8}
Assume that $p=2$. Let $h(t)=t^3+t^2+1$ and $M\in M_{3\times3}(\mathbb{F}_q)$ such that $P_M(t)=h(t)$. Let $a_1,\ a_2$ be any PPs over $\mathbb{F}_q$ and $\tau_1((x_1,x_2,x_3)^T)=(x_1,a_1(x_2),x_3)^T$, $\tau_2((x_1,x_2,x_3)^T)=(x_1,a_2(x_2),x_3)^T$. Let $\sigma=\tau_1\circ\sigma_M\circ\tau_2$. Then
\begin{enumerate}
  \item if
  \begin{equation*}
  M=M(h(t))=\left(
  \begin{array}{ccc}
    0 & 1 & 0 \\
    0 & 0 & 1 \\
    1 & 0 & 1 \\
  \end{array}
  \right),
  \end{equation*}
  then $\sigma$ is a CPP over $\mathbb{F}_{q^3}$. Moreover, if $a_1\circ a_2=e$, then $\sigma$ is a $7$-regular CPP over $\mathbb{F}_{q^3}$;
  \item if
  \begin{equation*}
  M=\left(
  \begin{array}{ccc}
    0 & 1 & 1 \\
    1 & 0 & 0 \\
    1 & 0 & 1 \\
  \end{array}
  \right),
  \end{equation*}
  then $\sigma$ is a CPP over $\mathbb{F}_{q^3}$. Moreover, if $a_1\circ a_2=e$, then $\sigma$ is a $7$-regular CPP over $\mathbb{F}_{q^3}$.
\end{enumerate}
\end{Proposition}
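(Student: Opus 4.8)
The plan is to treat (1) and (2) together. In both items the map $\sigma=\tau_1\circ\sigma_M\circ\tau_2$ is already a permutation of $\mathbb{F}_{q^3}$ by a direct appeal to Theorem \ref{Theorem 3.3}(1): the two matrices exhibited satisfy $P_M(t)=h(t)=t^3+t^2+1$ (for (1) because $M=M(h(t))$ is the companion matrix, for (2) by a three-line determinant expansion over a field of characteristic $2$), one has $h(0)=1\neq 0$, and $\tau_1,\tau_2$ are permutations because $a_1,a_2$ are. Hence the actual content is (a) showing that $\sigma+e$ is a permutation, and (b) obtaining $7$-regularity when $a_1\circ a_2=e$.

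For (a) I would simply compute $\sigma+e$ coordinate-wise, using that $\sigma_M$ is left multiplication by $M$ and that $\tau_1,\tau_2$ only alter the second coordinate. In case (1) this gives
\[
(\sigma+e)\begin{pmatrix}x_1\\x_2\\x_3\end{pmatrix}=\begin{pmatrix}x_1+a_2(x_2)\\x_2+a_1(x_3)\\x_1\end{pmatrix},
\]
and case (2) produces a map of the same shape, namely $(x_1,x_2,x_3)^T\mapsto(x_1+a_2(x_2)+x_3,\,x_2+a_1(x_1),\,x_1)^T$. In each case the system $(\sigma+e)(x)=y$ has a unique solution: the last coordinate determines $x_1$, and then the two remaining equations determine $x_2$ and $x_3$ successively, using only that $a_1,a_2$ are bijective and that $\operatorname{char}\mathbb{F}_q=2$ (so the $x_3+x_3$ in the last coordinate cancels). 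Therefore $\sigma+e$ is a PP and $\sigma$ is a CPP over $\mathbb{F}_{q^3}$.

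For (b), if $a_1\circ a_2=e$ then $\tau_2=\tau_1^{-1}$, hence $\sigma=\tau_1\circ\sigma_M\circ\tau_1^{-1}$. Since $p=2$ we have the factorization $Q_7(t)=(t^3+t^2+1)(t^3+t+1)$, so $h(t)\mid Q_7(t)\mid(t^7-1)$ over $\mathbb{F}_q$, with $h(t)\neq t-1$ and $\gcd(7,p)=1$. Proposition \ref{proposition 3.7} then applies with $r=7$ and $d=3$, giving that $\sigma$ is $7$-regular; combined with (a), $\sigma$ is a $7$-regular CPP.

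The one point that is not purely formal is that $\sigma+e$ comes out \emph{triangular}, which is a feature of these particular matrices $M$ and of characteristic $2$ rather than of the construction in general; verifying the determinant computation in (2) and the coordinate-wise composition is the part that needs care, and it is precisely where the specific shape of $M$ is used. I would also record, in a remark, that since in general $\tau_1\circ\tau_2\neq e$, Theorem \ref{Theorem 3.3}(2) is unavailable for controlling the cycle structure of $\sigma+e$ itself, which is why the regularity claim is isolated and the hypothesis $a_1\circ a_2=e$ is imposed only there.
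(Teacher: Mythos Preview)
Your proposal is correct and follows essentially the same approach as the paper: compute $(\sigma+e)$ explicitly in coordinates, observe that the resulting system is triangular (the third coordinate gives $x_1$, then the others follow), and for the $7$-regularity invoke the conjugation form $\sigma=\tau_1\circ\sigma_M\circ\tau_1^{-1}$ together with $h(t)\mid t^7-1$. The only cosmetic difference is that the paper cites Theorem~\ref{Theorem 3.3}(2) (plus the fact that $7$ is prime) for regularity, whereas you cite the packaged Proposition~\ref{proposition 3.7}; these are equivalent.
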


\begin{proof}
\begin{enumerate}
  \item \label{1}In this case,
    \begin{equation}
  \begin{aligned}
  (\sigma+e)\begin{pmatrix}
  x_1\\
  x_2\\
  x_3\\
  \end{pmatrix}
  =
  \begin{pmatrix}
  x_1+a_2(x_2)\\
  x_2+a_1(x_3)\\
  x_1\\
  \end{pmatrix}
  \end{aligned}.
  \end{equation}
  It is easy to see that $\sigma+e$ is a PP over $\mathbb{F}_{q^3}$ and $\sigma$ is a CPP over $\mathbb{F}_{q^3}$. Moreover, when $a_1\circ a_2=e$, we have $\tau_1\circ\tau_2=e$. Combining $h(t)\ |\ (t^7-1)$, by Theorem \ref{Theorem 3.3}(2), we have $\sigma$ is a 7-regular CPP over $\mathbb{F}_{q^3}$.
  \item In this case,
  \begin{equation}
  \begin{aligned}
  (\sigma+e)\begin{pmatrix}
  x_1\\
  x_2\\
  x_3\\
  \end{pmatrix}
  =
  \begin{pmatrix}
  x_1+a_2(x_2)+x_3\\
  a_1(x_1)+x_2\\
  x_1\\
  \end{pmatrix}
  \end{aligned}.
  \end{equation}
  It is easy to see that $\sigma+e$ is a PP over $\mathbb{F}_{q^3}$ and $\sigma$ is a CPP over $\mathbb{F}_{q^3}$. When $a_1\circ a_2=e$, it is similar to (\ref{1}) that $\sigma$ is a 7-regular CPP over $\mathbb{F}_{q^3}$.
\end{enumerate}
\qed\end{proof}

By the same method used in Proposition \ref{proposition 4.8}, we can prove

\begin{Proposition}
Assume that $p=2$. Let $h(t)=t^3+t+1$ and $M\in M_{3\times3}(\mathbb{F}_q)$ such that $P_M(t)=h(t)$. Let $a_1,\ a_2$ be any PPs over $\mathbb{F}_q$ and $\tau_1((x_1,x_2,x_3)^T)=(x_1,a_1(x_2),x_3)^T$, $\tau_2((x_1,x_2,x_3)^T)=(x_1,a_2(x_2),x_3)^T$. Let $\sigma=\tau_1\circ\sigma_M\circ\tau_2$. Then
\begin{enumerate}
  \item if
  \begin{equation*}
  M=M(h(t))=\left(
  \begin{array}{ccc}
    0 & 1 & 0 \\
    0 & 0 & 1 \\
    1 & 1 & 0 \\
  \end{array}
  \right),
  \end{equation*}
  then $\sigma$ is a CPP over $\mathbb{F}_{q^3}$. Moreover, if $a_1\circ a_2=e$, then $\sigma$ is a $7$-regular CPP over $\mathbb{F}_{q^3}$;
  \item if
  \begin{equation*}
  M=\left(
  \begin{array}{ccc}
    1 & 1 & 1 \\
    1 & 0 & 0 \\
    1 & 0 & 1 \\
  \end{array}
  \right),
  \end{equation*}
  then $\sigma$ is a CPP over $\mathbb{F}_{q^3}$. Moreover, if $a_1\circ a_2=e$, then $\sigma$ is a $7$-regular CPP over $\mathbb{F}_{q^3}$.
\end{enumerate}
\end{Proposition}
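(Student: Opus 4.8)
The plan is to follow the method of Proposition~\ref{proposition 4.8} verbatim, working coordinatewise on $\mathbb{F}_q^3$ through the fixed basis and exploiting that $1=-1$ in characteristic $2$. The two structural inputs are the same as before: over $\mathbb{F}_q$ with $p=2$ one has the factorization $Q_7(t)=t^6+t^5+t^4+t^3+t^2+t+1=(t^3+t^2+1)(t^3+t+1)$, so $h(t)=t^3+t+1$ divides $Q_7(t)$ and hence $t^7-1$, while $h(0)=1\neq 0$. By Theorem~\ref{Theorem 3.3}(1) this already gives that $\sigma=\tau_1\circ\sigma_M\circ\tau_2$ is a PP over $\mathbb{F}_{q^3}$ for every choice of the PPs $a_1,a_2$ (it is a composition of PPs, $\sigma_M$ being a PP since $h(0)\neq 0$). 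The only real work is therefore to check that $\sigma+e$ is a PP.

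To do that I would write out $\sigma+e$ explicitly for each matrix by composing $\tau_2$, then $\sigma_M$, then $\tau_1$, and adding $e$. For the companion matrix in (1) one gets $(\sigma+e)(x_1,x_2,x_3)^T=(x_1+a_2(x_2),\ x_2+a_1(x_3),\ x_1+a_2(x_2)+x_3)^T$, and for the matrix in (2) one first checks $P_M(t)=\det(tI_3-M)=t^3+t+1$ by a short $3\times 3$ determinant, and then gets $(\sigma+e)(x_1,x_2,x_3)^T=(a_2(x_2)+x_3,\ a_1(x_1)+x_2,\ x_1)^T$. In each case I would prove bijectivity by solving $(\sigma+e)(x)=y$ starting from the coordinate that can be isolated: in (1), adding the first and third equations kills $a_2(x_2)$ and yields $x_3$, hence $x_2$ from the second equation, hence $x_1$ from the first; in (2), the third coordinate gives $x_1$ outright, then $x_2$ and then $x_3$. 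Since $a_1,a_2$ are bijections of $\mathbb{F}_q$, each system has exactly one solution for every $y\in\mathbb{F}_{q^3}$, so $\sigma+e$ is a PP and $\sigma$ is a CPP over $\mathbb{F}_{q^3}$.

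For the regularity claim, when $a_1\circ a_2=e$ one has $\tau_1\circ\tau_2=e$, so $\sigma=\tau_1\circ\sigma_M\circ\tau_1^{-1}$; since $h(t)\mid(t^7-1)$, Theorem~\ref{Theorem 3.3}(2) makes $\sigma$ a $7$-cycle permutation, and because $\sigma$ is conjugate to the non-identity map $\sigma_M$ (note $M\neq I_3$) and $7$ is prime, Remark~\ref{Remark 2.1} promotes this to $7$-regular; combined with the CPP part, $\sigma$ is a $7$-regular CPP over $\mathbb{F}_{q^3}$. I do not expect a genuine obstacle: the argument is entirely mechanical, and the only points needing a little care are the sign bookkeeping — which trivializes since $p=2$ — and the determinant computation confirming $P_M(t)=h(t)$ for the non-companion matrix in (2); the single place where an error could slip in is the explicit coordinate form of $\sigma+e$, on which the whole inversion argument rests, so I would double-check it by substituting back.
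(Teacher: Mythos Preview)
Your proposal is correct and follows exactly the approach the paper intends: it explicitly says this proposition is proved ``by the same method used in Proposition~\ref{proposition 4.8}'', and your computation of $(\sigma+e)$ in coordinates, the triangular inversion of the resulting system, and the appeal to Theorem~\ref{Theorem 3.3}(2) together with the primality of $7$ via Remark~\ref{Remark 2.1} match that method line for line. Your explicit formulas for $(\sigma+e)$ in both cases are correct, including the determinant check for the non-companion matrix in (2).
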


\begin{Remark}
\begin{enumerate}
  \item In fact, Proposition \ref{proposition 4.8}(2) is the same as \cite[Theorem 2]{XZZ2022}. We give a different and simpler proof.
  \item If $\tau$ is additive, then $\sigma$ and $\sigma+e$ are both $7$-regular CPPs over $\mathbb{F}_{q^3}$ by Proposition \ref{proposition 4.6}. While if $\tau$ is not additive, then $\sigma$ is a $7$-regular CPP over $\mathbb{F}_{q^3}$ in some cases by Proposition \ref{proposition 4.8}, but $\sigma+e$ is not $7$-regular in general by calculating.
\end{enumerate}
\end{Remark}

\subsection{$r$ is an   odd positive integer}\label{subsection 4.6}
In this subsection, we present constructions of $r$-regular CPPs for arbitrary odd positive integer $r$.

The following proposition is the generalization of \cite[Theorem 1]{XZZ2022}.
\begin{Proposition}\label{proposition 4.10}
Assume that $r\geq3$ is an odd positive integer which is relatively prime to $p$. Let $h(t)=\frac{t^r-1}{t-1}\in \mathbb{F}_q[t]$, $d=\mathrm{deg}(h(t))=r-1$ and $M=M(h(t))\in M_{d\times d}(\mathbb{F}_q)$. Let $a_1,\ a_2$ be any PPs over $\mathbb{F}_q$ and
$$\tau_1((x_1,x_2,\ldots,x_d)^T)=(a_1(x_1),x_2,\ldots,x_d)^T,$$
$$\tau_2((x_1,x_2,\ldots,x_d)^T)=(a_2(x_1),x_2,\ldots,x_d)^T.$$
Let $\sigma=\tau_1\circ\sigma_M\circ\tau_2$. Then
\begin{enumerate}
  \item $\sigma$ is a CPP over $\mathbb{F}_{q^d}$;
  \item If $a_1\circ a_2=e$ and $r$ is a prime, then $\sigma$ is an $r$-regular CPP over $\mathbb{F}_{q^d}$;
  \item If $a_1\circ a_2=e$ and $r$ is a composite number, then $\sigma$ is not an $r$-regular CPP over $\mathbb{F}_{q^d}$.
\end{enumerate}
\end{Proposition}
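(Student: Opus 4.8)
The plan is to mimic the structure already used for $r=5$ (Proposition \ref{proposition 4.3}) and $r=7$ (Proposition \ref{proposition 4.5}), since the matrix $M=M(h(t))$ is the companion matrix of $h(t)=\frac{t^r-1}{t-1}=t^{d}+t^{d-1}+\cdots+t+1$ with $d=r-1$, and $\tau_1,\tau_2$ act only on the first coordinate. First I would record that $h(t)\mid(t^r-1)$, $h(t)\neq t-1$, and $h(-1)\neq 0$ (the latter because $r$ is odd, so $-1$ is not a root of $\frac{t^r-1}{t-1}$); hence by Proposition \ref{proposition 3.7} (applied with a suitable reading, or directly via Theorem \ref{Theorem 3.3}(1) together with the regularity argument of subsection \ref{linear subsection}) $\sigma=\tau_1\circ\sigma_M\circ\tau_2$ is a PP over $\mathbb{F}_{q^d}$ whose cycle structure, when $a_1\circ a_2=e$, coincides with that of $\sigma_M$, i.e. $\sigma$ is an $r$-cycle permutation. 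For part (1), the key point is to show $\sigma+e$ is a PP. Writing out $\sigma+e$ componentwise using the companion matrix $M(h(t))$ and the one-coordinate twists $\tau_1,\tau_2$, the equation $(\sigma+e)(x)=y$ becomes the linear-in-the-$x_i$ system
\begin{equation*}
\left\{
\begin{aligned}
x_1+a_2(x_2)&=y_1\\
x_2+x_3&=y_2\\
&\ \vdots\\
x_{d-1}+x_d&=y_{d-1}\\
-a_1^{-1}(x_1)-x_2-\cdots-x_d&=y_d
\end{aligned}
\right.
\end{equation*}
(the last row coming from the bottom row $(-h_0,\dots,-h_{d-1})=(-1,\dots,-1)$ of $M(h(t))$, pulled back through $\tau_1$ on the first coordinate). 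Adding alternate equations of the chain $x_2+x_3=y_2,\ \dots$ and substituting into the last equation isolates $a_1^{-1}(x_1)$ as an explicit combination of the $y_i$, so $x_1$ is determined by applying $a_1$; then the chain determines $x_2,\dots,x_d$ uniquely and the first equation is automatically consistent. This gives a unique preimage, so $\sigma+e$ is a PP and $\sigma$ is a CPP.

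For part (2): when $r$ is prime, $\sigma$ is a non-identity $r$-cycle permutation, hence $r$-regular by Remark \ref{Remark 2.1}; combined with part (1) this is exactly the claim. Concretely, once $a_1\circ a_2=e$ we have $\tau_1\circ\tau_2=e$ so $\sigma=\tau_1\circ\sigma_M\circ\tau_1^{-1}$, and Proposition \ref{proposite 2.2} transports the $r$-regularity of $\sigma_M$ (which follows from $h(t)\mid(t^r-1)$, $h(t)\neq t-1$ and $r$ prime, as in Proposition \ref{proposition 3.2}'s prime analogue) to $\sigma$.

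For part (3): when $r$ is composite, write $r=\ell r'$ with $1<\ell<r$ a divisor. Then $h(t)=\frac{t^r-1}{t-1}$ has $Q_\ell(t)$ as a factor (for $\ell>1$, $Q_\ell(t)\mid\frac{t^r-1}{t-1}$ whenever $\ell\mid r$ and $\ell\neq 1$), and since $M=M(h(t))$ is the companion matrix, $h(t)$ is the minimal polynomial of $\sigma_M$; the argument of Proposition \ref{proposition 3.6} then produces a nonzero vector fixed by $\sigma_M^\ell$, i.e. lying in a cycle of length $\ell<r$. Pulling back through $\tau_1$ (using $\tau_1\circ\tau_2=e$, so $\sigma=\tau_1\circ\sigma_M\circ\tau_1^{-1}$ and cycle lengths are preserved by Proposition \ref{proposite 2.2}) shows $\sigma$ has a cycle of length $\ell\neq r$, so $\sigma$ is not $r$-regular; it is still a CPP by (1).

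The main obstacle is the unique-solvability of the system in part (1): one must verify that the substitution genuinely decouples — that $x_1$ is pinned down by the $y_i$ alone (not circularly through $a_2(x_2)$), and that the resulting assignment is consistent with the first equation. This is the same mechanism as in Propositions \ref{proposition 4.3} and \ref{proposition 4.5}, but here it has to be carried out for general even $d=r-1$ with the alternating-sum pattern; the parity of $d$ (equivalently the oddness of $r$) is what makes the alternating telescoping close up correctly, so I would make that parity step explicit rather than leave it to the reader.
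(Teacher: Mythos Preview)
Your overall strategy is exactly the paper's: write out $(\sigma+e)(x)=y$ coordinatewise using the companion matrix of $h(t)=t^{d}+\cdots+t+1$, telescope the middle equations against the last one to isolate $x_1$, then back-substitute through the chain. Parts (2) and (3) are also argued as in the paper, via Proposition~\ref{proposition 3.7} and Proposition~\ref{proposition 3.9} respectively.

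However, your explicit system contains two computational slips that, taken literally, would block the argument. First, $\tau_1$ is applied \emph{after} $\sigma_M$ and only to the first output coordinate, while $\tau_2$ is applied \emph{before} $\sigma_M$ and only to the first input coordinate; tracing this through the companion matrix gives
\[
(\sigma+e)(x)=\bigl(x_1+a_1(x_2),\ x_2+x_3,\ \ldots,\ x_{d-1}+x_d,\ -a_2(x_1)-x_2-\cdots-x_{d-1}\bigr)^T,
\]
so it is $a_1$ (not $a_2$) in the first row and $a_2$ (not $a_1^{-1}$) in the last. Second, and more importantly, the last coordinate of $\sigma$ is $-a_2(x_1)-x_2-\cdots-x_d$, and adding the identity's $x_d$ cancels the final term, leaving only $-x_2-\cdots-x_{d-1}$. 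With your version (sum up to $x_d$) there are $d-1$ terms, an odd number, and the alternating telescoping you describe cannot eliminate them all; you would be left with a stray $x_2$ or $x_d$. With the correct version the sum $x_2+\cdots+x_{d-1}$ has $d-2$ terms, an even number, and telescopes cleanly against equations $2,4,\ldots,d-2$ to give
\[
-a_2(x_1)=y_d+\sum_{j=1}^{(d-2)/2} y_{2j},
\]
whence $x_1=a_2^{-1}\bigl(-y_d-\sum y_{2j}\bigr)$. After that, the first equation is \emph{used} to determine $x_2=a_1^{-1}(y_1-x_1)$ (it is not ``automatically consistent''), and the chain gives $x_3,\ldots,x_d$ uniquely. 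Once these corrections are made, your proof coincides with the paper's.
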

\begin{proof}
\begin{enumerate}
  \item It is trivial that $\sigma$ is a PP over $\mathbb{F}_{q^d}$. For $\sigma+e$, we have for any $y=(y_1,y_2,\ldots,y_d)^T\in\mathbb{F}_{q^d}$, the equation $(\sigma+e)(x)=y$ yields the following system of equations
  \begin{equation}
  \left\{
  \begin{aligned}
  x_1+a_1(x_2)=y_1\\
  x_2+x_3=y_2\\
  \vdots\\
  x_{d-1}+x_d=y_{d-1}\\
  -a_2(x_1)-x_2-x_3-\cdots-x_{d-1}=y_d
  \end{aligned}
  \right.
  \end{equation}
  Then
   $$-a_2(x_1)=(-a_2(x_1)-x_2-\cdots -x_{d-1})+\sum\limits^{\frac{d-1}{2}}_{j=1}(x_{2j}+x_{2j+1}) =y_d+\sum\limits^{\frac{d-1}{2}}_{j=1}{y_{2j}}$$
  and
  $$x_1=a_2^{-1}(-(y_d+\sum\limits^{\frac{d-1}{2}}_{j=1}{y_{2j}})).$$
  Now it is easy to find that the system has only one solution. Hence $\sigma+e$ is a PP over $\mathbb{F}_{q^d}$ and $\sigma$ is a CPP over $\mathbb{F}_{q^d}.$
  \item Since $a_1\circ a_2=e$, we have $\tau_1\circ\tau_2=e$. Since $r$ is an odd prime, by Proposition \ref{proposition 3.7}, $\sigma$ is an $r$-regular PP over $\mathbb{F}_{q^d}$. Combining (1), we have $\sigma$ is an $r$-regular CPP over $\mathbb{F}_{q^d}.$
  \item Since $r$ is a composite number, it is easy to see that $h(t)=\frac{t^r-1}{t-1}$ is reducible, $(h(t), \frac{t^r-1}{Q_r(t)})\neq1$ and $h(-1)\neq0$. Moreover, $M=M(h(t))$, by Proposition \ref{proposition 3.9}, we have that $\sigma$ is not $r$-regular.
\end{enumerate}
\qed\end{proof}

\begin{Remark}
\begin{enumerate}
  \item When $p=2$, Proposition \ref{proposition 4.10} becomes \cite[Theorem 1]{XZZ2022}. Our proof is different from theirs and is much easier than theirs.
  \item When $r$ is an even positive integer, the situation is complicated. In some cases, $\sigma$ is a CPP over $\mathbb{F}_{q^d}$; while in other cases, $\sigma$ is not a CPP over $\mathbb{F}_{q^d}$. For example, when $r=6$, $\sigma$ is a CPP over $\mathbb{F}_{q^d}$ by Proposition \ref{proposition 4.4}(2). But when $r=4$, $\sigma$ is a not CPP over $\mathbb{F}_{q^d}$, see Proposition \ref{proposition 4.2}(2) and a simple discussion.
  \item If $M$ is any $d\times d$ matrix over $\mathbb{F}_q$ such that $P_M(t)=h(t)$, we have not find a general method to construct regular CPPs over $\mathbb{F}_{q^d}$. We will consider this in further research.
\end{enumerate}
\end{Remark}

\section {\bf{Concluding remarks}}\label{section Concluding remarks}

This paper considered the $r$-regular  complete permutation property of maps with the form $f=\tau\circ\sigma_M\circ\tau^{-1}$ and give a general construction of regular PPs and regular CPPs over extension fields.

Theorem \ref{Theorem 3.3} give a general construction of $r$-regular PPs for any positive integer $r$, see Proposition \ref{proposition 3.7} and  Proposition \ref{proposition 3.8}.
When $\tau$ is  additive, Theorem \ref{Theorem 3.2} give a general construction of $r$-regular CPPs for any positive integer $r$, see Proposition \ref{proposition 3.4} and  Proposition \ref{proposition 3.5}.
When $\tau$ is not additive, Section \ref{section 4} give many examples of regular CPPs over the extension fields for $r=3,4,5,6,7$ and for arbitrary odd positive integer $r$.

By the examples in Section \ref{section 4}, we find that, for any given positive integer $r$, in order to get  $r$-regular CPPs  over $\mathbb{F}_{q^d}$ with the form $f=\tau\circ\sigma_M\circ\tau^{-1}$, it is easy to get polynomials $h(t)$ satisfy suitable conditions. The difficulty is that find   suitable  matrices $M$ and suitable PPs $\tau$ to make sure $f+e$ is a PP over $\mathbb{F}_{q^d}$.  We will consider this in further research.


 {}
\end{document}